\def\input@path{{styles/}{../styles/}}
   \def\UseBibLatex{1}%
   \providecommand{\ConfVer}[1]{}%
   \providecommand{\NotConfVer}[1]{#1}%
\providecommand{\ConfVer}[1]{#1}%
\providecommand{\NotConfVer}[1]{}%
   \def\UseBibLatex{1}%
\definecolor{LinkColor}{rgb}{0.5,0.0,0.0}
\newcommand{\HLinkY}[2]{\hyperref[#2]{#1}}
\newcommand{\HLink}[2]{\hyperref[#2]{#1 \ref{#2}}}
\newcommand{\HLinkSuffix}[3]{\hyperref[#2]{#1\ref*{#2}{#3}}}
\newcommand{\seclab}[1]{\label{sec:#1}}
\newcommand{\secref}[1]{\HLink{Section}{sec:#1}}
\newcommand{\apndlab}[1]{\label{apnd:#1}}
\newcommand{\apndref}[1]{\HLink{Appendix}{apnd:#1}}
\newcommand{\obslab}[1]{\label{observation:#1}}
\newcommand{\obsref}[1]{\HLink{Observation}{observation:#1}}
\newcommand{\remlab}[1]{\label{rem:#1}}
\newcommand{\remref}[1]{\HLink{Remark}{rem:#1}}%
\newcommand{\lemlab}[1]{\label{lemma:#1}}
\newcommand{\lemref}[1]{{\expandafter\HLink{Lemma}{lemma:#1}}}%
\newcommand{\Xlemref}[1]{{\noexpand\HLink{Lemma}{lemma:#1}}}%
\newcommand{\figlab}[1]{\label{fig:#1}}
\newcommand{\figref}[1]{\HLink{Figure}{fig:#1}}
\newcommand{\thmlab}[1]{{\label{theo:#1}}}
\newcommand{\thmref}[1]{\HLink{Theorem}{theo:#1}}
\newcommand{\thmrefY}[2]{\HLinkY{#2}{theo:#1}}
\newcommand{\corlab}[1]{\label{cor:#1}}
\newcommand{\defnlab}[1]{\label{defn:#1}}
\newcommand{\defnref}[1]{\HLink{Definition}{defn:#1}}%
\providecommand{\deflab}[1]{\label{def:#1}}
\newcommand{\defref}[1]{\HLink{Definition}{def:#1}}
\newcommand{\defrefY}[2]{\hyperref[def:#2]{#1}}
\providecommand{\eqlab}[1]{}%
\renewcommand{\eqlab}[1]{\label{equation:#1}}
\newcommand{\Eqref}[1]{\HLinkSuffix{Eq.~(}{equation:#1}{)}}
\providecommand{\ineqlab}[1]{}%
\renewcommand{\ineqlab}[1]{\label{equation:#1}}
\newcommand{\hyperrefB}[2]{%
   \begingroup%
   \hypersetup{linkcolor=black}%
   \hyperref[#1]{#2}%
   \endgroup%
   \hypersetup{linkcolor=LinkColor}%
}
\newcommand{\prophetOnly}{\ensuremath{\textsc{Proph}}}
\newcommand{\prophet}{\ensuremath{%
      \hyperrefB{def:prop:h:o:m}{\prophetOnly}}}
\newcommand{\oracleOnly}{\color{black}{\Oracle}}
\newcommand{\oracle}{\hyperrefB{def:prop:h:o:m}{\oracleOnly}\xspace}
\newcommand{\pbmOnly}{\ensuremath{\mathbb{P}_{\max}}}
\newcommand{\pbm}{{{\hyperrefB{def:PbM}{\ensuremath{\mathbb{P}_{\max}}}}}%
   \xspace}
\newcommand{\roeOnly}{{\textsf{RoE}}\xspace}
\newcommand{\roe}{\hyperrefB{def:RoE}{\textsf{RoE}}\xspace}
\renewcommand{\th}{th\xspace}
\newcommand{\cdf}{\textsf{cdf}\xspace}
\newcommand{\iidOnly}{\textsf{IID}\xspace}
\newcommand{\noniidOnly}{\textsf{non-IID}\xspace}
\newcommand{\iid}{\hyperrefB{def:toiidornot}{\color{black}{%
         \textsf{IID}}}\xspace}
\newcommand{\noniid}{\hyperrefB{def:toiidornot}{\color{black}{%
         \ensuremath{{\textsf{non-IID}}}}}\xspace}
\providecommand{\BibLatexMode}[1]{}
\providecommand{\BibTexMode}[1]{#1}
  \renewcommand{\BibLatexMode}[1]{}
  \renewcommand{\BibTexMode}[1]{#1}
  \renewcommand{\BibLatexMode}[1]{#1}
  \renewcommand{\BibTexMode}[1]{}
\theoremstyle{plain}%
\newtheorem{theorem}{Theorem}[section]
\newtheorem{lemma}[theorem]{Lemma}
\newtheorem{observation}[theorem]{Observation}
\theoremstyle{plain}%
\newtheorem*{remark:unnumbered}[FakeCounter]{Remark}%
\newtheorem{remark}[theorem]{Remark}%
\newtheorem{definition}[theorem]{Definition}
\newtheorem*{def:unnumbered}[FakeCounter]{Definition}
\newtheorem{defn}[theorem]{Definition}
\newcommand{\myqedsymbol}{\rule{2mm}{2mm}}
\definecolor{blue25emph}{rgb}{0, 0, 11}
\newcommand{\emphic}[2]{%
   \textcolor{blue25emph}{%
      \textbf{\emph{#1}}}%
   \index{#2}}
\newcommand{\emphi}[1]{\emphic{#1}{#1}}
\definecolor{almostblack}{rgb}{0, 0, 0.3}
\providecommand{\emphw}[1]{{\textcolor{almostblack}{\emph{#1}}}}%
\theoremstyle{nonumberplain}%
\newtheorem{proof}{Proof:}%
\providecommand{\etal}{\textit{et~al.}\xspace}
\renewcommand{\etal}{\textit{et~al.}\xspace}
\newcommand{\ZZ}{{Z}}
\newcommand{\ProbLTR}{\mathbb{P}}%
\renewcommand{\Pr}{\ProbLTR}
\newcommand{\Prob}[1]{\mathop{\ProbLTR} \mleft[ #1 \mright]}%
\newcommand{\ProbCond}[2]{\mathop{\ProbLTR}\!\left[%
       #1 \;\middle\vert\; #2 \right]}
\newcommand{\prnCond}[2]{\mleft( #1 \;\middle\vert\; #2 \mright)}
\newcommand{\ExChar}{\mathbb{E}}%
\newcommand{\Ex}[2][\!]{\mathop{\ExChar}#1\pbrcx{#2}}
\newcommand{\Var}[1]{\mathop{\mathbb{V}}\!\pbrcx{#1}}
\newcommand{\pbrcx}[1]{\left[ {#1} \right]}%
\newcommand{\f}[2]{\ensuremath{\nicefrac{#1}{#2}}\xspace}
\newcommand{\prn}[1]{\mleft(#1\mright)}%
\newcommand{\pth}[1]{\prn{#1}}%
\newcommand{\BigOmega}{\operatorname{\Omega}}
\newcommand{\BigO}{\mathcal{O}}
\newcommand{\SmallO}{o}
\newcommand{\Oracle}{\mathcal{O}}
\newcommand{\alg}{{ALG}\xspace}
\newcommand{\Set}[2]{\left\{ #1 \;\middle\vert\; #2 \right\}}
\newcommand{\set}[1]{\left\{ {#1} \right\}}
\newcommand{\cM}{\mathcal{M}}
\newcommand{\cA}{\mathcal{A}}
\newcommand{\cB}{\mathcal{B}}
\newcommand{\cD}{\mathcal{D}}
\newcommand{\cX}{\mathcal{X}}
\newcommand{\eps}{{\varepsilon}}%
\newcommand{\NN}{\mathbb{N}}
\newcommand*\diff{\mathop{}\!\mathrm{d}}
\newcommand{\dif}[1]{\mathop{d #1}}
\newcommand{\abs}[1]{\left| {#1} \right|}%
\providecommand{\IntRange}[1]{\mleft\llbracket #1 \mright\rrbracket}
\newcommand{\IRX}[1]{\IntRange{#1}}%
\newcommand{\IRY}[2]{\left\llbracket #1:#2 \right\rrbracket}
\newcommand{\Pois}{\operatorname{Pois}}
\newcommand{\QY}[2]{\mathsf{q}_{#1}\prn{#2}}
\newcommand{\dQY}[2]{\mathsf{q}_{#1}'\prn{#2}}
\newcommand{\Shard}{\mathbf{H}}%
\newcommand{\XX}{\mathbf{X}}%
\newcommand{\Seq}{\mathbf{S}}%
\newcommand{\cardin}[1]{\left| {#1} \right|}%
\newcommand{\atgen}{\symbol{'100}}%
\newcommand{\SarielThanks}[1]{%
   \thanks{%
      Department of Computer Science; %
      University of Illinois; %
      201 N. Goodwin Avenue; %
      Urbana, IL, 61801, USA; %
      \href{mailto:spam@illinois.edu}{sariel@illinois.edu}; %
      \url{http://sarielhp.org/}. %
   #1%
   }%
}
\newlist{compactenumA}{enumerate}{5}%
\setlist[compactenumA]{topsep=0pt,itemsep=-1ex,partopsep=1ex,parsep=1ex,%
   label=(\Alph*)}%
\newlist{compactenuma}{enumerate}{5}%
\setlist[compactenuma]{topsep=0pt,itemsep=-1ex,partopsep=1ex,parsep=1ex,%
   label=(\alph*)}%
\newlist{compactenumI}{enumerate}{5}%
\setlist[compactenumI]{topsep=0pt,itemsep=-1ex,partopsep=1ex,parsep=1ex,%
   label=(\Roman*)}%
\newlist{compactenumi}{enumerate}{5}%
\setlist[compactenumi]{topsep=0pt,itemsep=-1ex,partopsep=1ex,parsep=1ex,%
   label=(\roman*)}%
\newlist{compactitem}{itemize}{5}%
\setlist[compactitem]{topsep=0pt,itemsep=-1ex,partopsep=1ex,parsep=1ex,%
   label=\ensuremath{\bullet}}%
\newcommand{\FaroukThanks}[1]{%
   \thanks{Department of Computer Science; University of Illinois; 201
      N. Goodwin Avenue; %
      Urbana, IL, 61801, %
      USA; %
      {\tt eyharb2\atgen{}illinois.edu}; %
      {\tt \url{https://farouky.github.io/}.}%
      #1}%
}
\newcommand{\VasilisThanks}[1]{%
   \thanks{Department of Computer Science; University of Chile; Chile; %
      {\tt livanos3@illinois.edu}; %
      {\tt \url{https://livanos3.web.engr.illinois.edu/}.}%
      #1}%
}
\newcommand{\SaveContent}[2]{%
   \expandafter\newcommand{#1}{#2}%
}
\numberwithin{figure}{section}%
\numberwithin{table}{section}%
\numberwithin{equation}{section}%
\newcommand{\si}[1]{#1}
\providecommand{\TPDF}[2]{\texorpdfstring{#1}{#2}}
\newcommand{\I}{\mathcal{I}}
\newcommand{\J}{\mathcal{J}}
\newcommand{\Term}[1]{\textsf{#1}}
\newcommand{\TermI}[1]{\Term{#1}\index{#1@\Term{#1}}}
\newcommand{\YES}{\TermI{YES}\xspace}
\newcommand{\NO}{\TermI{NO}\xspace}
\newcommand{\LHoptial}{L'H\^opital\xspace}
\begin{document}

\title{Oracle-Augmented Prophet Inequalities}

\ConfVer{%
   \author{Anonymous Author(s)}{Anonymous Affiliation(s)}{}{}{} }
\NotConfVer{%

   \author{%
      Sariel Har-Peled%
      \SarielThanks{Work on this paper was partially supported by NSF
         AF award CCF-2317241.}%
      \and%
      Elfarouk Harb%
      \FaroukThanks{}%
      \and%
      Vasilis Livanos%
      \VasilisThanks{}%
   }%
}%

\ConfVer{%
   \authorrunning{Anonymous Author(s)}%
   \Copyright{CC-BY} }

\ConfVer{%
   \begin{CCSXML}
       <ccs2012> <concept>
       <concept_id>10003752.10010070.10010099.10010101</concept_id>
       <concept_desc>Theory of computation~Algorithmic mechanism
       design</concept_desc>
       <concept_significance>500</concept_significance> </concept>
       </ccs2012>
   \end{CCSXML}
} \ConfVer{%
   \ccsdesc[500]{Theory of computation~Algorithmic mechanism design}

   \keywords{prophet inequalities, predictions, top-$1$-of-$k$ model}

   \category{}

   \relatedversion{} }

\ConfVer{%
   \EventEditors{John Q. Open and Joan R. Access}%
   \EventNoEds{2}%
   \EventLongTitle{International Colloquium on Automata, Languages,
      and Programming 2024}%
   \EventShortTitle{ICALP 2024} \EventAcronym{ICALP}%
   \EventYear{2024}%
   \EventDate{December 24--27, 2024}%
   \EventLocation{Little Whinging, United Kingdom}%
   \EventLogo{}%
   \SeriesVolume{42}%
   \ArticleNo{23}%
}%

\maketitle

\begin{abstract}
    In the classical prophet inequality settings, a gambler is given a
    sequence of $n$ random variables $X_1, \dots, X_n$, taken from
    known distributions, observes their values in this (potentially
    adversarial) order, and select one of them, immediately after it
    is being observed, so that its value is as high as possible. The
    classical \emph{prophet inequality} shows a strategy that
    guarantees a value at least half of that an omniscience prophet
    that picks the maximum, and this ratio is optimal.

    Here, we generalize the prophet inequality, allowing the gambler
    some additional information about the future that is otherwise
    privy only to the prophet. Specifically, at any point in the
    process, the gambler is allowed to query an oracle
    $\mathcal{O}$. The oracle responds with a single bit answer: YES
    if the current realization is greater than the remaining
    realizations, and NO otherwise. We show that the oracle model with
    $m$ oracle calls is equivalent to the \textsc{Top-$1$-of-$(m+1)$}
    model when the objective is maximizing the probability of
    selecting the maximum. This equivalence fails to hold when the
    objective is maximizing the competitive ratio, but we still show
    that any algorithm for the oracle model implies an equivalent
    competitive ratio for the \textsc{Top-$1$-of-$(m+1)$} model.

    We resolve the oracle model for any $m$, giving tight lower and
    upper bound on the best possible competitive ratio compared to an
    almighty adversary. As a consequence, we provide new results as
    well as improvements on known results for the
    \textsc{Top-$1$-of-$m$} model.
\end{abstract}

\section{Introduction}
\seclab{introduction}

The field of optimal stopping theory concerns optimization settings
where one makes decisions in a sequential manner, given imperfect
information about the future, in a bid to maximize a reward or
minimize a cost. A canonical setting in this area is the \emph{prophet
   inequality} \cite{ks-sfv-77, ks-sapfv-78}. In these settings, a
gambler is presented with rewards $X_1, \dots, X_n$, one after the
other, drawn independently from known distributions. Upon seeing a
reward $X_i$, the gambler must immediately make an irrevocable
decision to either accept $X_i$, in which case the process ends, or to
reject $X_i$ and continue, losing the option to select $X_i$ in the
future. The goal of the gambler is to maximize the selected reward
comparing against a \textit{prophet} who knows all realizations in
advance and selects the maximum realized reward. Throughout, we
assume, without loss of generality, that $X_1, \dots, X_n$ are
continuous random variables.

The performance of the gambler can be measured in terms of several
objectives. A common metric used in the literature is the
\emph{competitive ratio}, which is also known as the {\emph{Ratio of
      Expectations (\roe)}} (see~\defref{RoE}). Another common
distinction is between the case in which the given distributions are
different and the case in which they are identical. For the former,
Krengel \etal \cite{ks-sfv-77, ks-sapfv-78} showed an optimal strategy
that is \f{1}{2}-competitive. In this setting, the optimal competitive
ratio can be achieved by simple, single-threshold algorithms
\cite{s-ctsrm-84, kw-mpiam-19}. For \iid and \noniid random variables,
Hill and Kertz \cite{hk-csrse-82} initially gave a
$(1 - \f{1}{e})$-competitive algorithm. This was improved to
$\approx 0.738$ \cite{aeehk.ea-b1op-17} and later $\approx 0.745$
\cite{cfhov-ppmot-21}, which is tight, due to a matching upper bound
\cite{hk-csrse-82, k-srsei-86}.

Another relevant metric, introduced by Gilbert and Mosteller
\cite{gm-rms-66} for \iid random variables, is that of maximizing the
\emph{Probability of selecting the Maximum realization} (\pbm) -
see~\defref{PbM}. For this objective and \iid random variables,
Gilbert and Mosteller \cite{gm-rms-66} gave an algorithm that achieves
a probability of $\approx 0.58$, which is the best possible. Later,
Esfandiari, Hajiaghayi, Lucier and Mitzenmacher \cite{ehlm-ps-17}
studied the same objective for general random variables, obtaining a
tight probability equal to $\f{1}{e}$ when the random variables arrive
in adversarial order and $0.517$ when the random variables arrive in
random order. The latter case was recently improved to the tight
$\approx 0.58$ by Nuti \cite{n-spd-22}, showing that the \iid setting
is not easier than the \noniid setting with random order. In this
paper, we introduce a new model as a means to study variations of both
the \iid and the general settings, for both the \roe and \pbm
objectives.

A setting that is related to ours is the \textsc{Top-$1$-of-$m$}
model, formally introduced by Assaf and Samuel-Cahn \cite{as-srpim-00}
for \iid random variables, although it had been studied initially by
Gilbert and Mosteller \cite{gm-rms-66}. In this setting, the algorithm
is allowed to select $m \geq 1$ values, but the value it gets judged
by is the maximum selected value. Gilbert and Mosteller
\cite{gm-rms-66} gave numerical approximations of the~\pbm objective
for $2 \leq m \leq 10$, using a simple, single-threshold
algorithm. Later, Assaf and Samuel-Cahn \cite{as-srpim-00} studied the
\roe objective for general distributions and gave an elegant and
simple $\prn{1 - \f{1}{m+1}}$-competitive algorithm. This was improved
\cite{ags-rpiwm-02} by bounding the competitive ratio of the optimal
algorithm by a recursive differential equation. They gave numerical
approximations for $2 \leq m \leq 5$, but studying the asymptotic
nature of the constants for large $m$ turned out to be difficult. Ezra
\etal \cite{efn-pso-18} later revisited the problem and gave a new
algorithm for large $m$ that is
$1 - \BigO\prn{e^{-\f{m}{6}}}$-competitive for the same problem. This
improves the error term from \cite{ags-rpiwm-02} from linear in $m$ to
exponential in $m$. Harb \cite{h-fbcps-23} recently improved this into
a $1 - e^{-m W_0\prn{\frac{\sqrt[m]{m!}}{m}}}$-competitive algorithm,
where $W_0$ is the Lambert-$W$ function\footnote{The Lambert-$W$
   function is $W_0(x)$ defined as the solution $y$ to the equation
   $y e^y = x$.}, and improved the lower bound for $m = 2$
separately. However, the asymptotic nature of this function is
difficult to analyze.

\paragraph{Type of Adversary.}

In the context of prophet inequalities, the distinction between an
offline adversary and an almighty adversary is crucial to
understanding the competitive ratio bounds of prophet inequalities. An
offline adversary, often considered less powerful, observes the
\emph{distributions} of the random variables, and chooses an
adversarial order based on the distributions. An almighty adversary is
stronger:

\begin{defn}
    \deflab{mighty}%
    An \emphi{almighty adversary} possesses complete information,
    including the algorithm's random decisions, and can thus tailor
    the ordering of the sequence to worst-case scenarios with perfect
    foresight. In particular, the almighty adversary observes the
    gambler's algorithm, and the values of $X_1, \dots, X_n$, then
    chooses a permutation order $X_{\sigma(1)}, \dots, X_{\sigma(n)}$
    to show the algorithm the values in that order.
\end{defn}

Consequently, while both types of adversaries present different levels
of challenge, the almighty adversary sets a far stricter benchmark,
typically leading to lower competitive ratios for prophet
inequalities. Unless stated otherwise, we work with the almighty
adversary. For a discussion on the (very subtle) differences between
the two adversaries for our model, see \apndref{app:final:discussion}.

\paragraph*{Model.}
We introduce a new model that generalizes the standard prophet
inequality setting, and analyze it as a means to obtain new results
and improvements in the \textsc{Top-$1$-of-$m$} model. Our model
allows the algorithm some information about the future that is
otherwise privy only to the prophet. Specifically, at any point in the
process, upon seeing a reward $X_i$, the algorithm is allowed to query
an oracle $\Oracle$. The oracle $\Oracle$ responds with a single bit
answer: \textsc{YES} if the current realization is larger than the
remaining realizations, i.e., $X_i > \max_{j = i + 1}^n X_j$ and
\textsc{NO} otherwise. In other words, the oracle $\Oracle$ informs
the algorithm it should select $X_i$, or reject it, because there is a
reward coming up that is at least as good\footnote{There are
   \emph{very subtle} differences between an oracle that answers $>$
   queries, vs $\geq$ queries. See \apndref{app:final:discussion} for a
   discussion on this. In particular, the $>$ oracle is weaker than
   the $\geq$ oracle; imagine a stream of $1$ values, the $>$ oracle
   will always answer \NO, while the $\geq$ oracle answers \YES on the
   first query}. Clearly, with no queries available, one recovers the
classical prophet inequality setting, whereas with $n - 1$ queries,
the strategy of using a query on every $X_i$, for
$i = 1, \dots, n - 1$, leads to the algorithm selecting the highest
realization always. Thus, this model interpolates nicely between the
two extremes of full or no information about the future.

In this paper, we consider the following different settings.

\begin{definition}
    \deflab{RoE}%
    The competitive ratio or \emph{Ratio of Expectations} is denoted
    by $\roe$. An algorithm \alg is \emph{$\alpha$-competitive}, for
    $\alpha \in [0,1]$, if
    $\Ex{\alg} \geq \alpha \cdot \Ex{\max_i X_i}$, and $\alpha$ is
    called the \emph{competitive ratio}.
\end{definition}

\begin{definition}
    \deflab{PbM}
    The \emph{Probability of selecting the Maximum} realization is
    denoted by $\pbm$. An algorithm \alg achieves a~$\pbm$ of $\alpha$
    if it returns a value $v$ such that $\Prob{v = \ZZ} \geq \alpha$,
    where $\ZZ = \max\set{X_1, \hdots, X_n}$. Note that in some works
    (for example \cite{gm-rms-66}), the notation $PbM$ has also been
    used.
\end{definition}

\begin{definition}
    \deflab{toiidornot}
    We use the term \iid to refer to the setting where
    $X_1, \dots, X_n$ are independent and identically distributed
    random variables. We use \noniid to refer to the more general
    setting where $X_1, \dots, X_n$ are independent, but not
    necessarily identical.
\end{definition}

\begin{definition}
    \deflab{prop:h:o:m}%
    We use $\prophet_m$ to refer to the \textsc{Top-$1$-of-$m$} model,
    in which the algorithm is allowed to choose up to $m$ values, and
    its payoff is the maximum of the chosen values. We use $\oracle_m$
    refers to our oracle model where the algorithm has access to $m$
    oracle calls, and can only select one value.
\end{definition}

Note that it makes sense to compare the model $\prophet_{m+1}$ to
$\oracle_m$ since in the former, the algorithm can choose $m+1$
values, where as the later can ask the oracle $m$ times, and then
choose an item. To help distinguish between the different settings, we
denote each model as $\cM(x, y, z)$, where
\begin{compactitem}
    \item $x$ is either $\prophet_m$ or $\oracle_m$ with $m \in \NN$,
    \item $y$ is either \iid or \noniid, and
    \item $z$ is either $\pbm$ or $\roe$.
\end{compactitem}

\paragraph*{Motivation.}

Our oracle choice was driven by our initial effort to reformulate the
Top-$1$-of-$m$ model in order to get a better understanding of that
settings and improve the known bounds. While we initially thought the
two models are the same, we later observed the subtle differences
between the two models. Thankfully, it turns out that one can still
use this oracle model to study the Top-$1$-of-$m$ model and improve
the previously known bounds, which was our original goal. We did
consider a few other oracle models, which we briefly mention here: if
the oracle predicts the maximum value, then there is a trivial
solution of asking the oracle at $X_1$, and just waiting until the
maximum value $v$ arrives. Another option is for the oracle to predict
a \emph{range} for the maximum value, but formalizing this in a more
general setting turns out to be difficult without assuming something
about the support of each random variable. While we explored various
other oracle formulations, we chose the simplest version that improved
the lower bound for the Top-$1$-of-$m$ model and for which we can also
achieve a tight competitive ratio. We leave exploring more complex
oracle models for future work.

\subsection{Our contributions}

In this paper, we study the oracle model for independent random
variables following identical or general distributions with the~\pbm
and~\roe objectives and make the following contributions:
\begin{compactenumI}%
    \smallskip%
    \item We establish an \emph{equivalence} between the oracle model
    and the \textsc{Top-$1$-of-$m$} model for the~\pbm objective.

    \smallskip%
    \item We show that this equivalence fails to hold for the~\roe
    objective and that the best-possible competitive ratios in the two
    settings are quite separated. However, we show that guarantees
    for~\roe in the oracle model translate to guarantees in the
    \textsc{Top-$1$-of-$m$} model, thus further motivating our study
    of the oracle model.

    \smallskip%
    \item We resolve the oracle model $\cM(\oracle_m, \noniid, \roe)$
    by presenting a single-threshold algorithm. Our algorithm achieves
    a competitive ratio of $1 - e^{- \xi_m}$ for general $m$, where
    $\xi_m$ is the unique \textit{positive} solution\footnote{In
       \secref{non-iid}, we prove that there is indeed a unique
       positive solution.} to the equation
    $1 - e^{-\xi_m} = \frac{\Gamma(m+1,
       \xi_m)}{m!}$\footnote{$\Gamma(n,x) = \int_x^\infty {t^{n-1}
          e^{-t} \dif t}$ denotes the upper incomplete gamma
       function}. Furthermore, we show that this lower bound is
    optimal by showing a construction that yields an equal upper
    bound. Since we showed that lower bound guarantees for
    $\cM(\oracle_m, \noniid, \roe)$ also hold for the alternative
    settings $\cM(\prophet_{m+1}, \noniid, \roe)$, this strictly
    improves the current state of the art bounds of \cite{h-fbcps-23},
    even though the guarantees are obtained in the weaker oracle
    model.

    \smallskip%
    \item We give a single-threshold algorithm for the oracle model
    and the~\pbm objective $\cM(\oracle_m, \iid, \pbm)$ that achieves
    a $1 - \BigO{(m^{-\f{m}{5}})}$ probability of selecting the
    maximum, as well as providing an upper bound that is
    asymptotically (almost) tight. To the best of our knowledge, this
    is the first result for the \pbm objective and general $m$ in the
    well studied \textsc{Top-$1$-of-$m$} model. Our algorithm achieves
    a probability of $\approx 0.797$ even with $m = 1$ calls to the
    oracle, a significant improvement on the $\approx 0.58$ achieved
    without oracle calls \cite{gm-rms-66}.
\end{compactenumI}
\smallskip%
As discussed earlier, the main motivation behind our oracle model
comes from our first two results which relate it to the
\textsc{Top-$1$-of-$m$} model.

\subsection{Our results in detail}

\subsubsection{Equivalence of models as far as \pbmOnly}

In \thmref{equivalence} we prove that $\cM(\oracle_m, y, \pbm)$ model
is equivalent to the $\cM(\prophet_{m+1}, y, \pbm)$ model, where
$y =\iid$ or \noniid.  In other words, the best algorithms in these
models achieve the same probability of realizing the maximum.

This result might not seem that surprising due to the apparent
similarity of the two models. However, thinking about the
\textsc{Top-$1$-of-$m$} setting from the viewpoint of oracle calls
allows for a different perspective that we exploit in our analysis.

\subsubsection{Difference of the models as far as \roeOnly}

Perhaps more surprisingly, our oracle model and the
\textsc{Top-$1$-of-$m$} model stop being equivalent when one considers
the~\roe objective. The oracle model is strictly weaker.

Specifically, in \thmref{second-reduction}, we prove that there exists
a prophet inequality instance, and an algorithm $\cA$ for
$\cM(\prophet_{m+1}, \noniid, \roe)$ instance for which no algorithm
for $\cM(\oracle_m, \noniid, \roe)$ can achieve the same competitive
ratio as that of $\cA$.  Furthermore, any algorithm for
$\cM(\oracle_m, y, \roe)$ can be modified to be an algorithm for
$\cM(\prophet_{m+1}, y, \roe)$ that achieves a competitive ratio that
is at least as good.

\subsubsection{Bounding the performance of the oracle model}

After establishing the relationship between our oracle model and the
\textsc{Top-$1$-of-$m$} model, we turn our attention to upper and
lower bounds for the oracle model. First, for the \noniid~setting and
the \roe objective, we present an extremely simple single-threshold
algorithm achieving a competitive ratio that approaches $1$
exponentially in $m$. Even though our algorithm is for the oracle
model, for which weaker guarantees are expected due to
\thmref{second-reduction}, it improves upon the best-known guarantee
for the \textsc{Top-$1$-of-$m$} setting, due to Harb
\cite{h-fbcps-23}. Our algorithm relies on two techniques; sharding
and Poissonization, introduced by \cite{h-fbcps-23} for the analysis
of threshold-based algorithms for prophet inequalities. As an added
benefit, the algorithm's analysis is easy to understand.

Specifically, in \thmref{noniid-cr-asymptotic}, we show that there is
a constant $\xi_m$, such that for the oracle model
$\cM(\oracle_m, \noniid, \roe)$, there exists an algorithm with
competitive ratio at least $1 - e^{-\xi_m}$. As $m \to +\infty$, this
behaves as $1 - e^{- \f{m}{e} + \SmallO\prn{m}}$. The competitive
ratio plot for $m=1,\dots 15$ is shown in \figref{compratio}.

\paragraph{Matching upper bound.}
In addition, we provide a construction for every $m$ that gives a
matching upper bound to the competitive ratio, thus resolving the
problem for the case of general distributions and the~\roe
objective. The construction we have is perhaps of independent interest
in the design of counterexamples for other settings, as it combines
and generalizes standard counterexamples of prophet inequalities.

Specifically, in \thmref{tightness-noniid}, we show that for every
$\delta > 0$, there exists an instance of
$\cM(\oracle_m, \noniid, z)$, where $z = \roe$ or $\pbm$, in which no
single-threshold algorithm can achieve a
$\prn{1 - e^{-\xi_m} + \delta}$-competitive ratio or select the
maximum realization with probability
$\geq \prn{1 - e^{-\xi_m} + \delta}$.

Intuitively, the above follows since an algorithm for the oracle model
performs poorly when, every time it uses an oracle call and gets a
\textsc{YES} answer, the next value it sees that is at least the
queried value is roughly equal, and thus the oracle call was used
without any real gain. The idea behind the worst-case for this setting
is to have what is essentially a Poisson random variable with rate
$\xi_m$, providing the algorithm with several non-zero values, each
roughly the same. By carefully selecting $\xi_m$ in order to equate
the probability of having no non-zero values and the probability of
having more than $m$ non-zero values, we are forcing the algorithm to
use a query for every non-zero realization, thus rendering the oracle
calls as useless as possible.

\subsubsection{The benefit of several oracle calls}
Next, we turn our attention to the~\iid~setting with $m$ oracles calls
and the~\pbm objective. We present a simple, single-threshold
algorithm that selects the maximum realization with probability that
approaches $1$ in a super-exponential fashion. As a warm-up, we first
present the analysis for $m = 1$ before generalizing it to all $m$.

Specifically, in \thmref{prob-asymptotic}, we show that for
$\cM(\oracle_m, \iid, \pbm)$, one can select the maximum realization
with probability at least $1 - \BigO\prn{m^{-\f{m}{5}}}$.

We also present, in \thmref{prob-asymptotic-upper-bound}, an upper
bound on the probability of success that is asymptotically tight, up
to small multiplicative constants in the exponent. Because of
\thmref{equivalence}, both upper and lower bounds on the probability
of success carry over in the \textsc{Top-$1$-of-$m$} settings as well.

See \figref{state:of:art} for a summary of our results for the oracle
model in the different settings.

\begin{figure}
    \centering \includegraphics[width=0.6\textwidth]{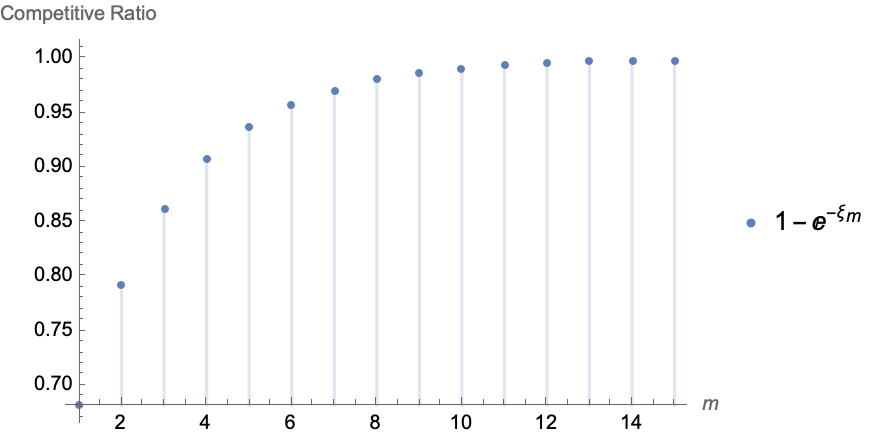}
    \caption{The value of $1-e^{-\xi_m}$ for $m=1,\dots, 15$.}
    \figlab{compratio}
\end{figure}

\begin{figure}
    \begin{tabular}{|c|c|c|c|c|}
      \hline
      \text{Model}
      & \multicolumn{2}{c|}{\text{Lower Bound}}
      & \multicolumn{2}{c|}{\text{Upper Bound}} \\
      \hline
      & \text{\si{Prev}. Best}
      & \text{Current Best}
      & \text{\si{Prev}.}
      & \text{Current Best}
      \\
      \hline
      \hline
      \begin{minipage}{3cm}
          \smallskip%
          \roe\\ General Settings \smallskip%
      \end{minipage}
      & $1 - \BigO\prn{e^{-\f{m}{6}}}$ \: \:
        \text{\cite{efn-pso-18}}
      & $1 - e^{-\f{m}{e} + \SmallO\prn{m}}$
      & -
      & $1 - e^{-\f{m}{e} + \SmallO\prn{m}}$ \text{single-threshold}
      \\
      \hline
      \pbm, \iid Setting
      & $\approx 0.58$ \: \: \cite{gm-rms-66}
      & $
        \begin{array}{@{}c@{}}\approx 0.797 \: \: (m = 1)
          \\
          1 - \BigO\prn{m^{-\f{m}{5}}}\end{array}$
      & -
      & $1 - \BigO\prn{m^{-m}}$
      \\
      \hline
    \end{tabular}
    \caption{State of the art.}
    \figlab{state:of:art}
\end{figure}

\subsection{Additional related work}%
\seclab{related-work} We have already mentioned the related work on
algorithms with predictions, as well as the works of Gilbert and
Mosteller \cite{gm-rms-66}, Esfandiari, Hajiaghayi, Lucier and
Mitzenmacher \cite{ehlm-ps-17} and Nuti \cite{n-spd-22} for the~\pbm
objective. Related work includes the study of order-aware algorithms
by Ezra, Feldman \etal \cite{efgt-winls-23}, algorithms with fairness
guarantees by Correa \etal \cite{ccdn-fbos-21} and algorithms with
a-priori information of some of the values by Correa \etal
\cite{bcgl-gsp-22}. In addition to these, Esfandiari \etal
\cite{ehlm-ps-17} study a related but distinct variant to ours. They
relax the objective to allow the return of one out of the top $k$
realizations, and show exponential upper and lower bounds. Their
model, however, is incomparable to ours.

\paragraph*{Organization.} In \secref{reductions} we relate our model
to \textsc{Top-$1$-of-$m$} model of Assaf and Samuel-Cahn
\cite{as-srpim-00} and prove the reductions. In \secref{non-iid} we
present our tight algorithm for the \noniid setting. \secref{iid}
contains our algorithms and upper bounds for the \iid setting. Due to
space constraints, we present some background on concentration
inequalities that we use for our results
in~\apndref{app:concentration}. Finally, we present \ConfVer{~and
   several missing proofs in~\apndref{app:proofs}}.

\section{Reductions}%
\seclab{reductions}

To motivate our oracle model, we start by establishing an equivalence
between $\cM(\oracle_m, y, \pbm)$ and $\cM(\prophet_{m+1}, y, \pbm)$,
for both the $y =\iid$ and $y =$ \noniid case (see
\thmref{equivalence} below). We also show that, perhaps surprisingly,
this equivalence does not hold for the \roe objective; lower bound
guarantees for $\cM(\oracle_m, y, \roe)$ translate to guarantees for
$\cM(\prophet_{m+1}, y, \roe)$ (\thmref{second-reduction}), but not
the converse. Later, we use this result to improve the best-known
lower bound guarantees on $\cM(\prophet_{m+1}, y, \roe)$.

\subsection{The {\pbmOnly{}} objective}

\begin{lemma}
    \lemlab{prophet-to-oracle}%
    Fix an instance of the prophet problem.
    Let $\cA$ be an algorithm for this instance in
    $\cM(\oracle_m, y, \pbm)$, where $y =\iid$ or \noniid.
    Then, there exists an algorithm $\cB$ for tor this instance in
    $\cM(\prophet_{m+1}, y, \pbm)$, with black-box access to $\cA$,
    such that $\pbm(\cB) \geq \pbm(\cA)$.
\end{lemma}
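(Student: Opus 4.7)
The plan is a black-box simulation. My $\cB$ runs $\cA$ step by step on the same input stream, with two modifications: $\cB$ answers every oracle query issued by $\cA$ with the answer \NO, and $\cB$ selects (as one of its $m+1$ allowed picks) both every value on which $\cA$ issues a query and the final value that $\cA$ commits to. Since $\cA$ makes at most $m$ oracle queries and at most one final commit, $\cB$ uses at most $m+1$ selections, fitting the budget of the $\prophet_{m+1}$ model. The simulation is online: at step $i$, $\cB$ only needs $X_i$ and the current simulated state of $\cA$ to decide whether to add $X_i$ to its selected set, so $\cB$ is a legitimate algorithm in $\cM(\prophet_{m+1}, y, \pbm)$ for either $y = \iid$ or $y = \noniid$.

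The heart of the argument is a coupling showing that whenever $\cA$ selects the maximum under the true oracle, $\cB$'s selected set contains it. Fix a realization $(x_1, \dots, x_n)$ and any fixing of $\cA$'s internal coins, and suppose the true-oracle execution of $\cA$ commits to $Z = \max_i x_i$ at some position $i^*$. I would first show that the true-oracle trace of $\cA$ and the all-\NO{} trace coincide on positions $1, \dots, i^* - 1$: any query $\cA$ makes at a position $j < i^*$ must receive \NO in the true execution, since $\max_{k > j} x_k \geq x_{i^*} = Z \geq x_j$, and by the continuity of the $X_i$ this inequality is strict almost surely. Thus $\cA$ arrives at position $i^*$ in identical internal states under both oracle regimes.

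From here the conclusion is immediate. At position $i^*$, $\cA$ either (a) queries the oracle on $x_{i^*}$, in which case $\cB$ selects $x_{i^*} = Z$ as a query-pick; or (b) commits to $x_{i^*}$ without querying, in which case $\cB$ selects $x_{i^*} = Z$ as the commit-pick. The third logical option, rejecting $x_{i^*}$ outright, is ruled out by the assumption that $\cA$ selects $Z$ at $i^*$ in the true execution. Hence $Z$ belongs to $\cB$'s selected set, so the maximum of $\cB$'s chosen values equals $Z$. Taking probabilities over the joint randomness gives $\Pr[\text{max of $\cB$'s picks} = Z] \geq \Pr[\cA \text{ selects } Z]$, i.e.\ $\pbm(\cB) \geq \pbm(\cA)$. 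The main subtle point is the very first step of the coupling---ensuring that no query $\cA$ makes before $i^*$ can return \YES---which is precisely where the continuous-distribution assumption (stated at the top of the introduction) rules out measure-zero ties; without it one would need the $\geq$ variant of the oracle discussed in the footnote.
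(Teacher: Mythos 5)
Your proof is correct and uses the same construction as the paper: $\cB$ runs $\cA$ on the stream, placing into its selection set every realization that $\cA$ queries plus the one $\cA$ finally commits to, for a total of at most $m+1$. Where you add genuine value is in making the simulation precise. The paper's proof says $\cB$ "feeds $X_i$ to $\cA$'' and adds it to $S$ "regardless of the outcome of the query,'' but never specifies which answer the simulated oracle returns---and $\cA$'s subsequent behavior (hence the set of positions it later queries) depends on that answer, since $\cB$ has no access to the true oracle. Your choice to answer every query with \NO, together with the coupling argument that the true-oracle trace and the all-\NO{} trace coincide up to the step $i^*$ at which $\cA$ commits to the maximum (relying on the continuous-distribution assumption to rule out ties), is exactly what is needed to justify the paper's informal "by induction, $S$ contains all the realizations that were queried by $\cA$.'' So: same route, but you have closed a gap the paper leaves implicit.
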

\begin{proof}
    The idea is for $\cB$ to simulate
    $\cA$'s behavior by selecting each realization that
    $\cA$ decides to query. Initially,
    $\cB$ starts with an empty set
    $S$ of selected values. Whenever
    $\cB$ is presented with a realization $X_i$, it feeds it to
    $\cA$. If $\cA$ decides to select
    $X_i$ or expend a query for
    $X_i$, regardless of the outcome of the query,
    $\cB$ always selects $X_i$ into $S$, otherwise
    $\cB$ decides not to select $X_i$. By induction,
    $S$ contains exactly all the realizations that were queried by
    $\cA$ as well as at most one more realization that might have been
    selected by $\cA$ if it run out of queries. Therefore, $|S| \leq
    m+1$.

    Observe that
    $\cA$ succeeds if and only if it selects the maximum, and it only
    selects a realization $X_i$ if
    $(i)$ it chose to expend a query on $X_i$, or
    $(ii)$ when it observed
    $X_i$ it run out of queries. In both cases, by the description of
    $\cB$, we know that $X_i \in
    S$, and thus the probability that
    $\cB$ succeeds is at least $\pbm(\cA)$.
\end{proof}

\begin{lemma}
    \lemlab{oracle-to-prophet}%
    Fix an input instance of the prophet problem.  Fix an algorithm
    $\cB$ for $\cM(\prophet_{m+1},$ $ y, \pbm)$, where $y =$ \iid or
    \noniid. Then, there exists an algorithm $\cA$ for
    $\cM(\oracle_m, y, \pbm)$, with black-box access to $\cB$, such
    that such that $\pbm(\cA) \geq \pbm(\cB)$.

\end{lemma}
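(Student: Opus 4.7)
The plan is to construct $\cA$ by having it simulate $\cB$, using each oracle query as a ``safety check'' on a prospective selection by $\cB$. Concretely, $\cA$ maintains a copy of $\cB$'s state and feeds it each realization $X_i$ as it arrives. Whenever $\cB$ would select $X_i$, $\cA$ does one of the following: if it still has oracle calls remaining, it queries $\Oracle$ on $X_i$; if the oracle answers \YES, then $\cA$ selects $X_i$ and halts, and if it answers \NO, then $\cA$ reports the selection back to $\cB$ and continues the simulation. If $\cA$ has already used all $m$ oracle calls and $\cB$ wants to select yet another value, $\cA$ simply selects that value and halts. Since $\cB$ selects at most $m+1$ values, this uses at most $m$ queries, and $\cA$ selects exactly one value, as required by the oracle model.

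Next I would argue the success bound. Let $T$ denote the (random) set of values that $\cB$ selects, and condition on the event that $\cB$ succeeds, i.e., $Z = \max_i X_i \in T$. Suppose $Z$ is the $j$-th value that $\cB$ picks, with $1 \le j \le m+1$. For each earlier pick $X_{i_\ell}$ with $\ell < j$, the index satisfies $i_\ell < i^*$ (where $X_{i^*} = Z$), so $\max_{t > i_\ell} X_t \ge Z > X_{i_\ell}$, and hence the oracle query on $X_{i_\ell}$, if it occurred, must have returned \NO. Therefore, $\cA$ does not terminate prematurely before seeing $X_{i^*} = Z$, and upon that step it has used only $j - 1$ oracle calls so far. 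If $j \le m$, then at least one oracle call remains; $\cA$ queries $\Oracle$ at $X_{i^*}$, which returns \YES since $Z$ is the global maximum, and $\cA$ selects $Z$. If $j = m+1$, then $\cA$ has exhausted its queries and, by construction, selects the very next value that $\cB$ requests, which is $X_{i^*} = Z$. In either case, $\cA$ selects the maximum whenever $\cB$ does, which gives $\pbm(\cA) \ge \pbm(\cB)$.

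I expect no serious obstacle here; the plan is essentially syntactic, mirroring the construction in \lemref{prophet-to-oracle}. The one subtle point, which I would take care to address explicitly, is the boundary case where $\cB$ makes exactly $m+1$ selections with $Z$ arriving last: there, $\cA$ has no oracle calls left at the final step and cannot verify optimality, so the argument above relies on the observation that all prior oracle calls necessarily returned \NO because $Z$ still lay ahead. Apart from this edge case, the proof is a direct coupling argument using the fact that the oracle gives \emph{free} information exactly at the positions $\cB$ considers interesting, so $\cA$ can always ``short-circuit'' to $Z$ the moment it is encountered.
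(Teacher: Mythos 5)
Your construction of $\cA$ matches the paper's: simulate $\cB$, spend an oracle call at each of $\cB$'s prospective selections, accept on \YES, continue on \NO, and accept unconditionally once queries are exhausted. Your analysis, however, is genuinely different and arguably tighter. The paper argues by induction on $m$: at the first query site it splits on the oracle's answer, handles the \YES branch via a ``without loss of optimality'' claim that $\cB$ only selects values exceeding everything previously observed (otherwise it wastes a slot), and handles the \NO branch by an informal inductive reduction to the suffix $X_{i+1}, \dots, X_n$ with one fewer oracle call and one fewer slot. Your proof sidesteps both the optimality assumption and the induction by a direct pathwise coupling: conditioning on the success event $\{Z \in T\}$, you note that every query issued strictly before $Z$ arrives necessarily returns \NO, since $Z$ still lies ahead in the suffix and ties are null events under the paper's standing continuity assumption; hence $\cA$ cannot halt prematurely, and at $Z$'s arrival either a query remains (which returns \YES) or all $m$ have been spent and $\cA$ accepts the next requested value, which is $Z$. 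This argument applies to an arbitrary $\cB$ rather than only optimal ones, handles the $j = m+1$ boundary cleanly, and avoids the slightly informal inductive reduction, so it is a small but genuine improvement in rigor over the paper's version. Do make the continuity hypothesis explicit where you assert $Z > X_{i_\ell}$: without it $Z$ could tie an earlier pick, and $\cB$'s success would not immediately force $\cA$'s.
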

\begin{proof:e}{\lemref{oracle-to-prophet}}{oracle-to-prophet:proof}
    The idea is that $\cA$ can simulate $\cB$'s behavior using the
    oracle queries instead of storing the values like $\cB$
    does. Initially, $\cB$ starts with an empty set $S$ of selected
    values. Whenever $\cA$ is presented with a realization $X_i$, it
    feeds it to $\cB$. If $\cB$ selects $X_i$ into $S$, $\cA$ performs
    an oracle $\oracle$ query whether $X_i > \max_{j = i+1}^n
    X_j$. Consider the first $i$ where this happens. We distinguish
    between the two possible answers: \smallskip%
    \begin{compactenumI}[leftmargin=0.8cm]
        \item If $\oracle$ answers \textsc{YES}, then we know that all
        future realizations are $\leq X_i$. However, we also know that
        since the objective is $\pbm $, any optimal algorithm for
        $\prophet_{m+1}$ will only select a value $X_i$ if it is
        larger than any previously observed value (otherwise it
        ``wastes'' a spot in $S$ for a value that is definitely not
        the maximum). Therefore, if $\cB$ selects $X_i$, we know that
        $X_i > \max_{j < i} X_j$. In this case, both $\cB$ and $\cA$
        succeed in selecting the maximum realization.

        \smallskip%
        \item If $\oracle$ answers \textsc{NO}, then we know that
        there exists a future realization that is $\geq X_i$. In this
        case, the instance for $\cB$ reduces to
        $\cM(\prophet_{m}, y, \pbm)$ on $X_{i+1}, \dots, X_n$, whereas
        the instance for $\cA$ reduces to
        $\cM(\oracle_{m-1}, y, \pbm)$. Since we know that
        $\cM(\prophet_1, y, \pbm) = \cM(\oracle_0, y, \pbm)$ by
        definition, we have that by induction, the probability that
        $\cA$ succeeds is at least $\pbm(\cB)$.
    \end{compactenumI}
\end{proof:e}

Combining the above two lemmas, we get the following result.

\begin{theorem}
    \thmlab{equivalence}
    The $\cM(\oracle_m, y, \pbm)$ model is equivalent to the
    $\cM(\prophet_{m+1}, y, \pbm)$ model, where $y =\iid$ or \noniid.
    In other words, for every prophet inequality instance, the
    probability achieved by the best-possible algorithm in the
    $\cM(\oracle_m, y, \pbm)$ model is the same as the one achieved by
    the best-possible algorithm in the $\cM(\prophet_{m+1}, y, \pbm)$
    model.
\end{theorem}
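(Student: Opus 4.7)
The plan is to derive the theorem as an immediate corollary of the two preceding lemmas, which already establish the two inequalities needed for the equivalence. Concretely, I would fix an arbitrary prophet instance and let $\alpha^\oracle$ denote the supremum of $\pbm(\cA)$ over all algorithms $\cA$ for $\cM(\oracle_m, y, \pbm)$, and $\alpha^\prophet$ denote the supremum of $\pbm(\cB)$ over all algorithms $\cB$ for $\cM(\prophet_{m+1}, y, \pbm)$. The goal is to show $\alpha^\oracle = \alpha^\prophet$.

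First, I would invoke \lemref{prophet-to-oracle}: for every algorithm $\cA$ in the oracle model, there is an algorithm $\cB$ in the top-$1$-of-$(m+1)$ model with $\pbm(\cB) \geq \pbm(\cA)$. Taking suprema over $\cA$ yields $\alpha^\prophet \geq \alpha^\oracle$. Symmetrically, applying \lemref{oracle-to-prophet} to every algorithm $\cB$ in the top-$1$-of-$(m+1)$ model produces an algorithm $\cA$ in the oracle model with $\pbm(\cA) \geq \pbm(\cB)$, so $\alpha^\oracle \geq \alpha^\prophet$. Combining the two inequalities gives $\alpha^\oracle = \alpha^\prophet$, which is precisely the statement of the theorem.

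No real obstacle remains, because all the work is already packaged inside the two black-box reductions. The only subtlety worth mentioning in the write-up is that the reductions are pointwise on instances (they transform one specific algorithm into another on the same instance), and they hold uniformly for both $y = \iid$ and $y = \noniid$, so the equivalence is instance-wise and not merely a worst-case statement. I would therefore phrase the conclusion as: for every fixed prophet instance, the best achievable $\pbm$ in $\cM(\oracle_m, y, \pbm)$ equals the best achievable $\pbm$ in $\cM(\prophet_{m+1}, y, \pbm)$, which in particular implies the equivalence of the two models under either the worst-case or the distributional viewpoint.
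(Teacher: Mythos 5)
Your proposal is correct and matches the paper exactly: the paper also obtains \thmref{equivalence} as an immediate consequence of \lemref{prophet-to-oracle} and \lemref{oracle-to-prophet}, combining the two directions per instance. The only addition you make is spelling out the supremum argument explicitly, which the paper leaves implicit.
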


\subsection{For the \roeOnly objective
   \TPDF{$\oracleOnly_m \leq \prophetOnly_{m+1} $}{oracle diff
      prophet}}

We demonstrate that the $\prophet_{m}$ model strictly surpasses the
$\Oracle_{m}$ for \noniid random variables.

\begin{defn}
    For two integers $i \leq j$, let
    $\IRY{i}{j} = \{ i, i+1, \ldots, j\}$.
\end{defn}
\begin{defn}
    For an instance $\I$ of $\cM(x, \noniid, \roe)$, we denote
    $\roe(x, \I)$ as the competitive ratio of an optimal algorithm for
    $\I$. For example, $\roe(\oracle_m, \I)$ denotes the best
    competitive ratio on instance $\I$ in the oracle model.
\end{defn}

\begin{lemma}
    \corlab{cor:equivalence-counterexample}%
    For $m = 1$, there exists an input instance $\I$, made out of $3$
    \noniid random variables, such that
    \begin{math}
        \roe(\Oracle_1, \I)%
        \leq%
        \frac{3}{4} \roe( \prophet_2, \I).
    \end{math}
\end{lemma}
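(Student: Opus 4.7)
The plan is to exhibit a concrete $3$-variable non-IID instance $\I$ witnessing the separation. Specifically, I would set $X_1 \equiv 1$ (deterministic), $X_2 \in \{0,1\}$ each with probability $1/2$, and $X_3 = 1/\epsilon$ with probability $\epsilon$ (else $0$), for a small parameter $\epsilon > 0$ (smoothed by an infinitesimal continuous perturbation to meet the standing continuity assumption). A direct computation will give $\Ex{\max(X_1,X_2,X_3)} = 2-\epsilon$.

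First I would observe that $\prophet_2$ already matches the prophet by selecting $X_1$ and $X_3$ while ignoring $X_2$, since $\Ex{\max(1,X_3)} = \epsilon\cdot(1/\epsilon) + (1-\epsilon) = 2-\epsilon$. Hence $\roe(\prophet_2, \I) = 1$.

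The substantive part of the argument will be to upper bound the best $\Oracle_1$ value by $3/2 - \epsilon/2$ via a case analysis over the (small) set of strategically distinct choices for the single query: (a) take $X_1$ outright, value $1$; (b) skip $X_1$ and query at $X_2$, which by the simulation from \lemref{oracle-to-prophet} recovers a $\prophet_2$-style algorithm on the pair $(X_2, X_3)$ and hence attains $\Ex{\max(X_2,X_3)} = 3/2 - \epsilon/2$; (c) query at $X_1$, where the oracle answers YES iff $X_2 = X_3 = 0$ (probability $(1-\epsilon)/2$), YES leads to taking $X_1$, and NO leaves an adaptive single-pick problem on $(X_2, X_3)$ whose optimal conditional expected value equals $2/(1+\epsilon)$, yielding total $(1-\epsilon)/2 + 1 = 3/2 - \epsilon/2$. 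All remaining strategies (in particular any no-query strategy) are strictly dominated by one of these.

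Putting the two estimates together gives $\roe(\Oracle_1, \I)/\roe(\prophet_2, \I) = (3-\epsilon)/(4-2\epsilon) \to 3/4$ as $\epsilon \to 0^+$, establishing the claimed $3/4$ bound. The main obstacle will be the analysis in case (c): I need to carefully compute the conditional distribution of $(X_2, X_3)$ given the NO response and argue that no adaptive single-pick rule on this conditional distribution can simultaneously recover the deterministic safety of $X_1$ and the heavy-tailed payoff of $X_3$. The $3/4$ factor is precisely the quantitative expression of this tradeoff, and the carefully chosen balance $\Pr[X_2 = 1] = 1/2$ is what makes both candidate strategies (b) and (c) tight at the same value.
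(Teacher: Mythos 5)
Your approach is essentially the paper's: the paper also takes $X_1\equiv 1$, a coin-flip-like $X_2$ concentrated near $1$, and a heavy-tailed $X_3 = 1/\eps$ w.p.\ $\eps$, shows $\prophet_2$ achieves $\Ex{\max(X_1,X_3)}=2-\eps$, bounds the best $\Oracle_1$ strategy by roughly $3/2$, and sends $\eps\to 0$. Your case analysis is in fact a bit more careful than the paper's (you enumerate all strategically distinct query placements rather than asserting which one is optimal), and the arithmetic in (b) and (c) checks out.

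The one place you need to be careful is the tie $X_1=X_2=1$, which the paper deliberately avoids by setting $X_2=1+\eps$ (with probability $1/2-\eps$ rather than $1/2$, so that $\Ex{\max}\approx 2$ still holds). Your case-(c) claim ``the oracle answers YES iff $X_2=X_3=0$'' relies on $1\not> 1$, i.e., on the \emph{strict} comparison oracle, and on $X_2\ge X_1$ whenever $X_2>0$. If you ``smooth by an infinitesimal perturbation'' in the wrong direction (e.g.\ $X_2$ supported slightly \emph{below} $1$), then the oracle at $X_1$ answers YES whenever $X_3=0$, the $\Oracle_1$ algorithm attains $2-\eps$ matching $\prophet_2$, and the separation collapses. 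So the perturbation must push $X_2$ strictly above $1$; once you pin that down you essentially recover the paper's instance, which is why I'd call this the same route rather than a new one. Also, a cosmetic point that applies to the paper's writeup too: for fixed $\eps>0$ your exact ratio $\frac{3-\eps}{4-2\eps}$ is slightly \emph{above} $3/4$, so the stated clean inequality is only achieved in the limit $\eps\to 0^+$; you should phrase the conclusion as ``the ratio can be driven arbitrarily close to $3/4$'' or state the lemma for ``$3/4+\delta$ for any $\delta>0$,'' which is what both your argument and the paper's actually prove.
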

\begin{proof}
    Consider the input instance $\I$ of three independent random
    variables $X_1, X_2,X_3$ with default value $0$, such that
    \begin{equation*}
        X_1 = 1,%
        \qquad%
        \Prob{X_2 = 1+\eps} =   \frac{1}{2} - \eps, %
        \qquad
        \text{and}
        \qquad
        \Prob{X_3 = \frac{1}{\eps}} =  \eps.
    \end{equation*}
    We have that
    \[
        \Ex{\Bigl.\max\set{X_1, X_2, X_3}}%
        =%
        \frac{1}{\eps} \eps + \prn{1 + \eps} \prn{1 -
           \eps}\prn{\frac{1}{2} - \eps} + 1 \prn{1 -
           \eps}\prn{\frac{1}{2} + \eps}%
        =%
        2-O(\eps).
    \]
    For small $\eps$, an algorithm $\cB$ that is optimal for the
    \defrefY{{$\prophetOnly_2$}}{prop:h:o:m} model in this instance is to
    select $X_1$, ignore $X_2$ and then select $X_3$ if it is
    non-zero. This yields
    \[
        \Ex{\cB}%
        =%
        1 \cdot \prn{1 - \eps} + \frac{1}{\eps} \cdot \eps%
        =%
        2 - \eps.
    \]
    However, the optimal $\cA$ for the oracle model queries $\oracle$
    at $X_1$. With probability $\prn{1 - \eps}\prn{{1}/{2}+\eps}$, it
    stops and select $X_1$, getting a value of $1$. Otherwise, it
    continues, with no oracle calls left. It ignores $X_2$ and select
    $X_3$. Thus,
    \[
        \Ex{\cA} =%
        1 \cdot \prn{\frac{1}{2} + \eps} \prn{1 - \eps} +
        \frac{1}{\eps} \cdot \eps%
        =%
        \frac{3}{2} + \frac{\eps}{2} - \eps^2.
    \]
    The competitive ratios of $\cA$ is
    \begin{math}
        \displaystyle%
        \roe(\oracle_1, \I)%
        =%
        \frac{\frac{3}{2} + \frac{\eps}{2} - \eps^2}%
        {2 - O(\eps)}%
        =%
        \frac{3}{4} + O(\eps)%
        \rightarrow%
        \frac{3}{4}
    \end{math}
    (the limit is for $\eps \rightarrow 0$).  The competitive ratio of
    $\cB$ is
    \[
        \roe(\prophet_2, \I)%
        =%
        \frac{2-\eps}{2+O(\eps)}%
        =%
        1 - O(\eps) \rightarrow 1.
    \]
\end{proof}
The above example, appropriately generalized for $m > 1$ by having
random variables
\begin{align*}
  & X_1 = 1 \quad \text{w.p. } 1, \quad
    X_i = \begin{cases}
      1 + (i-1) \eps & \text{w.p. } \frac{1}{2} - \eps \\
      0 & \text{w.p. } \frac{1}{2} + \eps
    \end{cases}, \quad \text{for } i = 2, \dots, m+1, \text{ and} \\
  & X_{m+2} = \begin{cases}
    \frac{1}{\eps} & \text{w.p. } \eps \\
    0 & \text{w.p. } 1 - \eps
  \end{cases},
\end{align*}
shows that the gap between $\roe(\oracle_m, \I))$ and
$\roe(\prophet_{m+1}, \I))$ is at most $1-1/2^{m+1}$ for general
$m$. The analysis of this example for general $m$ is similar to the
$m = 1$ case. We do not present it here as, even though this example
is very simple, this gap is not the tightest one possible. For a
tighter gap between the competitive ratio of the two models, see the
example in the proof of \thmref{tightness-noniid}.

\begin{lemma}
    \lemlab{prophet-to-oracle-roe}%
    For any input instance $\I$, we have
    $\roe(\prophet_{m+1}, \I) \geq \roe(\oracle_m, \I)$, for $\iid$ or
    $\noniid$ variables.
\end{lemma}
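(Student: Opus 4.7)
The plan is to mirror the simulation argument used in \lemref{prophet-to-oracle}, adapted from the $\pbm$ objective to the expected-value objective. Given an arbitrary algorithm $\cA$ for $\cM(\oracle_m, y, \roe)$, I would construct an algorithm $\cB$ for $\cM(\prophet_{m+1}, y, \roe)$ satisfying $\Ex{\cB} \geq \Ex{\cA}$ on the same instance $\I$. Since the prophet benchmark $\Ex{\max_i X_i}$ is identical in both models, dividing through immediately yields $\roe(\prophet_{m+1}, \I) \geq \roe(\oracle_m, \I)$, and taking $\cA$ to be optimal gives the lemma.

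The construction is the natural one: $\cB$ maintains an internal copy of $\cA$ and a selected set $S$, initially empty. When the realization $X_i$ arrives, $\cB$ feeds it to $\cA$. Whenever $\cA$ decides either to (i) expend an oracle query on $X_i$, or (ii) irrevocably stop and output $X_i$, $\cB$ adds $X_i$ to $S$ at that moment. Since $\cA$ issues at most $m$ oracle queries and selects at most one value, we have $|S| \leq m+1$ throughout, so $\cB$ is a valid $\prophet_{m+1}$ algorithm. At the end, $\cB$ reports $\max_{x \in S} x$.

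The key pointwise inequality is that the value ultimately returned by $\cA$ always belongs to $S$: either $\cA$ stopped at some $X_i$ after using a query (so $X_i$ entered $S$ when the query was issued), or $\cA$ stopped at some $X_i$ without a query (so $\cB$ added $X_i$ under case (ii)). In either case,
\[
    \max_{x \in S} x \;\geq\; \bigl(\text{value returned by } \cA\bigr)
\]
holds on every realization of $X_1, \dots, X_n$, which upon taking expectations yields $\Ex{\cB} \geq \Ex{\cA}$.

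The only subtlety, and what I would expect a referee to probe, is that $\cA$'s subsequent behavior depends on the oracle's bit, so $\cB$ needs to supply that bit to continue the simulation. This is not actually an obstacle: the oracle's answer at step $i$ is a deterministic function of the underlying realization (namely the indicator of $X_i > \max_{j>i} X_j$), and the coupling is over this same realization. Crucially, $\cB$'s placement of $X_i$ into $S$ is triggered the instant $\cA$ issues the query --- before the answer is needed --- so $\cB$'s own online commitments never depend on information from the future; the oracle bit is used purely as an internal variable to advance $\cA$'s state machine. Thus $\cB$ remains a bona fide $\prophet_{m+1}$ algorithm, completing the argument.
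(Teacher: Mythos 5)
Your high-level strategy --- simulate $\cA$, add to $S$ every realization it queries or returns, observe $|S|\le m+1$ and that $\cA$'s return value always lands in $S$ --- is exactly the paper's. But the "subtlety" you flag at the end is a genuine gap in your argument, not something you have dispelled.

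The problem is that $\cB$ is an online algorithm in $\cM(\prophet_{m+1}, y, \roe)$: at step $i$ it has seen only $X_1,\dots,X_i$. The oracle bit at step $i$ is $\mathbf{1}[X_i > \max_{j>i} X_j]$, which depends on $X_{i+1},\dots,X_n$, so $\cB$ simply cannot compute it at time $i$. Saying it is "a deterministic function of the underlying realization" and "used purely as an internal variable" does not help: whether $\cA$'s internal state at time $i+1,\, i+2,\dots$ issues a new query depends on what answer $\cA$ received at time $i$, so $\cB$'s \emph{later} online commitments (whether to put $X_{i'}$ into $S$ for $i' > i$) do depend on information $\cB$ does not have. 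Your construction, as written, is not a valid $\prophet_{m+1}$ algorithm.

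The paper's proof repairs exactly this: $\cB$ runs the internal copy of $\cA$ while \emph{always} feeding it the answer "a larger value is still to come." This choice is computable with no lookahead, so $\cB$ is genuinely online. It also preserves the key invariant, for the following reason (which the proof uses implicitly and you should state): WLOG the optimal $\cA$ accepts $X_i$ the moment the oracle reports that nothing larger remains, since any later choice is dominated by $X_i$. Therefore the simulated trajectory agrees with the real trajectory up to and including the step at which $\cA$ would stop; at that step $X_i$ has already been placed into $S$ (it was a queried or returned value), and any divergence after that point only adds \emph{more} elements to $S$, never removes one. Hence $\max_{x\in S} x \ge V_{\cA}$ pointwise, $|S|\le m+1$ still holds since $\cA$ issues at most $m$ queries on any input, and the expectation inequality follows. (As an aside, the paper's prose appears to swap the \YES/\NO labels relative to the oracle definition; the correct reading is that $\cB$ always feeds $\cA$ the "continue" answer.)
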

\begin{proof}
    Let $\cA$ be the algorithm in $\cM(\oracle_m, \roe, \I)$ realizing
    the maximum \roe for $\I$. We construct an algorithm
    $\cB \in \cM(\Oracle_m, \roe,\I)$.

    The algorithm $\cB$ simulates $\cA$'s behavior by selecting each
    realization that $\cA$ decides to query. Initially, $\cB$ starts
    with an empty set $S$. Whenever $\cB$ is presented with a
    realization $X_i$, it feeds it to $\cA$. If $\cA$ decides to
    return $X_i$, or performs an oracle query for $X_i$, the algorithm
    $\cB$ adds $X_i$ to $S$.

    Observe that the algorithm $\cA$ stops as soon as an oracle query
    returns \NO. Thus, the simulation $\cB$ of $\cA$, assumes the
    oracle always answers \YES (i.e., a larger value is coming up in
    the future). (i.e., the simulation replaces a call to the oracle
    by a function that always returns \YES), as this enables it
    (potentially) to save more values into the available slots, thus
    increasing its \roe.

    The set $S$ contains exactly all the realizations that were
    queried by $\cA$, as well as at most one additional realization
    returned by $\cA$. Therefore, $|S| \leq m+1$.

    Every possible sequence of realizations $\cA$ queried (or selected
    to return) are in $S$. Therefore, if $V_{\cA}$ is the value
    returned by $\cA$ and $V_{\cB}$ is the value returned by $\cB$, we
    have $V_{\cB} \geq V_{\cA}$, which readily implies that
    $\roe( \cB) \geq \roe(\cA)$.
\end{proof}

\begin{theorem}
    \thmlab{second-reduction} %
    For every $m \geq 1$, and for \emph{all} input instances $\J$ (of
    $\iid$ or $\noniid$ variables), we have
    $ \roe(\oracle_m,\J) \leq \roe(\prophet_{m+1}, \J)$, Furthermore,
    there \emph{exists} an input instance $\I$, made out of $m+2$
    \noniid random variables, such that
    \begin{math}
        \roe(\Oracle_m, \I) \leq (1-1/2^{m+1}) \roe( \prophet_2, \I).
    \end{math}
\end{theorem}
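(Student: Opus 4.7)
The first inequality, $\roe(\oracle_m, \J) \le \roe(\prophet_{m+1}, \J)$ for every instance $\J$, is an immediate consequence of \lemref{prophet-to-oracle-roe}: applying the black-box simulation of that lemma to the optimal oracle algorithm on $\J$ produces a $\prophet_{m+1}$ algorithm achieving at least the same $\roe$, so the best possible $\roe$ in the oracle model is at most the best possible $\roe$ in the $\prophet_{m+1}$ model.

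For the gap claim, the plan is to analyze the natural generalization of the three-variable counterexample from \corref{cor:equivalence-counterexample}, i.e., the $(m+2)$-variable instance $\I$ already displayed in the excerpt. First, a direct computation gives $\Ex{\max_i X_i} = 2 - O(\eps)$. Second, on the $\prophet_{m+1}$ side, I would exhibit the simple algorithm $\cB$ that takes $X_1$ and $X_{m+2}$ (plus any $m-1$ middle non-zeros, which do not help the max); a one-line case split on whether $X_{m+2}$ is non-zero yields $\Ex{\cB} \ge 2 - O(\eps)$ and hence $\roe(\prophet_{m+1}, \I) \to 1$ as $\eps \to 0$.

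The heart of the proof is to show that every oracle algorithm $\cA$ obeys $\Ex{\cA} \le 2 - 1/2^m + O(\eps)$. The intuition is that $\cA$ has only $m$ queries but faces $m+1$ ``pre-$X_{m+2}$'' positions that may host a non-zero realization, so it must incur an unavoidable deficit of $\approx 1/2^m$ through one of two symmetric bad events. If $\cA$ queries on $X_1$, then in the event that all of $X_1, \ldots, X_{m+1}$ are non-zero and $X_{m+2}$ is non-zero --- probability $(\tfrac12 - \eps)^m \cdot \eps$ --- it exhausts its queries on the first $m$ non-zeros (all \NO), reaches $X_{m+1}$ without a query, and the optimal continuation commits to $X_{m+1} \approx 1$, losing the $1/\eps$ at $X_{m+2}$; the resulting expected loss relative to $\cB$ scales as $(\tfrac12 - \eps)^m \to 1/2^m$. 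If instead $\cA$ saves that query by skipping $X_1$, then in the symmetric event that all of $X_2, \ldots, X_{m+2}$ are zero --- probability $(\tfrac12 + \eps)^m (1-\eps) \to 1/2^m$ --- $\cA$ collects $0$ while $\cB$ collects $X_1 = 1$. Interpolating between these extremes across all adaptive strategies yields a deficit of at least $1/2^m - O(\eps)$. Dividing $\Ex{\cA} \le 2 - 1/2^m + O(\eps)$ by $\Ex{\max} \to 2$ and letting $\eps \to 0$ gives the claimed ratio $1 - 1/2^{m+1}$.

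The main obstacle is the uniform bound over all adaptive strategies in the previous step, since $\cA$'s decision at each $X_i$ can depend on all previously observed realizations and \YES/\NO responses. I would attack this by exploiting that the remaining variables stay mutually independent of the past after any observed prefix, so the optimal strategy has a clean decomposition into a ``query phase'' (query each freshly seen non-zero value) and a ``commit phase'' (take the best option once queries are exhausted). A coupling between the two extreme strategies sketched above, combined with convexity in the mixing weights, should then reduce the general-strategy bound to the two explicit cases.
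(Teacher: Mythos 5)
Your proof of the first inequality is exactly the paper's: it follows directly from \lemref{prophet-to-oracle-roe}. For the gap claim, you also take the same route as the paper, namely the $(m+2)$-variable generalization of the $m=1$ example from \corref{cor:equivalence-counterexample}. In fact the paper explicitly declines to present the analysis (``We do not present it here as \ldots this gap is not the tightest one possible''), so your attempt to fill that in goes beyond what is on the page, and your identification of the two failure modes — (i) query at $X_1$, exhaust all queries when $X_2,\ldots,X_{m+1}$ are all non-zero, miss $X_{m+2}$, expected loss $\to 1/2^m$; and (ii) skip $X_1$, get $0$ when $X_2,\ldots,X_{m+2}$ are all zero, expected loss $\to 1/2^m$ — is correct and matches the $m=1$ computation in the paper.

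That said, the central step you flag as the ``main obstacle'' is indeed not resolved, and one of your intermediate claims is inaccurate. You assert that after any observed prefix ``the optimal strategy has a clean decomposition into a query phase (query each freshly seen non-zero value) and a commit phase.'' This greedy structure is \emph{not} the unique optimum: for example, after querying $X_1$ and getting \NO, a strategy that deliberately \emph{skips} querying the $m$-th fresh non-zero value and saves its last query for the next one avoids failure mode (i) entirely — at the cost of a new, disjoint failure mode (``$X_2,\ldots,X_m$ non-zero, $X_{m+1}=X_{m+2}=0$'') of the same probability $\approx 1/2^m$. So the conclusion $\Ex{\cA}\le 2 - 1/2^m + O(\eps)$ appears to be correct, but the justification must range over \emph{all} such trade-offs, not just the two extremes you analyze. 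The cleanest route here is a backward dynamic program over (position, queries remaining), as the paper itself sets up in \apndref{app:final:discussion} for a closely related instance, rather than a coupling-and-convexity argument that you only sketch. Finally, two small points: the theorem statement compares against $\roe(\prophet_2,\I)$ while you compare against $\roe(\prophet_{m+1},\I)$; for this instance the two are equal (the optimal $\prophet_{m+1}$ strategy takes only $X_1$ and $X_{m+2}$), so this is harmless, but worth noting. And the final step ``dividing and letting $\eps\to 0$'' establishes only that the ratio \emph{approaches} $1-1/2^{m+1}$; to match the exact inequality in the theorem one needs to track the signs of the $O(\eps)$ terms or restate with an additive $\delta$, as the paper does elsewhere (e.g.\ \thmref{tightness-noniid}).
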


\section{The \noniidOnly settings}
\seclab{non-iid}

By \thmref{second-reduction}, any guarantees we provide for the oracle
model with the \roe objective can be directly translated to guarantees
for the \textsc{Top-$1$-of-$m$} model, improving upon the previous
work on this model \cite{as-srpim-00, ags-rpiwm-02, efn-pso-18,
   h-fbcps-23}. We provide a simple, single-threshold algorithm that
resolves the \roe objective in the oracle model.

\subsection{The exponent sequence}

\begin{definition}
    \deflab{reg-gamma}
    For every $m \geq 1$, let $\xi_m$ denote the unique
    \textit{positive} solution to the following equation:
    \[
        1 - e^{-\xi_m} = \frac{\Gamma(m+1, \xi_m)}{m!},
    \]
    where $\Gamma(m+1,x) = \int_{t=x}^\infty {t^{m} e^{-t} \dif t}$
    denotes the \emph{upper incomplete gamma} function.  The
    \emphi{exponent sequence} is $\xi_1, \xi_2, \ldots$.
\end{definition}
We show below that the optimal competitive ratio of
$\cM(\oracle_m,\noniid, \roe)$ is exactly $1 - e^{-\xi_m}$.  It is
\href{https://en.wikipedia.org/wiki/Incomplete_gamma_function}{known}
that, for $x \geq 0$ and an integer $m+1 > 0$, we have
\begin{equation}
    \Gamma(m+1,x)
    =
    m!\,e^{-x}\sum_{k=0}^{m}{\frac {x^{k}}{k!}}
    \leq
    m! e^{-x} e^x
    \leq
    m!.
    \eqlab{gamma:def}
\end{equation}
As such, the above equation on the value of $\xi_m$, becomes
\begin{equation*}
    1 -e^{-\xi_m}
    =
    e^{-\xi_m}\sum
    _{k=0}^{m}{\frac {\prn{\xi_m}^{k}}{k!}}
    \qquad\iff\qquad%
    \sum
    _{k=m+1}^{\infty}{\frac {\prn{\xi_m}^{k}}{k!}}
    =
    1.
\end{equation*}
This readily implies that the exponent sequence is monotonically
increasing, and $m/e^2 \leq \xi_m \leq m$.

\begin{defn}
    \defnlab{q:function}%
    Let
    \begin{math}
        \QY{k+1}{x} = \frac{\Gamma(k+1, x)}{k!} = e^{-x} \sum_{j =
           0}^k \frac{x^j}{j!}.
    \end{math}
    This implies $\QY{m+1}{\xi_m}=1-e^{-\xi_m}$.
\end{defn}

\begin{lemma}
    \lemlab{silly:d}%
    $\dQY{m+1}{x} = -e^{-x} \frac{x^m}{m!}$.
\end{lemma}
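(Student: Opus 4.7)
The plan is to differentiate $\QY{m+1}{x}$ using either of the two equivalent expressions given in \defnref{q:function}. Both routes are short; I will sketch both and pick the cleaner one for the write-up.

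The first approach uses the integral form. By \Eqref{gamma:def} (or equivalently the definition), $\QY{m+1}{x} = \frac{1}{m!} \int_{x}^{\infty} t^{m} e^{-t} \dif t$. Applying the fundamental theorem of calculus to the lower limit of integration yields $\dQY{m+1}{x} = -\frac{1}{m!}\, x^{m} e^{-x}$, which is exactly the claim. This is a one-line computation and is the approach I would put in the paper.

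As a sanity check, the second approach uses the finite sum form $\QY{m+1}{x} = e^{-x} \sum_{j=0}^{m} \frac{x^{j}}{j!}$. The product rule gives
\[
\dQY{m+1}{x} = -e^{-x} \sum_{j=0}^{m} \frac{x^{j}}{j!} + e^{-x} \sum_{j=1}^{m} \frac{x^{j-1}}{(j-1)!} = -e^{-x} \sum_{j=0}^{m} \frac{x^{j}}{j!} + e^{-x} \sum_{j=0}^{m-1} \frac{x^{j}}{j!},
\]
and after the telescoping cancellation only the $j=m$ term of the first sum survives, producing $-e^{-x} \frac{x^{m}}{m!}$. There is no real obstacle here: the only thing to be careful about is the index shift in the second sum, and the boundary case $m=0$ (where the second sum is empty and the identity reduces to $(e^{-x})' = -e^{-x}$). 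I would present the integral argument as the main proof and omit the sum-based check.
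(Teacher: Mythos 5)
Your proposal is correct, and you in fact cover both the paper's approach and an alternative. The paper proves the lemma via the finite-sum form $\QY{m+1}{x} = e^{-x}\sum_{j=0}^m x^j/j!$, differentiating term by term and observing that the result telescopes down to $-e^{-x}x^m/m!$ --- this is exactly your "sanity check." Your preferred route, invoking the fundamental theorem of calculus on $\QY{m+1}{x} = \frac{1}{m!}\int_x^\infty t^m e^{-t}\,\dif t$, is genuinely different and arguably cleaner: it avoids the index bookkeeping entirely and makes the sign immediate from differentiating with respect to the lower limit. The only thing it buys at a small cost is that it relies on the integral representation of $\Gamma(m+1,x)$ rather than the elementary closed form, but since the paper already states both in \Eqref{gamma:def}, either is legitimate. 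Your remark about the boundary case $m=0$ in the sum-based argument is a fair observation; the paper's write-up handles it implicitly since the sum $\sum_{j=1}^0$ is empty, leaving $-e^{-x}$, which equals $-e^{-x}x^0/0!$ as required.
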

\begin{proof}
    As $(e^{-x})' = -e^{-x}$, we have
    \begin{math}
        \dQY{m+1}{x}%
        = -e^{-x} + \sum_{j = 1}^m\pth{ e^{-x} \frac{x^{j-1}}{(j-1)!}
           -e^{-x} \frac{x^j}{j!}  } =%
        -e^{-x} + e^{-x} -e^{-x} \frac{x^m}{m!}  = -e^{-x}
        \frac{x^m}{m!}.
    \end{math}
\end{proof}

\begin{lemma}
    \lemlab{upper-bound-seq}%
    For all $m\geq 1$, we have
    $ \prn{m!}^{\f{1}{m}} < \xi_m < \prn{(m+1)!}^{\f{1}{m+1}}$.
\end{lemma}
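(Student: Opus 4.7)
The plan is to treat the two inequalities separately, with the upper bound being essentially free and the lower bound requiring a short calculation.

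For the upper bound $\xi_m < ((m{+}1)!)^{1/(m+1)}$, I will appeal directly to the defining equation $\sum_{k \geq m+1} \xi_m^k/k! = 1$. Since $\xi_m > 0$, every term of this series is strictly positive, and in particular all terms beyond the first contribute a strictly positive amount, which forces the first term alone to satisfy $\xi_m^{m+1}/(m+1)! < 1$. Taking $(m{+}1)$-th roots gives the claim.

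For the lower bound $\xi_m > (m!)^{1/m}$, I will use that $g(x) := \sum_{k \geq m+1} x^k/k!$ is strictly increasing on $(0,\infty)$ with $g(\xi_m) = 1$, so it suffices to show $g(\alpha) < 1$ where $\alpha := (m!)^{1/m}$. The key trick is to exploit the identity $\alpha^m = m!$ to rewrite each tail term as
\[
\frac{\alpha^{m+j}}{(m+j)!} = \frac{\alpha^j}{(m+1)(m+2)\cdots(m+j)},
\]
so that $g(\alpha) = \sum_{j \geq 1} \alpha^j / \prod_{i=1}^{j}(m+i)$. Since the ratio of consecutive terms is $\alpha/(m+j+1) \leq \alpha/(m+2)$ for $j \geq 1$, a geometric-series bound yields
\[
g(\alpha) \;\leq\; \frac{\alpha/(m+1)}{1 - \alpha/(m+2)} \;=\; \frac{\alpha(m+2)}{(m+1)(m+2-\alpha)},
\]
valid whenever $\alpha < m+2$ (which will hold automatically below). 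A quick algebraic manipulation shows this upper bound is $\leq 1$ iff $\alpha(2m+3) \leq (m+1)(m+2)$, i.e.\ $\alpha \leq (m+1)(m+2)/(2m+3)$.

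To close, I will invoke AM-GM on $1, 2, \ldots, m$ to get $\alpha = (m!)^{1/m} \leq (m+1)/2$, and observe that $(m+1)/2 < (m+1)(m+2)/(2m+3)$ since this reduces to $2m+3 < 2m+4$. Chaining the two, $\alpha < (m+1)(m+2)/(2m+3)$, which makes the geometric bound strictly less than $1$, hence $g(\alpha) < 1 = g(\xi_m)$ and therefore $\xi_m > \alpha$. I don't foresee a real obstacle here; the only mild subtlety is the geometric step, and the main insight is that rewriting the tail after normalizing by $\alpha^m = m!$ exposes a series whose size is controlled cleanly by AM-GM.
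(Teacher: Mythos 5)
Your proof is correct and follows essentially the same route as the paper: the upper bound falls out of the first tail term in the defining identity $\sum_{k\geq m+1}\xi_m^k/k!=1$, and the lower bound is obtained by a geometric majorization of that tail at $\alpha=(m!)^{1/m}$ combined with the AM-GM bound $\alpha\leq(m+1)/2$, exactly as the paper does with its $T(\gamma)$ estimate. Your normalization $\alpha^m=m!$ and the closed-form condition $\alpha(2m+3)\leq(m+1)(m+2)$ are just a slightly different bookkeeping of the same geometric-series argument.
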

\begin{proof:e}{\lemref{upper-bound-seq}}{upper-bound-seq:proof}
    Guided by \Eqref{gamma:def}, define
    \begin{equation*}
        h(x)
        =
        \frac{\Gamma(m+1, x)}{m!}-1+e^{-x}
        =
        \QY{m+1}{x} -1 + e^{-x}
        =
        e^{-x} \pth{ \sum_{i=0}^m \frac{x^i}{i!} - e^x + 1}
        =%
        e^{-x} \pth{1 -  T(x)},
    \end{equation*}
    where
    \begin{math}
        T(x) = \sum_{i=m+1}^{\infty} \frac{x^i}{i!}.
    \end{math}
    By \lemref{silly:d}, we have
    \begin{math}
        h'(x) = -e^{-x} \frac{x^m}{m!} - e^{-x} < 0.
    \end{math}
    Namely, $h(\cdot)$ is a strictly decreasing function.  Thus,
    $\xi_m$ the positive root of $h(x) = 0$ is unique, as $h(0) = 1$,
    and $\lim_{x\rightarrow \infty} h(x) = -1$.

    Setting $\beta = \prn{(m+1)!}^{1/(m+1)}$, we have
    \begin{math}
        T(\beta) > \frac{\beta^{m+1}}{(m+1)!}  = \frac{(m+1)!}{(m+1)!}
        = 1,
    \end{math}
    which readily implies $h( \beta ) < 0$.  By the AM-GM inequality,
    we have that
    \begin{math}
        \gamma = \sqrt[m]{m!}  < \sum_{i=1}^m i / m = \frac{m+1}{2}.
    \end{math}
    In particular, we have
    \begin{equation*}
        \frac{\gamma^{m+1} }{(m+1)!}
        =%
        \frac{m! \cdot \gamma}{(m+1)!}
        =
        \frac{ \gamma}{m+1}
        <
        \frac{1}{2}.
    \end{equation*}

    As such, we have
    \begin{equation*}
        T(\gamma)
        \leq%
        \sum_{i=m+1}^{\infty} \frac{\gamma^i}{i!}
        \leq
        \sum_{i=m+1}^{\infty}
        \frac{\gamma^{m+1} \gamma^{i-m- 1}}{(m+1)! (m+1)^{i- m -1}}
        <
        \frac{1}{2}
        \sum_{j=0}^{\infty}
        \frac{ \gamma^j}{ (m+1)^j}
        <
        \frac{1}{2}
        \sum_{j=0}^{\infty}
        \frac{ ((m+1)/2)^j}{ (m+1)^j}
        =
        1.
    \end{equation*}
    Thus, $h(\gamma ) > 0$. We conclude that $\gamma < \xi_m < \beta$.
\end{proof:e}

\begin{remark}
    \remlab{decreasing}%
    Setting $\nu(x) = \nu(m, x) = \frac{\Gamma(m+1, x)}{m!}$, and
    arguing as in \lemref{upper-bound-seq}, we have $\nu'(x) < 0$,
    which readily implies that $\nu(x)$ is monotonically decreasing.
\end{remark}

Stirling's formula applied to \lemref{upper-bound-seq} readily implies
the following.

\begin{lemma}
    \lemlab{xi:m}%
    We have
    \begin{math}
        \displaystyle%
        \lim_{m \to \infty} {\frac{\xi_m}{m}} = \frac{1}{e}.
    \end{math}
\end{lemma}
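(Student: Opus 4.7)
The plan is to apply the squeeze theorem to the bounds
$\sqrt[m]{m!} < \xi_m < \sqrt[m+1]{(m+1)!}$ from \lemref{upper-bound-seq},
after dividing through by $m$, and compute each side using Stirling's approximation
$n! = \sqrt{2\pi n}\,(n/e)^n (1 + o(1))$.

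For the lower bound, taking the $m$\th root gives
$\sqrt[m]{m!} = \tfrac{m}{e} \cdot (2\pi m)^{1/(2m)} \cdot (1+o(1))^{1/m}$.
Since $(2\pi m)^{1/(2m)} \to 1$ as $m \to \infty$ (because $\log(2\pi m)/(2m) \to 0$),
dividing by $m$ yields $\sqrt[m]{m!}/m \to 1/e$. For the upper bound, I first write
$\sqrt[m+1]{(m+1)!} = \tfrac{m+1}{e} \cdot (2\pi(m+1))^{1/(2(m+1))} \cdot (1+o(1))$,
and then divide by $m$. The factor $(m+1)/m \to 1$ and
$(2\pi(m+1))^{1/(2(m+1))} \to 1$, so $\sqrt[m+1]{(m+1)!}/m \to 1/e$ as well.

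Combining these with \lemref{upper-bound-seq} in the form
\[
    \frac{\sqrt[m]{m!}}{m}
    < \frac{\xi_m}{m}
    < \frac{\sqrt[m+1]{(m+1)!}}{m},
\]
the squeeze theorem then immediately delivers $\xi_m/m \to 1/e$.

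The only mildly delicate point is verifying that both sides have the same limit $1/e$ once divided by the same $m$ (as opposed to the ``natural'' denominators $m$ and $m+1$ of the two Stirling approximations). This is handled by the trivial observation $(m+1)/m \to 1$, so no real obstacle arises; the argument is essentially a direct computation once Stirling is invoked.
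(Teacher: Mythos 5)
Your proposal is correct and takes essentially the same route as the paper, which simply states that Stirling's formula applied to \lemref{upper-bound-seq} readily implies the claim. You have filled in the straightforward details (taking $m$\th and $(m+1)$\th roots, noting the polynomial factors tend to $1$, and squeezing), which is exactly what the paper leaves implicit.
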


\begin{lemma}
    \lemlab{needed:no:positive}%
    For all $k, m\geq 0$ integers, we have
    \begin{math}
        \displaystyle%
        f(k, m) = \sum_{j=1}^k \frac{\xi_m^j}{j!} - \sum_{j=m+1}^{m+k}
        \frac{\xi_m^j}{j!} \geq 0.
    \end{math}
\end{lemma}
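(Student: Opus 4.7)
My plan is to prove the inequality by a termwise comparison after pairing up the terms. Write $a_j = \xi_m^j/j!$, so that
\[
f(k,m) = \sum_{j=1}^{k} a_j - \sum_{j=m+1}^{m+k} a_j = \sum_{j=1}^{k} \bigl( a_j - a_{j+m} \bigr).
\]
Thus it suffices to show that $a_j \geq a_{j+m}$ for every $j \geq 1$ and $m \geq 1$. The trivial boundary cases $k = 0$ and $m = 0$ are handled separately: in the first the sums are empty, and in the second the two sums coincide (index shifts by $0$), so $f(k,0) = 0$.

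The ratio of paired terms has a clean form, namely
\[
\frac{a_{j+m}}{a_j} = \frac{\xi_m^m \cdot j!}{(j+m)!} = \frac{\xi_m^m}{(j+1)(j+2)\cdots(j+m)}.
\]
Over $j \geq 1$, the product in the denominator is minimized at $j = 1$, where it equals $(m+1)!$. Hence it suffices to establish the single bound $\xi_m^m \leq (m+1)!$, as this gives $a_{j+m}/a_j \leq 1$ for all $j \geq 1$.

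This last inequality follows from \lemref{upper-bound-seq}, which gives $(m!)^{1/m} < \xi_m < ((m+1)!)^{1/(m+1)}$ for $m \geq 1$. Raising the upper bound to the $(m+1)$-st power yields $\xi_m^{m+1} < (m+1)!$. Moreover, since $(m!)^{1/m} \geq 1$ for $m \geq 1$, the lower bound gives $\xi_m > 1$. Dividing $\xi_m^{m+1} < (m+1)!$ by $\xi_m$ then yields $\xi_m^m < (m+1)!/\xi_m < (m+1)!$, as required.

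I do not anticipate any real obstacle; the only slightly delicate point is noticing the correct pairing $j \leftrightarrow j + m$ of terms from the two sums, after which both ingredients needed (an upper bound on $\xi_m^{m+1}$ and $\xi_m > 1$) are packaged in \lemref{upper-bound-seq}. The argument is therefore compact and elementary.
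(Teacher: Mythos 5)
Your proof is correct and takes essentially the same route as the paper: both reduce to showing $\xi_m^j/j! \geq \xi_m^{j+m}/(j+m)!$ termwise, which boils down to $\xi_m^m < (m+1)!$, obtained in both cases from \lemref{upper-bound-seq}. The paper phrases the pairing as showing $f(k+1,m)-f(k,m)\geq 0$ with $f(0,m)=0$, which is the same decomposition; and it derives $\xi_m^m<(m+1)!$ via $\xi_m<((m+1)!)^{1/(m+1)}<((m+1)!)^{1/m}$, a trivially equivalent manipulation to your division by $\xi_m>1$.
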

\begin{proof:e}{\lemref{needed:no:positive}}{needed:no:positive:proof}
    By definition $f(0, m)=0$.  We have
    \begin{equation*}
        f(k+1, m)-f(k, m)
        =
        \frac{\xi_m^{k+1}}{(k+1)!} -
        \frac{\xi_m^{m+k+1}}{(m+k+1)!}.
    \end{equation*}
    Thus
    \begin{math}
        f(k+1, m)\geq f(k, m)%
        \iff%
        \f{(m+k+1)!}{(k+1)!} \geq \f{\xi_m^{m+k+1}}{\xi_m^{k+1}} =%
        \xi_m^{m}.
    \end{math}
    for $k > 0$ and $m>0$, we have
    \begin{equation*}
        (m+k+1)!
        =%
        (m+1)! \cdot 1 \cdot  \underbrace{(m+2)}_{> 2}
        \cdot \underbrace{( m+3)}_{> 3} \cdots \underbrace{(m+k+1)
        }_{> k+1}
        >
        (m+1)! (k+1)!.
    \end{equation*}
    Thus, it sufficient to prove that $\xi_m^m < (m+1)! $ $\iff$
    $\xi_m < \sqrt[m]{(m+1)!}$. The later is immediate from
    \lemref{upper-bound-seq}, as
    $\xi_m < \bigl((m+1)!\bigr)^{1/(m+1)} < \sqrt[m]{(m+1)!}$.
\end{proof:e}

\subsection{Background: Sharding, poissonization, and stochastic %
   dominance}

For a sequence of random variables $\XX = X_1, \ldots, X_n$, let
$\cardin{ \alpha \leq \XX \leq \beta} = \cardin{\Set{i }{\alpha \leq
      X_i \leq \beta}}$ denote the number of realizations in this
sequence falling in the interval $[\alpha, \beta]$.

\subsubsection{Sharding}
\seclab{sharding}

For the lower bound, we use poissonization and sharding
\cite{h-fbcps-23}. Given random variables $X_1, \dots, X_n$ with
\cdf{}s $F_1, \dots, F_n$, instead of sampling $X_i$ from $F_i$, we
instead replace it with a sequence of $K$ independent random variables
$\Shard_i = Y_{i,1}, \dots, Y_{i,K}$, such that $\max_j Y_{i,j}$ has
the same distribution as $X_i$. Specifically, the \cdf of $Y_{i,j}$,
for all $j$, is $F^{\f{1}{K}}_i$. Thus, the distribution of
$\max\set{Y_{i,1}, \dots, Y_{i,K}}$ is the same as $X_i$.  This
creates a new sequence of $K n$ samples
$\Seq = \Shard_1 \cdot \Shard_2 \cdots \Shard_n$, where $\cdot$ is the
concatenation operator.  Observe that for any $\alpha \geq 0$ and
integer $t$, we have
\begin{equation*}
    \Prob{\cardin{ \XX \geq  \alpha} > t}
    < \Prob{\cardin{ \Seq \geq \alpha} > t}.
\end{equation*}
This implies, that for threshold algorithms, running on $\Seq$ instead
of $\XX$ can only generate worst results.  We emphasize that this
sharding is done only for analysis purposes.

\subsubsection{Poissonization}

\begin{defn}
    A random variable $X$ has Poisson distribution with rate
    $\lambda$, denoted by $X \sim \Pois( \lambda)$, if
    $\Prob{X = i} = \lambda^k e^{-\lambda}/k!$. Conveniently,
    $\Ex{X} = \Var{X} = \lambda$.
\end{defn}

The purpose of the sharding is to be able to bound quantities of the
form $\Prob{ \cardin{ \beta \leq \Seq \leq \tau}= t}$.  As $K$ grows,
the underlying random variable $\cardin{ \beta \leq \Seq \leq \tau}$
has a binomial distribution that converges to a Poisson distribution.

\begin{observation}
    \obslab{limit}%
    For $c \in (0,1]$, we have, using \LHoptial's rule, that
    $\lim_{x \rightarrow \infty} x(1 - c^{1/x}) = \lim_{x \rightarrow
       \infty} \frac{1 - \exp(\log(c)/x)}{1/x} = \lim_{x \rightarrow
       \infty} \frac{ \log(c)\exp(\log(c)/x)/x^2 }{-1/x^2} = - \log
    c$, where $\log = \log_e$.
\end{observation}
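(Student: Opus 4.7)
The plan is to reduce the expression to a standard $0/0$ indeterminate form and then invoke \LHoptial's rule. First, I would dispose of the degenerate case $c = 1$, since then $c^{1/x} = 1$ for every $x$, so the left-hand side is identically $0$ and the right-hand side is $-\log 1 = 0$. For the genuine case $c \in (0, 1)$, I would rewrite the power $c^{1/x}$ using the identity $c^{1/x} = \exp\prn{\log(c)/x}$, so that the expression takes the form
\[
    x\bigl(1 - c^{1/x}\bigr)
    =
    \frac{1 - \exp\prn{\log(c)/x}}{1/x}.
\]
This is now a $0/0$ indeterminate form, since as $x \to \infty$ we have $\log(c)/x \to 0$, hence $\exp\prn{\log(c)/x} \to 1$, and $1/x \to 0$, so \LHoptial's rule applies.

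Second, I would differentiate numerator and denominator in $x$. By the chain rule, the numerator's derivative is $-\exp\prn{\log(c)/x} \cdot \prn{-\log(c)/x^2} = \log(c) \exp\prn{\log(c)/x}/x^2$, while the denominator's derivative is $-1/x^2$. The common $1/x^2$ factors cancel, leaving
\[
    \lim_{x \to \infty} \bigl(-\log(c) \exp\prn{\log(c)/x}\bigr)
    =
    -\log c,
\]
using once more that $\exp\prn{\log(c)/x} \to 1$.

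There is no real obstacle here; the entire claim is a routine \LHoptial exercise. The only care needed is in the rewrite $c^{1/x} = \exp\prn{\log(c)/x}$ (which requires $c > 0$, consistent with the hypothesis $c \in (0,1]$) and in tracking the signs so that the $1/x^2$ terms cancel cleanly and the final answer is indeed $-\log c \geq 0$, as one should expect since $x(1 - c^{1/x}) \geq 0$ throughout the range.
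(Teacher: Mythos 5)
Your proposal is correct and follows essentially the same route as the paper: rewrite $c^{1/x}$ as $\exp(\log(c)/x)$, recognize the $0/0$ form, and apply \LHoptial's rule, with the $1/x^2$ factors cancelling to give $-\log c$. The separate treatment of $c=1$ and the sign check are harmless extra care beyond what the paper records.
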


Let $\tau$ be a threshold such that
$ \sum_{i=1}^n \sum_{j=1}^K \Prob{Y_{i,j}\geq \tau} = c$ for some
constant $c$ to be determined shortly. We can rewrite this into the
following.
\begin{equation}
    \sum_{i=1}^n K\prn{1-\Prob{X_i\leq \tau}^{1/K}}  = c.
    \eqlab{sumshards2}
\end{equation}
The limit of \Eqref{sumshards2}, as $K\to +\infty$, is
$ \sum_{i=1}^n -\log \Prob{X_i\leq \tau} = c $, by
\obsref{limit}. Equivalently, for $\ZZ=\max\set{X_1, \ldots , X_n}$,
we have
\begin{equation*}
    e^{-c}%
    =%
    \exp\pth{\Bigl. \smash{\sum_{i=1}^n} \log \Prob{X_i\leq \tau}}
    =%
    \prod_{i=1}^n  \Prob{X_i\leq \tau}
    =
    \Prob{ X_1, \ldots, X_n \leq \tau}
    =%
    \Prob{\ZZ \leq \tau}.
\end{equation*}
In particular, the distribution of the number of indices $j$, such
that $Y_{i,j}\geq \tau$ can be well approximated with a Poisson
distribution. Specifically, let $V_{i,j} =1 $ $\iff$
$Y_{i,j}\geq \tau$, and consider the sum $V_i = \sum_{j=1}^K
V_{i,j}$. The variable $V_i \sim$ $\mathrm{bin}(K, \psi_i)$, where
$\psi_i = 1-\Prob{X_i\leq \tau}^{1/K}$.

Let $\lambda_i = \psi_i K$, and consider the random variable
$U_i \sim \Pois(\lambda_i)$ (i.e., $U_i$ has a Poisson distribution
with rate $\lambda_i$). Intuitively, $V_i$ and $U_i$ have similar
distributions. Formally, \thmrefY{l:e:cam}{Le Cam theorem} implies
that for any set $T \subseteq \{0,1,\ldots, K\}$, we have
\begin{math}
    \abs{\Prob{V_i \in T} - \Prob{U_i \in T}}%
    \leq%
    2K\psi_i^2 =%
    2\lambda_i^2 /K%
    \leq%
    2c^2 /K,
\end{math}
by \Eqref{sumshards2}.  The later quantity goes to zero as $K$
increases.

Thus, we get a variable $U_i$ with a Poisson distribution for each
shard sequence $\Shard_i$, with rate $\lambda_i$, where $U_i$ models
the number of times we encounter in $\Shard_i$ values larger than
$\tau$. Thus, $U_\tau = \sum_i U_i$ models the total number of times
in the splintered sequence $\Seq$ that values encountered are larger
than $\tau$. The variable $U_\tau$ has a Poisson distribution with
rate $\lambda_\tau = \sum_{i=1}^n \lambda_i$.

\subsubsection{The distribution in a range}
\seclab{d:range}

Repeating the same process with a bigger threshold $\beta > \tau$,
would yield a similar Poisson random variable $U_\beta$ with a
\emph{lower} rate $\lambda_\beta$. The quantity
$\Delta =U_\tau - U_\beta$ is the number of values in $\Seq$ in the
range $[\tau, \beta]$. Furthermore, $\Delta$ has a Poisson
distribution with rate $\lambda_\tau - \lambda_\beta$.  Specifically,
$\Prob{ \cardin{ \beta \leq \Seq \leq \tau}= t} = \Prob{\Delta = t }$.

The key to our analysis is that the variables $\Delta$ and $U_\beta$
are independent (in the limit as $K$ increases).

\subsubsection{Stochastic dominance}

A standard observation is that for a non-negative random variable $X$,
we have
\begin{math}
    \Ex{X} = \int_{x=0}^\infty \Prob{X \geq x} \diff x.
\end{math}
Thus, for $\ZZ = \max\set{X_1, \dots, X_n}$, and for an algorithm
$\cA$, if one can guarantee that there is $c \in [0, 1]$, such that
for all $ \nu \geq 0$, $\Prob{\cA \geq \nu} \geq c\Prob{\ZZ\geq \nu}$,
then
\begin{equation*}
    \Ex{\cA}
    =%
    \int_{0}^{\infty} \Pr[\cA \geq x] \diff x
    \geq%
    c \int_{0}^{\infty} \Pr[\ZZ \geq x] \diff x
    \geq  c\Ex{\ZZ}.
\end{equation*}
And hence $c$ is a lower bound on the competitive ratio of $\cA$. This
argument is used in several results on prophet inequalities and is
often referred to as \emph{majorizing} $\cA$ with $\ZZ$.

\subsection{An optimal single-threshold algorithm (lower bound)}

Here, we describe a single-threshold algorithm that achieves the
optimal competitive ratio in the oracle model.
\begin{defn}
    \deflab{single:t}%
    A \emphw{single threshold} algorithm sets a threshold $\tau$, and
    start reading the sequence. Whenever encountering a realization
    $> \tau$, the algorithm stops and consult with the oracle.  The
    oracle query is whether all the values remaining in the suffix of
    the sequence are of value $\leq \tau$.  If the oracle returns
    \YES, the algorithm accepts the current value and stops. Otherwise
    it raises its threshold to $\tau=X_i$ and continues. If the oracle
    runs out of oracle calls, it returns the first value encountered
    after the last oracle call that is bigger than $\tau$ (which
    exists, since all oracle calls returned \NO).
\end{defn}
While technically, the querying threshold of the algorithm might
change during its execution, we call the algorithm a single-threshold
algorithm since it uses a single-threshold to decide whether to query
the oracle or not, and this threshold \emph{does not change with $i$},
unlike for example the optimal DP for the \iid prophet inequality or
the prophet secretary model. Our oracle model is quite different than
most other prophet inequality models in the sense that the algorithm
has some knowledge of the (true) future. Of course, any algorithm that
knows that the maximum of $X_{i+1}, \dots, X_n$ is larger than $X_i$
would be wasting queries if it expended them on some $X_j < X_i$ for
$j > i$, and thus the spirit of it being a single-threshold algorithm
to decide whether to query the oracle or not remains.

\begin{theorem}
    \thmlab{noniid-cr-asymptotic}%
    Let
    $\alpha = 1 - e^{-\xi_m} = 1 - e^{- \f{m}{e} + \SmallO\prn{m}}$,
    see \defref{reg-gamma}.  For any finite sequence $\XX$ of \noniid
    variables, one can compute a value $\tau$, such that the
    single-threshold algorithm (with initial threshold $\tau$) has
    competitive ratio $\geq \alpha$. i.e., the competitive ratio of
    $\cM(\oracle_m, \noniid, \roe)$ is $\geq \alpha$.
\end{theorem}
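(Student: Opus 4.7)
The plan is to combine sharding and Poissonization (\secref{sharding}) with a pointwise stochastic dominance argument. I would first choose the threshold $\tau$ so that, in the sharded and Poissonized limit, the rate $\lambda_\tau$ of above-$\tau$ events equals $\xi_m$; equivalently, $\Prob{Z \leq \tau} = e^{-\xi_m}$. Since sharding only hurts single-threshold algorithms, it suffices to prove the bound on the sharded sequence, on which the count $N$ of above-$\tau$ values is $\Pois(\xi_m)$ and those values are i.i.d.\ from the conditional distribution $F \mid F>\tau$.

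The key structural observation is that the algorithm's output always dominates $M_{\min(N,m+1)}$, where $M_k$ denotes the maximum of the first $k$ above-$\tau$ realizations in time order. Indeed, let $K$ be the number of running maxima above $\tau$: when $K\le m$, all queries return \NO\ except the last (on the overall maximum), so $\alg = M_N$; when $K>m$, $\alg$ is the $(m+1)$-th running maximum, which equals $M_{p_{m+1}}$ for some $p_{m+1}\ge m+1$ and is therefore at least $M_{m+1}$. Either way $\alg\ge M_{\min(N,m+1)}$, so the stochastic dominance machinery of \secref{non-iid} reduces the theorem to the pointwise bound $\Prob{M_{\min(N,m+1)} \ge \nu} \ge \prn{1 - e^{-\xi_m}}\Prob{Z \ge \nu}$, after which $\Ex{\alg} \ge (1-e^{-\xi_m})\Ex{Z}$ follows by integration.

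For $\nu \le \tau$ the bound is immediate, because the left-hand side already equals $\Prob{N \ge 1} = 1-e^{-\xi_m}$. The main obstacle is the range $\nu > \tau$. Here, letting $\lambda_\nu$ be the Poisson rate of above-$\nu$ events, a direct computation using the i.i.d.\ structure together with $\Prob{N \ge m+1} = e^{-\xi_m}$ would yield the closed form
\[
    \Prob{M_{\min(N,m+1)} \ge \nu}
    \ =\
    1 - e^{-\lambda_\nu}\,\QY{m+1}{\xi_m-\lambda_\nu}
    - e^{-\xi_m}\pth{1-\lambda_\nu/\xi_m}^{m+1}.
\]
After substituting $\QY{m+1}{\xi_m}=1-e^{-\xi_m}$ from \defref{reg-gamma} and changing variables to $\beta = \xi_m-\lambda_\nu$, the pointwise inequality should rearrange (invoking \lemref{silly:d}) into $1 - (\beta/\xi_m)^{m+1} \ge \int_\beta^{\xi_m} e^{\beta-t}\, t^m/m!\, \diff t$. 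The crucial step is then bounding $e^{\beta-t} \le 1$ on $t \in [\beta, \xi_m]$, after which the inequality reduces to $(m+1)! \ge \xi_m^{m+1}$, which is exactly \lemref{upper-bound-seq}. The asymptotic $1-e^{-\xi_m} = 1-e^{-m/e+\SmallO(m)}$ then follows from \lemref{xi:m}.
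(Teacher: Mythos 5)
Your high-level plan matches the paper's: the same threshold choice $\Prob{Z\le\tau}=e^{-\xi_m}$, sharding and Poissonization, and a pointwise stochastic-dominance argument split at $\tau$. Your structural observation that $\alg\ge M_{\min(N,m+1)}$, where $M_k$ is the prefix-maximum of the first $k$ above-$\tau$ values, is correct and is a clean reformulation of what the paper uses implicitly. However, there is a genuine gap in how you pass from this observation to a probability bound: you assert that after sharding "those values are i.i.d.\ from the conditional distribution $F\mid F>\tau$" and then compute $\Prob{M_{\min(N,m+1)}\ge\nu}$ treating the first $m+1$ above-$\tau$ arrivals as a uniformly random subset. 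This fails in the setting of \thmref{noniid-cr-asymptotic} for two reasons. First, the theorem is for \noniid inputs, so there is no single $F$ and the above-$\tau$ shards are not exchangeable across positions. Second, and more importantly, the adversary here is almighty (\defref{mighty}): it sees the realizations and reorders them, and it will put the smallest above-$\tau$ values first, which is exactly the ordering that minimizes $M_{\min(N,m+1)}$. Your closed-form expression lower-bounds $\Prob{\alg\ge\nu}$ only under a uniformly random arrival order, which is strictly larger than the adversarial value, so establishing the pointwise inequality for it does not prove the theorem.

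The fix is to replace the random-ordering computation with an ordering-independent sufficient condition, and once you do so the argument collapses to exactly the paper's. Observe that if $\Delta$ (the number of values in $[\tau,\nu)$) satisfies $\Delta\le m$ and $U_\nu\ge 1$ (at least one value $\ge\nu$), then $M_{\min(N,m+1)}\ge\nu$ \emph{for every ordering}: among any $m+1$ above-$\tau$ values at least one must be $\ge\nu$. Hence $\Prob{M_{\min(N,m+1)}\ge\nu}\ge\Prob{\Delta\le m}\Prob{U_\nu\ge 1}$ by the independence of $\Delta$ and $U_\nu$ established in \secref{d:range}, and since $\Prob{U_\nu\ge 1}=\Prob{Z\ge\nu}$, the pointwise ratio is at least $\Prob{\Delta\le m}=\QY{m+1}{\xi_m-\lambda_\nu}\ge\QY{m+1}{\xi_m}=1-e^{-\xi_m}$ by the monotonicity in \remref{decreasing}. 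This is precisely the paper's proof; the algebraic detour through $1-(\beta/\xi_m)^{m+1}\ge\int_\beta^{\xi_m}e^{\beta-t}t^m/m!\,\diff t$ and $(m+1)!\ge\xi_m^{m+1}$ is unnecessary and, as noted, is analyzing the wrong (non-adversarial) quantity.
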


\begin{proof}
    Let $\XX = X_1, \ldots, X_n$, and $\ZZ = \max_i X_i$. The
    threshold $\tau$ is the $e^{-\xi_m}$ quantile of the maximum, i.e.
    $\Pr[Z \leq \tau] = e^{-\xi_m}$. We use $\cA(\XX)$ to denote the
    result of running the algorithm on $\XX$.

    As suggested in \secref{sharding} (for the analysis), we imagine
    running the algorithm on the splintered sequence $\Seq$. Some
    counterintuitively, imagine first generating $\Seq$, and computing
    $X_i = \max_j Y_{i,j}$, see \secref{sharding}. Thus,
    $\max \Seq = \max \XX$.  For the sequence $\Seq$, let
    $\Seq_{\geq \tau}$ denote the subsequence of elements of $\Seq$
    that their values are above $\tau$. Observe that $\XX_{\geq \tau}$
    is a subsequence of $\Seq_{\geq \tau}$.  Thus, we analyze the
    algorithm performance on $\Seq$.

    Let $\beta \in [0, \tau]$. The probability the algorithm selects a
    value above $\beta$ is equal to the probability it selects any
    value. Thus,
    \begin{equation}
        \Prob{\cA(\XX) \geq \beta}
        =%
        \Prob{\cA \geq \tau}
        =%
        \Prob{Z \geq \tau}
        =
        1-e^{-\xi_m}
        \geq
        \bigl(1 - e^{-\xi_m}\bigr)
        \Prob{Z\geq \beta}.
        \eqlab{firstpart}
    \end{equation}

    For $\beta \in [\tau, +\infty)$, let
    $\Prob{Z\leq \beta} = e^{-q}>e^{-\xi_m}$, implying
    $\Prob{Z\geq \beta} = 1-e^{-q}$.
    By sharding and Poissonization, the number of shards in the range
    $[\tau, \beta]$ (resp. $\geq \beta$) is a Poisson random variable
    $\Delta$ (resp. $U_\beta)$ with rate $\xi_m - q$ (resp. $q$), see
    \secref{d:range}.  Critically, $U_\beta$ and $\Delta$ are
    independent.  Consider the event of there being at most $m$ values
    in the range $[\tau, \beta]$, and there being at least one value
    in $[\beta, +\infty)$.  The value $\cA(\XX) \geq \beta$ in that
    case. Hence, by the independence of $\Delta$ and $U_\beta$, we
    have
    \begin{equation*}
        \frac{ \Prob{\cA(\XX) \geq \beta}}{\Pr[Z\geq \beta]}
        \geq%
        \frac{ \Prob{(U_\beta \geq 1) \cap ( 0\leq \Delta \leq m) }}
        {\Pr[Z\geq \beta]}
        =
        \frac{ \Prob{ U_\beta \geq 1  }} {\Prob{Z\geq \beta} }
        \Prob{ 0\leq \Delta \leq m }
        =
        \Prob{ 0\leq \Delta \leq m }.
    \end{equation*}
    Now, we have
    \begin{equation*}
        \Prob{ 0\leq \Delta \leq m }
        =%
        \sum_{i=0}^m e^{-(\xi_m-q)} \frac{(\xi_m-q)^i}{i!}
        =%
        \frac{\Gamma(m+1, \xi_m-q)}{m!}
        \geq
        \frac{\Gamma(m+1, \xi_m)}{m!}
        =
        1 - e^{-\xi_m}.
    \end{equation*}
    by \Eqref{gamma:def}, \remref{decreasing} and \defref{reg-gamma}.

    The above implies that, for any $\beta \geq 0$, we have
    $\Prob{\cA(\XX) \geq \beta} \geq (1-e^{-\xi_m}) \Prob{Z\geq
       \beta}$, Namely, $\roe( \cA) \geq 1-e^{-\xi_m}$.
\end{proof}

\subsection{A matching upper bound for single-threshold algorithms}

To this end, we present an input sequence for which no algorithm can
do better for the oracle that answers if $X_i > \max_{j=i+1}^n X_j$,
and against an almighty adversary.

\paragraph{Input instance.}
The input instance $\I$ is a sequence made out of $n+2$ random
variables, for $n$ sufficiently large. Each of these random variables
can have only two values -- either zero or some positive
value. Specifically, for $\eps > 0$ sufficiently small (e.g.,
$\eps \ll 1/n^{4}$), let
\begin{equation*}
    X_1 = 1,
    \quad%
    \Prob{X_i = 1 } = \frac{\xi_m}{n},
    \quad\text{for}\quad
    i \in \IRY{2}{n+1},
    \quad
    \text{and}%
    \quad
    \Prob{X_{n+2} = \tfrac{1}{\eps} } = \eps.
\end{equation*}
By \lemref{xi:m}, $\xi_m \approx m/e$, as such, the expected number of
non-zero entries in this sequence is (roughly) $m/e + 1$.

\begin{lemma}
    For $\ZZ = \max_i X_i$, we have $\Ex{\ZZ} =2$ as $\eps \to 0$.
\end{lemma}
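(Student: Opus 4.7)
The plan is to condition on the value of $X_{n+2}$, since it is the only variable that can exceed $1$, and observe that $X_1 = 1$ deterministically forces $\ZZ \geq 1$ always. This reduces the computation to a two-case split.

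Concretely, I would first note that for every $i \in \IRY{1}{n+1}$, the variable $X_i$ takes values in $\set{0, 1}$, and $X_1 = 1$ with probability $1$. Thus $\max_{i \in \IRY{1}{n+1}} X_i = 1$ deterministically, and the only way for $\ZZ$ to exceed $1$ is for $X_{n+2}$ to realize its nonzero value $1/\eps$. Splitting on this event gives
\[
    \Ex{\ZZ}
    =
    \Prob{X_{n+2} = 1/\eps} \cdot \frac{1}{\eps}
    +
    \Prob{X_{n+2} = 0} \cdot 1
    =
    \eps \cdot \frac{1}{\eps} + (1-\eps) \cdot 1
    =
    2 - \eps,
\]
which tends to $2$ as $\eps \to 0$.

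There is no real obstacle here; the only thing to be slightly careful about is noting that $X_1 \equiv 1$ makes the contribution of $\max_{i \in \IRY{2}{n+1}} X_i$ irrelevant to $\Ex{\ZZ}$, so one does not need to invoke any estimate on the $\xi_m/n$-Bernoullis of the ``Poissonized bulk'' when computing the expectation. (Those variables will, of course, matter for the subsequent upper-bound argument on what the algorithm can extract, but not for $\Ex{\ZZ}$ itself.)
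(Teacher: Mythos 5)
Your proposal is correct and matches the paper's argument: the paper likewise observes that $\max_{i \le n+1} X_i = 1$ deterministically (since $X_1 = 1$) and computes $\Ex{\ZZ} = (1/\eps)\cdot\eps + (1-\eps)\cdot 1 = 2-\eps \to 2$. Nothing to add.
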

\begin{proof}
    Let $\ZZ' = \max_{i\in \IRX{n+1} } X_i$.  Observe that $\ZZ' =
    1$. As such, for $\ZZ = \max(\ZZ', X_{n+2})$, we have
    \begin{math}
        \Ex{\ZZ} =%
        \Ex{\max\nolimits_i X_i}%
        =%
        (1/\eps) \eps + (1 - \eps) \Ex{\ZZ'}
        \xrightarrow[\eps\rightarrow 0]{} 2 .
    \end{math}
\end{proof}

\newcommand{\IX}{\widehat{X}}%

\begin{observation}
    Let $\IX_i$ be an indicator variable for the event that $X_i =
    1$. For sufficiently large $n$, $\nabla = \sum_{i=2}^{n+1} \IX_i$
    has a binomial distribution that can be well approximated by a
    Poisson distribution (see \thmref{l:e:cam}) with rate $\xi_m$.
    That is,
    \begin{math}
        \displaystyle %
        \lim_{n\rightarrow \infty} \Prob{\bigl.\nabla = k} =%
        e^{-\xi_m} \frac{\prn{\xi_m}^k}{k!}.
    \end{math}
\end{observation}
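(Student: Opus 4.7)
The plan is to apply Le Cam's theorem (\thmref{l:e:cam}), which is already set up in the paper, to the sum of the indicator variables $\IX_2, \ldots, \IX_{n+1}$.

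First, I would observe that by construction, each $\IX_i$ is an independent Bernoulli random variable with $\Prob{\IX_i = 1} = \xi_m/n$, so $\nabla = \sum_{i=2}^{n+1} \IX_i$ has exactly the binomial distribution $\mathrm{bin}(n, \xi_m/n)$. Let $W \sim \PoisX{\xi_m}$ be an independent Poisson variable with rate $\xi_m = \sum_{i=2}^{n+1} \xi_m/n$.

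Next, I would invoke Le Cam's theorem on $\nabla$ and $W$: for every $T \subseteq \{0, 1, \ldots, n\}$,
\begin{equation*}
    \abs{\Prob{\nabla \in T} - \Prob{W \in T}}
    \leq
    2 \sum_{i=2}^{n+1} \prn{\frac{\xi_m}{n}}^2
    =
    \frac{2 \xi_m^2}{n}.
\end{equation*}
Applied to the singleton $T = \{k\}$, this gives
\begin{equation*}
    \abs{\Prob{\nabla = k} - e^{-\xi_m} \frac{\xi_m^k}{k!}}
    \leq
    \frac{2 \xi_m^2}{n}.
\end{equation*}
Since $\xi_m$ is a fixed constant depending only on $m$, the right-hand side tends to $0$ as $n \to \infty$, and taking the limit yields the claimed equality. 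There is no real obstacle here: the only potential subtlety is being careful that $\xi_m$ is treated as fixed while $n \to \infty$, which keeps the Le Cam error bound $2\xi_m^2/n$ vanishing. Everything else is a direct instantiation of standard facts already established in the preceding subsections.
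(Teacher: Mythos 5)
Your proposal is correct and is exactly the argument the paper intends: the observation itself cites Le Cam's theorem (\thmref{l:e:cam}), and your instantiation with $p_i = \xi_m/n$, giving an error of $2\xi_m^2/n \to 0$ for fixed $m$, is the standard one-line justification the paper leaves implicit. No gaps.
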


Observe that
$\lim_{n\rightarrow \infty} \Prob{\nabla \leq k } = \sum_{i=0}^k
e^{-\xi_m} \frac{\prn{\xi_m}^i}{i!} = \QY{k+1}{\xi_m}$.  For
simplicity of exposition, we will assume $n\to \infty$ in the
following analysis and thus $\Prob{\nabla \leq k } = \QY{k+1}{\xi_m}$,
see \defnref{q:function}.

\begin{theorem}
    \thmlab{tightness-noniid}
    Consider any choice of $m \geq 1$, and $\delta > 0$, and the above
    input instance $\I$ formed by a sequence of \noniid random
    variables. Then, for any algorithm, against the almighty adversary
    (see \defref{mighty}), we have
    $\cA \in \cM(\oracle_m, \noniid, \roe)$ for $\I$, we have
    $\roe(\cA) \leq 1 - e^{-\xi_m} + \delta$.
\end{theorem}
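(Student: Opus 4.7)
The plan is to condition on $X_{n+2}$ and reduce the bound to a combinatorial optimization about the Poisson-like count $\nabla$. For any $\cA$, let
\begin{equation*}
    q_0 = \ProbCond{\cA \text{ stops at a } 1}{X_{n+2} = 0}, \qquad q_1 = \ProbCond{\cA \text{ stops at a } 1}{X_{n+2} = 1/\eps}.
\end{equation*}
We may assume w.l.o.g. that $\cA$ never stops at a $0$ and always accepts $X_{n+2}$ if it reaches it, so a direct case split gives $\Ex{\cA} \leq 1 + (1-\eps)(q_0 - q_1)$, whereas $\Ex{\ZZ} = 2 - \eps$. It thus suffices to prove $q_0 - q_1 \leq 1 - 2 e^{-\xi_m}$, since this yields $\roe(\cA) \leq (2 - 2 e^{-\xi_m})/(2 - \eps) = 1 - e^{-\xi_m} + \BigO\prn{\eps}$, which is at most $1 - e^{-\xi_m} + \delta$ once $\eps$ is chosen small enough.

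The second step is a coupling argument. I would couple the two conditional worlds on a common realization of $X_1, \dots, X_{n+1}$, so that $\cA$ observes an identical sequence of $0$'s and $1$'s in both worlds. An oracle query at some $X_i = 1$ returns \NO in both worlds, \emph{except} at the last $1$ in positions $1, \dots, n+1$: there it returns \YES in world $0$ (no later realization exceeds $1$) and \NO in world $1$ (the trailing $X_{n+2} = 1/\eps$ dominates). Thus the two traces remain synchronized until the first query at the last $1$, after which world $0$ stops at value $1$ while world $1$ continues to $X_{n+2}$. This identifies $q_0 - q_1 = \Prob{\cB}$, where $\cB$ is the event that $\cA$ reaches the last $1$ without stopping earlier and queries there.

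To bound $\Prob{\cB}$, observe that since every live oracle response is \NO, $\cA$'s strategy reduces to a deterministic plan that at the $k$-th $1$ encountered performs one of $\{$Take, Query, Skip$\}$, with at most $m$ Queries (randomization does not help because $\Ex{\cA}$ is linear in the mixing distribution). Let $T \subseteq \NN$ be the set of Query ranks and let $N = 1 + \nabla$ be the total number of $1$'s in positions $1, \dots, n+1$; on $\cB$ no Take rank can be $\leq N$, so $\Prob{\cB} \leq \Prob{N \in T}$. As $n \to \infty$, $\nabla \to \PoisX{\xi_m}$, whose pmf is unimodal, so the optimal $T$ is a contiguous window of length $m$. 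I would then identify the optimum as $T = \IRY{2}{m+1}$ by comparing neighbors: the shift $\IRY{1}{m} \to \IRY{2}{m+1}$ changes the sum by $\Prob{N = m+1} - \Prob{N = 1} = e^{-\xi_m}\pth{\xi_m^m/m! - 1} > 0$ via $\xi_m > (m!)^{1/m}$, while the shift $\IRY{2}{m+1} \to \IRY{3}{m+2}$ changes it by $e^{-\xi_m}\xi_m\pth{\xi_m^m/(m+1)! - 1} < 0$ via $\xi_m < ((m+1)!)^{1/(m+1)}$; both bounds come from \lemref{upper-bound-seq}. By the defining equation of $\xi_m$ in \defref{reg-gamma},
\begin{equation*}
    \Prob{\cB} \leq \Prob{1 \leq \nabla \leq m} = (1 - e^{-\xi_m}) - e^{-\xi_m} = 1 - 2e^{-\xi_m}.
\end{equation*}

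The main obstacle is this last optimization step: it requires pinning the optimal window exactly to $\IRY{2}{m+1}$, and this is precisely where the two-sided bound $(m!)^{1/m} < \xi_m < ((m+1)!)^{1/(m+1)}$ enters. Intuitively, the algorithm should neither waste a query on the deterministic $X_1$ (whose rank is always $1$) nor leave the high-probability tail of $\nabla$ uncovered, and $\xi_m$ is calibrated so that covering exactly ranks $\IRY{2}{m+1}$ yields a mass of $1 - 2e^{-\xi_m}$, matching the ratio $1 - e^{-\xi_m}$ from the lower bound in \thmref{noniid-cr-asymptotic}.
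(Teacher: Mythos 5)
Your proof is correct and arrives at the same bound, but it is organized differently from the paper's argument and is, in places, cleaner. The paper explicitly identifies the almighty adversary's worst-case ordering (stack the nonzero $X_i$'s right after $X_1$, then zeros, then $X_{n+2}$), then splits into two algorithms $\cA_1$ (skip $X_1$, spend all $m$ queries in $\IRY{2}{n+1}$) and $\cA_2$ (query $X_1$, spend $m-1$ queries later), evaluates both Poisson sums including a separate ``bonus'' term, and invokes \lemref{needed:no:positive} to pin the optimal query set to $\IRY{2}{m+1}$. You instead condition on $X_{n+2}$, couple the two conditional worlds, and reduce the whole upper bound to a single quantity $q_0-q_1 = \Prob{\cB}=\Prob{N\in T}$, which unifies the paper's two cases (whether rank $1 = X_1$ is in $T$ or not is just a free choice in the optimization), and you then appeal to \lemref{upper-bound-seq} for the neighbor comparisons rather than \lemref{needed:no:positive}. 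The decomposition $\Ex{\cA}=1+(1-\eps)(q_0-q_1)$ and $\Ex{\ZZ}=2-\eps$ is verified correctly, and the identification $\Prob{1\le\nabla\le m}=1-2e^{-\xi_m}$ via \defref{reg-gamma} matches.

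Two points you should tighten. First, the coupling implicitly assumes the almighty adversary plays the \emph{same} permutation of $X_1,\dots,X_{n+1}$ (with $X_{n+2}$ last) in both conditional worlds; since the almighty adversary may choose its ordering as a function of all realizations including $X_{n+2}$, you should state explicitly that you commit the adversary to the strategy of stacking the ones first with $X_{n+2}$ last, and that this is the ordering under which the bound is derived (this is the same ordering the paper uses). Second, the step ``local maximum $\Rightarrow$ global maximum'' needs more than unimodality of the \emph{pmf}: your two neighbor comparisons show $W_1<W_2>W_3$ where $W_a=\sum_{k=a}^{a+m-1}\Prob{N=k}$, and to conclude $W_2=\max_a W_a$ you need the window sums $W_a$ themselves to be unimodal. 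This follows from log-concavity of the Poisson pmf (the ratio $\Prob{N=a+m}/\Prob{N=a}=\xi_m^m/\prn{a(a+1)\cdots(a+m-1)}$ is decreasing in $a$, so $W_{a+1}-W_a$ changes sign at most once), which is a slightly stronger property than unimodality and should be stated. The paper sidesteps this by proving $W_2\ge W_k$ for \emph{every} $k$ directly via \lemref{needed:no:positive}, which is why it does not need the log-concavity observation.
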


\begin{proof}
    First, we discuss the strategy that the almighty adversary
    adopts. The adversary first observes all values. Suppose $k$
    nonzero values show up from $X_2, ..., X_n$ at indices
    $U = \{i_1,... , i_k \}$, and all other $n-k$ values from
    $X_2, \dots, X_{n+1}$ at indices
    $B=\{\hat{i}_1, \dots, \hat{i}_{n-k}\}$ are zero. The adversary
    provides the random variables in the order
    $X_{\sigma(1)}, \dots, X_{\sigma(n+2)}$ where $\sigma$ is defined
    as $\sigma(1)=1$, $\sigma(j)=i_j, j=2,\dots, k+1$,
    $\sigma(j)=\hat{i}_j$ and finally $\sigma(n+2)=n+2$. In other
    words, the adversary stacks all the $k$ non zero values from
    $X_2, \dots, X_{n+1}$ starting from index $2$ to index $k+1$.

    Now we consider any algorithm for this setting. We strengthen the
    algorithm by telling it the almighty adversary's strategy; this
    can never reduce its expected reward. Hence, the algorithm knows
    it will see $X_{\sigma(1)}=X_1$, then a stream of $k$ ones (where
    it does not know $k$), then $n-k$ zeros, and finally
    $X_{\sigma(n+2)}=X_{n+2}$. The algorithm has two initial decisions
    to make; either query at $X_1$ and continue (if the answer is \NO)
    to $X_{\sigma(2)}, \dots X_{\sigma(n+1)}$ with $m-1$ oracle calls,
    or it can just proceed to $X_{\sigma(2)}, \dots X_{\sigma(n+1)}$
    with $m$ oracle calls. Thus, the only difference in the two cases
    is that in the former, we have only $m-1$ oracle calls for
    $X_{\sigma(2)}, \dots, X_{\sigma(n+1)}$ but we get an expected
    reward of $1$ if $X_{\sigma(2)}= \dots=X_{\sigma(n+2)}=0$, and in
    the later case, we get $m$ oracle calls for
    $X_{\sigma(2)}, \dots, X_{\sigma(n+1)}$, but we get $0$ reward if
    $X_{\sigma(2)}=\dots = X_{\sigma(n+2)}=0$.

    Let $k$ be the number of non-zeros in $X_{2}, \dots, X_{n+1}$
    (i.e., $X_{\sigma(k+1)}$ is the last $1$). When the algorithm
    starts reading the stream of $1$s from
    $X_{\sigma(2)}, ..., X_{\sigma(n+1)}$, it needs to decide indices
    $S\subseteq \IRY{2}{n+1}, \cardin{S}\leq m$ where it will expend
    the oracle call. Further, it is suboptimal to use the oracle at a
    $0$ value, since regardless, the algorithm will receive a value of
    $0$ in the end if it fails. Consider what happens if the algorithm
    decides to query at index $i\in \IRY{2}{n+1}$ with
    $X_{\sigma(i)}=1$. If $X_{\sigma(i+1)}=\dots X_{\sigma(n+1)}=0$,
    then the algorithm gets on expectation
    $1/\eps \cdot \eps + (1-\eps) \cdot 1 \xrightarrow[\eps\rightarrow
    0]{} 2$ reward on expectation. However, if $X_{\sigma(i+1)}=1$,
    then the oracle will return \NO because
    $1=X_{\sigma(i)} \not > \max(X_{\sigma(i+1)}, \dots,
    X_{\sigma(n+2)})$. On the other hand, if the algorithm does not
    query at index $k+1$ (i.e., $(k+1)\not\in S$), then the algorithm
    gets on expectation
    $\Ex{X_{\sigma(n+2)}} = \Ex{X_{n+2}}=1/\eps \cdot \eps = 1$.

    Hence, the crucial observation is that an algorithm starting at
    $X_{\sigma(2)}$ that uses its query calls at indices
    $S \subseteq \IRY{2}{n+1}$ gets on expectation $2$ if and only if
    $(k+1) \in S$, and $1$ otherwise. Thus, for algorithm $\cA_1$ that
    skips $X_{\sigma(1)}$ and uses its oracles at indices
    $S, \cardin{S}=m$, it satisfies
    \begin{align*}
      \Ex{\cA_1} &= 2\cdot \sum_{i\geq 0, (i+1)\in S} e^{-\xi_m} \frac{\xi_m^{i}}{i!} + 1\cdot \sum_{i\geq 0, (i+1)\notin S} e^{-\xi_m} \frac{\xi_m^i}{i!} \\
                 &=  \sum_{i\geq 0} e^{-\xi_m} \frac{\xi_m^{i}}{i!} +  \sum_{i\geq 0, (i+1)\in S} e^{-\xi_m} \frac{\xi_m^{i}}{i!} \\
                 &= 1 +  \sum_{(i+1)\in S} e^{-\xi_m} \frac{\xi_m^{i}}{i!}
    \end{align*}
    On the other hand, for algorithm $\cA_2$ that uses its oracle at
    $X_{\sigma(1)}$ and uses its remaining oracles at indices
    $S'\in \IRY{2}{n+1}, \cardin{S'}=m-1$, it gets an extra benefit of
    getting a reward with expected value $2$ (as $\eps \to 0$) if
    $X_{\sigma(2)}=\dots = X_{\sigma(n+1)}=0$. Hence, it satisfies
    \begin{align*}
      \Ex{\cA_2} &= \pth{e^{-\xi_m}\cdot  2} +  \pth{2\cdot \sum_{i\geq 0, (i+1)\in S'} e^{-\xi_m} \frac{\xi_m^{i}}{i!}} + \pth{ 1\cdot \sum_{i\geq 1, (i+1)\notin S'} e^{-\xi_m} \frac{\xi_m^i}{i!}} \\
                 &= \pth{\sum_{i\geq 0}e^{-\xi_m}\frac{\xi_m^i}{i!}} + e^{-\xi_m} +  \sum_{(i+1)\in S'} e^{-\xi_m}\frac{\xi_m^i}{i!} \\
                 &= 1 + e^{-\xi_m} + \sum_{(i+1)\in S'} e^{-\xi_m}\frac{\xi_m^i}{i!}.
    \end{align*}

    First, we show that the expression
    $\sum_{(i+1)\in S} e^{-\xi_m}\frac{\xi_m^i}{i!}$ subject to
    $S\subseteq \IRY{2}{n+1}, |S|=m$ is maximized for
    $S^\ast=\IRY{2}{m+1}$. Note that it is easy to verify that for a
    Poisson distribution with rate $\lambda$, its probability mass
    function $e^{-\lambda} \lambda^i/i!$ is increasing for
    $i<\lambda$, and decreasing after $i>\lambda$. Hence, the optimal
    $S^\ast = \IRY{k}{k+m-1}$ for some $k\geq 2$ that ``covers'' the
    rate $\xi_m$ (this is the region with the most mass for a Poisson
    distribution). The optimal choice of $k$ is $k=2$ because
    \begin{align*}
      \sum_{i=1}^{m} e^{-\xi_m}\frac{\xi_m^i}{i!} - \sum_{i=k-1}^{k+m-2} e^{-\xi_m}\frac{\xi_m^i}{i!} &= \sum_{i=1}^{k-2}e^{-\xi_m}\frac{\xi_m^i}{i!} - \sum_{i=m+1}^{m+k-2}e^{-\xi_m}\frac{\xi_m^i}{i!} \geq 0,
    \end{align*}
    where the last inequality holds by
    \lemref{needed:no:positive}. Similarly, $k=2$ is optimal for when
    $|S|=m-1$. Hence, we get the inequalities
    \begin{align*}
      \Ex{\cA_1} &\leq 1 + \sum_{i=1}^m e^{-\xi_m}\frac{\xi_m^i}{i!} = 1 +\QY{m+1}{\xi_m} - e^{-\xi_m},\\
      \Ex{\cA_2} &\leq 1 + e^{-\xi_m} + \sum_{i=1}^{m-1} e^{-\xi_m}\frac{\xi_m^i}{i!} = 1 + \QY{m}{\xi_m}.
    \end{align*}
    Thus, we have
    \begin{align*}
      \max(\Ex{\cA_1}, \Ex{\cA_2})
      &\leq
        1 - e^{-\xi_m}+
        \QY{m}{\xi_m}+
        e^{-\xi_m}\max \{ 1, \frac{\xi_m^m}{m!}  \}
    \end{align*}
    But recall from \lemref{upper-bound-seq} that $\xi_m^m \geq m!$,
    thus
    \begin{align*}
      \max(\Ex{\cA_1}, \Ex{\cA_2})%
      &\leq%
        1 - e^{-\xi_m}+
        \QY{m}{\xi_m}+ e^{-\xi_m}\cdot
        \frac{\xi_m^m}{m!}
      \\&
      =%
      1 - e^{-\xi_m} +  \QY{m+1}{\xi_m}  \\
      &= 2 \pth{ 1 - e^{-\xi_m} }.
    \end{align*}
    Therefore, the competitive ratio of every algorithm is
    \[
        \roe \leq \frac{2\prn{1 - e^{-\xi_m}}}{2} = 1 - e^{-\xi_m}.
    \]
\end{proof}

\begin{remark}
    For a weaker adversary (i.e., offline adversary), one can so very
    slightly better than \thmref{tightness-noniid}.  See
    \apndref{app:final:discussion} for details.
\end{remark}

\section{The \iidOnly settings}
\seclab{iid}

Motivated by the early work of \cite{gm-rms-66} for the
\textsc{Top-$1$-of-$m$} model, in this section we study the \iid
setting and the \pbm objective. As a warm-up, we take a look at the
\iid setting with the \pbm objective and the case of $m = 1$,
providing a simple single-threshold algorithm.

\subsection{A single-threshold algorithm for
   \texorpdfstring{\(m = 1\)}{m = 1}}%

Our single-threshold algorithm $\cA_p$ for
$\cM(\oracle_1, \iid, \pbm)$ selects a threshold $\tau$ equal to the
$p$\th quantile of the given distribution $\cD$, for some
$p \in [0,1]$. In other words, $\tau$ is set such that
$p = \Prob{X_i \geq \tau}$. The first time the algorithm observes a
realization above $\tau$, it queries the oracle to see whether the
realization should be selected or not. If it continues, it simply
accepts the first value encountered above the observed realization on
which it queried $\oracle$.

\begin{lemma}
    There exists $p \in [0, 1]$ such that $\cA_p$ selects the maximum
    realization with probability at least $0.797$ in the
    $\cM(\oracle_1, \iid, \pbm)$ model for large $n$.
\end{lemma}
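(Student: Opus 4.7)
The plan is to set $\tau$ as the $p$\th quantile with $p = \lambda/n$ for a constant $\lambda$ to be optimized, and reduce the analysis to a Poisson calculation. Let $K$ be the number of indices $i$ with $X_i \geq \tau$. Since the $X_i$ are \iid with $\Prob{X_i \geq \tau} = \lambda/n$, we have $K \sim \mathrm{Binomial}(n, \lambda/n)$, which converges to $\Pois(\lambda)$ as $n \to \infty$ with total-variation error $O(1/n)$ by Le Cam's theorem. When $K = 0$ the algorithm makes no selection and fails, so only $k \geq 1$ contributes. The main task is to compute $s_k := \ProbCond{\text{success}}{K = k}$.

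For $k \geq 1$, let $Y_1, \ldots, Y_k$ denote the realizations above $\tau$ listed in arrival order; by the \iid and continuity assumptions, their rank permutation is uniform on the symmetric group. Let $j^\star$ be the position of the maximum among $Y_1, \ldots, Y_k$, which coincides with $\ZZ$. If $j^\star = 1$ (probability $1/k$) then $\Oracle$ returns \YES at $Y_1$ and $\cA_p$ selects $Y_1 = \ZZ$. If $j^\star \geq 2$ then $\Oracle$ returns \NO; since every $X_i$ below $\tau$ is strictly below $Y_1$, the algorithm ends up selecting the first $i \geq 2$ with $Y_i > Y_1$. This selected index equals $j^\star$ iff $Y_1 > Y_2, \ldots, Y_{j^\star - 1}$, which by symmetry has conditional probability $1/(j^\star - 1)$. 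Summing,
\[
s_k \;=\; \frac{1}{k} + \sum_{j=2}^{k}\frac{1}{k}\cdot\frac{1}{j-1} \;=\; \frac{1 + H_{k-1}}{k},
\]
where $H_{k-1} = \sum_{i=1}^{k-1} 1/i$ is the $(k{-}1)$\th harmonic number.

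Combining, the probability that $\cA_p$ selects the maximum converges, as $n \to \infty$, to
\[
f(\lambda) \;=\; \sum_{k=1}^{\infty} e^{-\lambda}\,\frac{\lambda^k}{k!}\cdot\frac{1 + H_{k-1}}{k}.
\]
I would then optimize $\lambda$: truncating the series and bounding the tail using $s_k \leq 1$ together with Poisson tail estimates makes $f$ computable to arbitrary precision. The maximum is attained near $\lambda^\star \approx 2.5$, and a direct evaluation yields $f(\lambda^\star) \geq 0.797$; setting $p = \lambda^\star/n$ then gives the claim for all sufficiently large $n$.

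The only real obstacle is the numerical verification of $\max_\lambda f(\lambda) \geq 0.797$. Once the clean closed form $s_k = (1 + H_{k-1})/k$ is established, the Binomial-to-Poisson passage is handled by the same Le Cam argument invoked elsewhere in the paper, and the remaining work is a direct arithmetic calculation for one well-chosen value of $\lambda$.
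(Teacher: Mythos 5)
Your proposal is correct and follows essentially the same route as the paper: a single threshold at the $q/n$ quantile, the closed form $(1+H_{k-1})/k$ for the conditional success probability, Le Cam's theorem to pass to a Poisson, and numerical optimization of the rate (the paper finds $q\approx 2.435$ giving $\approx 0.798$). The only cosmetic difference is that you condition on the arrival position of the maximum while the paper conditions on the rank of the first above-threshold value; both symmetry arguments yield the identical harmonic-number formula.
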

\begin{proof}
    Let $Y$ be the total number of realizations above $\tau$, and
    $i_1 < i_2 < \dots < i_Y$ be the indices of the random variables
    above $\tau$, i.e. $X_{i_t} > \tau$, for $t = 1, \ldots,
    Y$. Furthermore, let $r_t$ be the rank of $X_{i_t}$ in
    $\cX = \set{X_{i_1}, \dots, X_{i_Y}}$, i.e. the number $k$ such
    that $X_{i_t}$ is the $k$\th largest number in $\cX$, and $Z$ be
    the maximum realization of $X_1, \dots, X_n$.

    $X_{i_1}$ is the first realization we observe above $\tau$. Notice
    that if $r_1 = 1$ or $r_1 = 2$ then the algorithm always selects
    the maximum realization $Z$. In other words, given that $Y = 1$ or
    $Y = 2$, the algorithm selects $Z$ with probability $1$. Consider
    the case $Y > 2$. Again, if $r_1 \leq 2$, the algorithm selects
    $Z$ with probability $1$. Otherwise, if $r_1 > 2$, the algorithm
    returns $Z$ if and only if for all realizations above $\tau$ that
    appear after $X_{i_1}$ and are also larger than $X_{i_1}$, the
    first to encounter is $Z$. In other words, for the algorithm to
    succeed in this case, it must be that among the $r_1 - 1$ values
    of rank smaller than $r_1$, the first one in the arrival order is
    the element of rank $1$. Since the random variables are \iid, the
    probability of this event is exactly $\f{1}{r_1 - 1}$.

    Let $j$ be the first index such that $X_{i_j} > X_{i_1}$, and
    $\alpha(Y) = \ProbCond{\cA \text{ selects } \ZZ }{
       Y}$. Conditioned on $Y \geq 3$, the probability that the
    algorithm selects $Z$ is
    \begin{align*}
      \alpha \prnCond{Y }{ Y \geq 3}
      &=
        \Pr[r_1 = 1] + \Pr[r_1 = 2] + \sum_{t = 3}^Y {\Pr[r_1 = t]
        \ProbCond{r_j = 1 }{ r_1 = t}}
      \\&
      =
      \frac{2}{Y} + \sum_{t=3}^Y \frac{\ProbCond{r_z = 1 }{ r_1 = t}}{Y} \\
      &= \frac{1}{Y} \pth{2 + \sum_{t = 3}^Y \ProbCond{r_z = 1}{ r_1 = t}} \\
      &= \frac{1}{Y} \pth{2 + \sum_{t = 3}^Y \frac{1}{t - 1}} \\
      &= \frac{1}{Y} \pth{1 + \sum_{t = 1}^{Y-1} \frac{1}{t}} \\
      &= \frac{1}{Y} \pth{1 + H_{Y-1}},
    \end{align*}
    where $H_n$ denotes the $n$\th harmonic number. Recall also that
    $\alpha\prnCond{Y }{ Y = 1} = \alpha\prnCond{Y}{ Y = 2} = 1$.

    Next, we estimate $\Pr[Y = i]$, by approximating $Y$ with a
    Poisson distribution via \thmrefY{l:e:cam}{Le Cam's theorem}. Let
    $\delta_i = \abs{\binom{n}{i} p^i (1-p)^{n-i} - e^{-np}
       \frac{\pth{np}^i}{i!}}$. The idea is to set $p$ such that
    $np = q$, where $q\geq 1$ is a fixed constant. We know that
    $\Pr[Y = i] = \binom{n}{i} p^i (1-p)^{n-i}$, and thus, by
    \thmref{l:e:cam}, we have
    \[
        \sum_{i = 0}^\infty {\delta_i} = \sum_{i = 0}^\infty
        {\abs{\Pr[Y = i] - e^{-np} \frac{\pth{np}^i}{i!}}} = \sum_{i =
           0}^\infty {\abs{\Pr[Y = i] - e^{-q} \frac{\pth{q}^i}{i!}}}
        \leq \frac{2 q p}{\max\set{1,q}} \leq 2 p = \frac{2q}{n}.
    \]
    Overall, the probability that $\cA$ selects $Z$ is
    \begin{align}
      \alpha(Y)
      &=
        \sum_{i = 0}^n {\Pr[Y = i]
        \cdot \alpha\prnCond{Y }{ Y = i}} \nonumber \\
      &= \Pr[Y = 1] + \sum_{i = 2}^n {\Pr[Y = i] \cdot
        \alpha\prnCond{Y}{ Y = i}} \nonumber \\
      &\geq n p (1-p)^{(n-1)} + \sum_{i = 2}^n {\pth{e^{-q}
        \frac{q^i}{i!} -
        \delta_i} \cdot \alpha\prnCond{Y }{ Y = i}}, \nonumber
    \end{align}
    where the last inequality follows by the definition of
    $\delta_i$. Thus,
    \begin{align}
      \alpha(Y)
      &=
        q (1 - q/n)^{(n-1)} + \sum_{i = 2}^n {e^{-q} \frac{q^i}{i!}
        \cdot
        \alpha\prnCond{Y }{ Y = i}} - \sum_{i = 2}^n {\delta_i \cdot
        \alpha\prnCond{Y }{ Y = i}} \nonumber
      \\
      &\geq
        q (1 - q/n)^{(n-1)} + \sum_{i = 2}^n {e^{-q} \frac{q^i}{i!}
        \frac{1 +
        H_{i-1}}{i}} - \sum_{i = 2}^n {\delta_i} \nonumber \\
      \eqlab{eq:lecam-forall-n}
      &\geq
        q (1 - q/n)^{(n-1)} + e^{-q}
        \sum_{i = 2}^n {\frac{q^i \pth{1 + H_{i-1}}}{i! \cdot i}}
        - \frac{2q}{n}.
    \end{align}
    It is not too difficult to see after some calculations that, as
    $n \to \infty$, \Eqref{eq:lecam-forall-n} is maximized for
    $q \approx 2.435$, yielding $\alpha(Y) \approx 0.798$.
\end{proof}

It is easy to see that simply setting $q = 2$, which corresponds to
$p = \f{2}{n}$ and $\tau$ being the $\f{2}{n}$\th quantile of $\cD$,
yields $\alpha(Y) > 0.5801$ for all $n \geq 20$. Thus, our simple
single-threshold algorithm, augmented with a single oracle call,
beats, even for small $n$, the optimal algorithm for the \iid prophet
inequality which uses different thresholds per distribution and
achieves a probability of success approximately $0.5801$
\cite{gm-rms-66}.

Since the worst-case probability of $\approx 0.5801$ by
\cite{gm-rms-66} is achieved for $n \to \infty$, one might be
interested in the asymptotic behavior of the probability of our
algorithm, $\alpha(Y)$, for large $n$.

\subsection{A single-threshold algorithm for general
   \texorpdfstring{\(m\)}{m}}

As we saw in the previous section, even for a simple, single-threshold
algorithm, the analysis of the winning probability gets tedious
quickly. In this section, we generalize our single-threshold algorithm
to the case of general $m$, and use the fact that the maximum of a
uniformly random permutation of $n$ values changes $\BigO\pth{\log{n}}$
times with high probability to obtain a guarantee on the winning
probability that is super-exponential with respect to $m$.

As before, our algorithm selects a threshold $\tau$ such that
$p = \Prob{X \geq \tau}$ and every time the algorithm observes a
realization above $\tau$, it uses an oracle query and asks $\oracle$
if the realization should be selected or not. If not, then it updates
the threshold to the new higher value. If the algorithm runs out of
oracle calls, then it selects the first element above the current
threshold $\tau$ that is encounters, if any. In other words, the
algorithm uses the oracle calls greedily for all realizations above
$\tau$.

\begin{theorem}
    \thmlab{prob-asymptotic}%
    For sufficiently large $m, n$, and an instance of
    $\cM(\oracle_m, \iid, \pbm)$, there exists an algorithm that
    selects the maximum realization with probability at least
    $1 - \BigO\pth{m^{-\f{m}{5}}}$.
\end{theorem}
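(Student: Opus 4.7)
The plan is to decompose the failure of the algorithm into two events and calibrate the threshold $\tau$ so that each has probability $\BigO(m^{-m/5})$. After each \NO answer the algorithm raises its working threshold to the queried value, so it queries exactly at the positions of the \emph{upper records} of the subsequence $v_1, v_2, \ldots, v_Y$ of realizations above $\tau$ (listed in arrival order). Let $R$ denote the number of such records. The algorithm succeeds whenever the global maximum exceeds $\tau$ and $R \leq m+1$: if $R \leq m$ then some query returns \YES on the maximum; if $R = m+1$ the algorithm exhausts its queries on the first $m$ records and then accepts the next value above the final working threshold, which is precisely the $(m+1)$-th record, i.e., the global maximum. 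Thus the failure event decomposes as (a) the maximum lies below $\tau$, or (b) $R \geq m+2$.

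I would set $\tau$ to be the $p$-quantile of the common distribution with $p = \frac{m \ln m}{5 n}$. Event (a) then has probability
\[
   (1-p)^n \leq e^{-np} = m^{-m/5}.
\]
For event (b), by exchangeability of i.i.d.\ variables, conditional on $Y = y$ the arrival order of the above-$\tau$ values is a uniformly random permutation, so $R$ given $Y = y$ is a sum of independent Bernoullis with success probabilities $1, 1/2, \ldots, 1/y$, and in particular $\ExCond{R}{Y = y} = H_y$. I would truncate at $Y^\ast = 20 m \ln m$: a Chernoff bound on $Y \sim \mathrm{Bin}(n, p)$ gives $\Prob{Y > Y^\ast} \leq m^{-\Omega(m)}$, while conditioned on $Y \leq Y^\ast$ we have $\ExCond{R}{Y} \leq H_{Y^\ast} = \BigO(\log m)$. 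The standard upper-tail Chernoff bound for sums of independent Bernoullis then yields
\[
   \ProbCond{R \geq m+2}{Y \leq Y^\ast}
   \leq
   \pth{\frac{e\, H_{Y^\ast}}{m+2}}^{m+2}
   = m^{-(1-\SmallO(1))\, m},
\]
which is $\SmallO(m^{-m/5})$. Combining (a) and (b) gives the claimed $1 - \BigO(m^{-m/5})$ success probability.

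The main subtlety will be the calibration of $p$: event (a) forces $p$ to be large while event (b) forces it to be small, and the sweet spot $np \asymp m \log m$ makes (a) the dominant term at exactly $m^{-m/5}$ and leaves (b) super-exponentially smaller. A secondary but important point is that $R = m+1$ is already a \emph{winning} configuration, not only $R \leq m$; missing this observation would force an extra $\Prob{R = m+1}$ term into the bound, which is not negligible at this scaling. As in the $m=1$ warm-up, Le Cam's theorem could alternatively be used to replace the binomial analysis of $Y$ by a clean Poisson formulation, but for this bound a direct Chernoff argument on both $Y$ and $R$ suffices.
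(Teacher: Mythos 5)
Your proof is correct and follows essentially the same route as the paper: a single-threshold algorithm whose failure decomposes into (a) no realization exceeding $\tau$ and (b) more than $m+1$ upper records among the above-$\tau$ subsequence, with Chernoff bounds controlling both the count $Y$ of above-threshold values and the record count. The only substantive difference is the calibration of $p$: the paper sets $p = e^{\sqrt{m}}/n$, making event (a) doubly-exponentially small and event (b) the bottleneck (with $\Ex{M}\approx\sqrt{m}$ and a somewhat delicate application of the $\delta>e^2$ Chernoff bound), whereas you set $p = m\ln m/(5n)$, making event (a) the bottleneck at exactly $m^{-m/5}$ and leaving $\Ex{R}=\Theta(\log m)$ so that event (b) is super-polynomially smaller and the Chernoff step is cleaner (your calibration would in fact support a stronger exponent). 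One small point to make explicit: the step $\ProbCond{R \geq m+2}{Y \leq Y^\ast}\leq(eH_{Y^\ast}/(m+2))^{m+2}$ uses that the conditional law of $R$ given $Y=y$ is stochastically nondecreasing in $y$ (pad the Bernoulli sum with nonnegative terms), so the worst case over $y \leq Y^\ast$ is $y = Y^\ast$; you rely on this implicitly but it should be stated.
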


\begin{proof:e}{\thmref{prob-asymptotic}}{prob-asymptotic:proof}
    Let $L = e^{\sqrt{m}}$. The idea is to set $\tau$ so that
    $p = \Prob{X \geq \tau} = \f{L}{n}$. As before, let $Y$ be the
    number of realizations above $\tau$. By \thmref{chernoff}, we
    have
    \[
        \Pr[\abs{Y - L} \geq \delta L] \leq 2 e^{-\delta^2 L/3}.
    \]
    Setting $\delta = 1$ yields that $1 \leq Y \leq 2L$ with
    probability at least
    $1 - 2 e^{-L/3} = 1 - 2 e^{-\f{e^{\sqrt{m}}}{3}} \geq 1 -
    m^{-\f{m}{4}}$ for all $m$.

    Next, let $X'_1, \dots, X'_Y$ be the subsequence of all
    realizations larger than $\tau$, according to their arrival order,
    and let $Z_i = 1$ if $X'_i > \max_{j = 1}^{i - 1} X'_j$, in other
    words if $X'_i$ is larger than all previous realizations, and
    $Z_i = 0$ otherwise. Observe that $\Prob{Z_i = 1} = \f{1}{i}$,
    and that the random variables $Z_1, \dots, Z_n$ are
    independent. Furthermore, let $M = \sum_i Z_i$ be the number of
    times that the maximum realization changes in the sequence
    $X'_1, \dots, X'_Y$. Observe that if $M \leq m+1$, then $m$ oracle
    queries are sufficient for the algorithm to always select the
    maximum realization. Therefore, our goal is to bound the
    probability that this event happens.

    Conditioned on $1 \leq Y \leq 2L$, we have
    \[
        \Ex{M}
        =
        \sum_{i = 1}^{2L} {\frac{1}{i}} \leq \log\pth{2L} + 1
        \leq \sqrt{m} + 2.
    \]
    For $\delta = \f{m+2}{\Ex{M}} - 1$, we have
    \[
        \Pr[M \geq m + 2] = \Prob{M \geq \pth{1+\delta} \Ex{M}}.
    \]
    Notice that for $m \geq 98$, we have $\delta \geq e^2$, and thus,
    by \thmref{chernoff}, we obtain
    \[
        \Pr[M \geq m + 2] \leq e^{ -\f{\Ex{M} \delta\log{\delta}}{2} }
        \leq e^{- \frac{\pth{m - \sqrt{m}}\pth{\log\pth{m - \sqrt{m}}
                 - \f{\log\pth{m+2}}{2}}}{2} } \leq m^{-\f{m}{5}}.
    \]
    If we instead use the tight Chernoff bound in
    \thmref{chernoff}, we can show that
    $\Pr[M \geq m + 2] \leq m^{-\f{m}{4+\eps}}$ for all $m$ and
    $\eps > 0$.

    Putting everything together, for our algorithm to succeed, it
    suffices to have $1 \leq Y \leq 2L$ and $M \leq m + 1$, both of
    which happen together with probability at least
    $1 - \BigO\pth{m^{-\f{m}{5}}}$.
\end{proof:e}

\subsection{An (almost) tight upper bound}

Now that we have presented a simple, single-threshold algorithm for
the $\cM(\oracle_m, \iid, \pbm)$ setting, a reasonable question to ask
is how far it is from being optimal. As we show in this section, the
algorithm is asymptotically almost optimal.

\begin{theorem}
    \thmlab{prob-asymptotic-upper-bound}%
    There exists an instance of $\cM(\oracle_m, \iid, \pbm)$ for which
    no algorithm can select the maximum realization with probability
    greater than $1 - \BigO \pth{m^{-m}}$.
\end{theorem}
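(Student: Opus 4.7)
By \thmref{equivalence}, it suffices to construct an instance of $\cM(\prophet_{m+1}, \iid, \pbm)$ for which no online algorithm can select the maximum realization with probability exceeding $1 - \Omega(m^{-m})$. I would take any continuous iid distribution, for instance $X_1,\dots,X_n \sim \mathrm{Uniform}[0,1]$, and exploit the fact that the relative ranks form a uniformly random permutation. Consequently, the record indicators $I_i = \mathbb{1}\bigl[X_i > \max_{j<i} X_j\bigr]$ are independent Bernoullis with $\Pr[I_i = 1] = 1/i$, and any reasonable algorithm for $\prophet_{m+1}$ selects only records, since a non-record can never be the maximum. The problem therefore reduces to an online selection rule on the record process, where the algorithm has at most $m+1$ ``slots'' and succeeds iff the last record (which coincides with the maximum) is among the selected ones.

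Next, I would choose $n$ so as to make the record process stressful for the algorithm. A natural candidate is to pick $n$ large enough that, with non-negligible probability, there are more than $m+1$ records, forcing the algorithm to commit a discard without knowing whether the next observed record will end up being the global maximum. The core failure event I would target is that the last $m+1$ positions of the sequence are all records, i.e.\ $I_{n-m}=I_{n-m+1}=\dots=I_n=1$, whose probability equals $\prod_{i=n-m}^{n} \tfrac{1}{i} = \tfrac{(n-m-1)!}{n!}$. By tuning $n$ (e.g.\ $n = 2m$ so that this product is of the order $m^{-m}$), this event has probability $\Omega(m^{-m})$. Conditional on this event, the $m+1$ trailing records all look indistinguishable to the online algorithm at the moment they are revealed, and one can argue that no matter what deterministic or randomized strategy the algorithm uses on the preceding prefix, with constant conditional probability it has already exhausted slots on earlier records and therefore misses at least one of the final $m+1$ records, in particular the last one.

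Concretely, the steps I would execute in order are: (i) invoke \thmref{equivalence} to pass to $\prophet_{m+1}$; (ii) restrict attention, without loss of generality, to algorithms that select only records, and cast the problem as an optimal stopping problem on the independent indicators $I_1,\dots,I_n$ with the goal of including the last $1$; (iii) write down the Bellman recursion for the optimal success probability as a function of (position, slots remaining, current running maximum rank); (iv) bound this recursion from above by summing the failure contributions of the ``last-$m+1$-are-records'' event and showing it propagates an $\Omega(m^{-m})$ failure mass that no prefix strategy can compensate; (v) combine with the constant conditional failure to conclude.

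The main obstacle, as usual in upper-bound constructions in this area, is to rigorously argue that the optimal algorithm cannot avoid the failure event by cleverly interleaving its discards on the prefix with its discards in the tail; concretely, one has to rule out strategies that adaptively ``hold back'' slots until they see a suspicious cluster of records at the end. I would handle this by a symmetry/exchangeability argument on the suffix $I_{n-m},\dots,I_n$ conditional on the event above, together with an averaging argument over the position of the maximum. The $m^{-m}$ scaling drops out naturally from the product $\prod_{i=n-m}^{n} 1/i$, and matches the algorithmic bound of \thmref{prob-asymptotic} up to the constant in the exponent.
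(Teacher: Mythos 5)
There is a genuine gap, and it sits exactly where you flag ``the main obstacle.'' Your plan is to exhibit one bad event $E$ (the last $m+1$ positions are all records) of probability roughly $m^{-m}$ and then argue that \emph{every} algorithm fails with constant probability conditional on $E$. That conditional claim is false: the algorithm that ignores the entire prefix and reserves all $m+1$ selections for positions $n-m,\dots,n$ succeeds with conditional probability $1$ on $E$ (it selects all of the top $m+1$ values). Such an algorithm is of course terrible overall, but this shows that no argument of the form ``condition on a single unobservable bad event and derive constant conditional failure'' can close the proof; you are forced into a trade-off argument of the form ``either the algorithm spends slots on the prefix and then dies on $E$, or it hoards slots and then dies when the maximum lands in the prefix,'' and quantifying the second branch against adaptive, value-dependent strategies is the actual content of the proof, which your sketch defers to an unspecified ``symmetry/exchangeability argument.'' The paper avoids this global bookkeeping by localizing the dichotomy: it marks off the top quantile interval $B_k$ with $\Prob{X_i \in B_k} = m\log m/n$, so that $\Ex{Y} = m\log m$ realizations land there, and looks at the first running maximum the algorithm sees in $B_k$. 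If the algorithm declines to commit to it, it fails with probability at least $(1-m\log m/n)^n \approx e^{-m\log m}= m^{-m}$ because with that probability nothing else ever lands in $B_k$ and the declined value was the maximum; if it commits to every such running maximum, then with probability at least $(1-m^{-m/4})/(m+2)! \geq m^{-m}$ there are at least $m+2$ realizations in $B_k$ whose first $m+2$ arrive in increasing order, exhausting all queries before the maximum appears. Each branch costs $\BigOmega(m^{-m})$ \emph{at the moment the decision is made}, which is what makes the case analysis legitimate against adaptive algorithms.

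There is also a quantitative problem with your parameters: with $n = 2m$ the probability of your event is $\prod_{i=m}^{2m} \tfrac1i = \tfrac{(m-1)!}{(2m)!} \approx \tfrac{1}{m\sqrt{2}}\bigl(\tfrac{e}{4}\bigr)^{m} m^{-m}$, which is smaller than $m^{-m}$ by the exponential factor $(e/4)^m$. So even if the conditional failure claim held with constant probability, you would only get failure probability $m^{-m(1+o(1))}$, not the $\BigOmega(m^{-m})$ that the theorem (and the paper's proof) delivers; to make $\prod_{i=n-m}^{n}\tfrac1i = \BigOmega(m^{-m})$ you would need $n = m+O(1)$, at which point the instance is nearly trivial (with $n=m+1$ the algorithm selects everything). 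The passage to $\prophet_{m+1}$ via \thmref{equivalence} and the restriction to record-selecting algorithms are both fine and consistent with the paper's reductions; the missing piece is the adversarial dichotomy, not the setup.
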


\begin{proof:e}{\thmref{prob-asymptotic-upper-bound}}{prob-asymptotic-upper-bound:proof}
    To construct an instance in which no algorithm can achieve a high
    probability, fix $m$ and consider $n$ random variables
    $X_1, \dots, X_n$ drawn \iid from the uniform distribution on
    $[0,1]$, where $n$ is a sufficiently large number. We first divide
    $[0,1]$ into $k = \f{n}{m \log{m}}$ intervals $B_1, \dots, B_k$ of
    length $\f{m \log{m}}{n}$ each, with
    $B_i = \big((i-1) \cdot \f{m \log{m}}{n}, i \cdot \f{m \log{m}}{n}
    \big]$. For each $i = 1, \dots, n$, let $Y_i$ denote the random
    variable that is equal to $1$ if $X_i \in B_k$ and $0$ otherwise,
    where $B_k$ is the last interval. Also, let
    $Y = \sum_{i = 1}^n {Y_i}$. Since the $X_i$'s follow the uniform
    distribution, we have $\Pr[Y_i = 1] = \frac{m \log{m}}{n}$ for all
    $i$, and $\Ex{Y} = m \log{m}$.

    Next, consider an algorithm $\cA$ for $\cM(\oracle_m, \iid, \pbm)$
    on this instance, and assume that $Y \geq 1$, i.e. there exists at
    least one realization that falls in the last interval. Consider
    the moment that $\cA$ observes a realization $X_i \in B_k$ that is
    larger than all previous realizations (including previous
    realizations in $B_k$). There are two cases:
    \begin{compactitem}
        \item If $\cA$ decides not to use a query to $\oracle$ for
        this realization and skip it, there is a chance it fails to
        select the highest realization. This definitely happens if no
        other realization in the future is in $B_k$, which occurs with
        probability
        \[
            \pth{1-X_i}^{n-i} \geq \pth{1 - \frac{m \log{m}}{n}}^{n -
               i} \geq \pth{1 - \frac{m \log{m}}{n}}^n \geq e^{-m
               \log{m}-1} = \BigOmega\pth{m^{-\f{m}{1-\eps}}}
        \]
        for sufficiently large $n$, for any $\eps > 0$.

        \smallskip%
        \item If $\cA$ decides to expend a query to $\oracle$ for this
        realization, there is a chance it fails to select the highest
        realization by running out of queries, deciding to select the
        next realization in $B_k$ that is higher than all previous
        ones, and missing out on a higher realization in the
        future. For this to happen, it must be that $Y \geq m+2$. Let
        $\delta = 1 - \f{\left(1+\f{1}{m}\right)}{\log{m}}$. By
        \thmref{chernoff}, this happens with probability
        \begin{align*}
          \Pr[Y > m + 1]
          &=%
            1 - \Pr[Y \leq m+1]
          =%
            1 - \Pr[Y \leq (1-\delta) \Ex{Y}]%
          \geq%
            1 - e^{-\frac{m \log{m} \pth{\log{m} - 1
            - \f{1}{m}}^2}{2 {\log{m}}^2}} \\
          &\geq%
            1 - m^{-\f{m}{4}}.
        \end{align*}
        Given that $Y \geq m+2$, the probability that the first $m+2$
        realizations arrive in increasing order is
        $\f{1}{(m+2)!}$. Therefore, $\cA$ misses out on the maximum
        realization in this case with probability at least (for
        $m\geq 6)$
        \[
            \frac{1 - m^{-\f{m}{4}}}{(m+2)!} \geq m^{-m}.
        \]
    \end{compactitem}
    Therefore, $\cA$ must miss the maximum realization with
    probability at least $\BigOmega \pth{m^{-m}}$.
\end{proof:e}

\BibTexMode{%
   \ConfVer{%
      \bibliographystyle{plainurl}%
   } \NotConfMode{%
      \bibliographystyle{alpha}%
   } \bibliography{prophet_oracle} }%
\BibLatexMode{\printbibliography}

\appendix

\section{Some tools from probability}
\apndlab{app:concentration}

\begin{theorem}[Chernoff's inequality \cite{dp-cmara-09}]
    \thmlab{chernoff} Let $Y_1, \dots, Y_n$ be independent indicator
    random variables with $p_i = \Prob{Y_i = 1}$ and $Y = \sum_i
    Y_i$. Let $\mu = \Ex{Y} = \sum_i p_i$. Then,
    \begin{compactenumI}
        \item For $\delta \geq 0$:
        \begin{math}
            \Prob{Y \geq (1 + \delta) \mu} \leq
            \pth{\frac{e^{\delta}}{(1 + \delta)^{(1+\delta)}}}^{\mu}.
        \end{math}

        \smallskip%
        \item For $\delta \geq 0$:
        \begin{math}
            \Prob{Y \leq (1 - \delta) \mu} \leq
            \pth{{\frac{e^{-\delta}}{(1-\delta)^{(1-\delta)}}}}^{\mu}.
        \end{math}

        \smallskip%
        \item For $\delta\in (0,1]$,
        \begin{math}
            \Prob{Y \geq (1 + \delta) \mu} \leq e^{-\mu
               \f{\delta^2}{3}}.
        \end{math}

        \smallskip%
        \item For $\delta \in (0,1]$
        \begin{math}
            \Prob{Y \leq (1 - \delta) \mu} \leq e^{-\mu
               \f{\delta^2}{2}}.
        \end{math}

        \smallskip%
        \item For $\delta > e^2$,
        \begin{math}
            \Prob{Y \geq (1 + \delta) \mu} < e^{-\frac{\mu \delta
                  \log{\delta}}{2}}.
        \end{math}
    \end{compactenumI}
\end{theorem}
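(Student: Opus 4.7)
The plan is to prove all five bounds via the standard Chernoff method: for the upper tails, apply Markov's inequality to $e^{tY}$ for a free parameter $t > 0$, factor the moment generating function using independence, bound each factor by an exponential, and then optimize over $t$; the lower tail is symmetric, using $e^{-tY}$.

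Concretely, for part (I), I would first write
\[
    \Prob{Y \geq (1+\delta)\mu}
    = \Prob{e^{tY} \geq e^{t(1+\delta)\mu}}
    \leq \frac{\Ex{e^{tY}}}{e^{t(1+\delta)\mu}}
\]
by Markov. By independence,
\[
    \Ex{e^{tY}}
    = \prod_i \Ex{e^{tY_i}}
    = \prod_i \bigl(1 - p_i + p_i e^t\bigr)
    \leq \prod_i e^{p_i(e^t - 1)}
    = e^{\mu(e^t - 1)},
\]
where the inequality uses $1 + x \leq e^x$ with $x = p_i(e^t - 1)$. This yields $\Prob{Y \geq (1+\delta)\mu} \leq \exp\bigl(\mu(e^t - 1) - t(1+\delta)\mu\bigr)$. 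Differentiating the exponent in $t$ and setting it to zero gives the optimal $t = \log(1+\delta)$, which substituted back produces exactly $\prn{e^\delta/(1+\delta)^{1+\delta}}^\mu$. Part (II) is symmetric: apply Markov to $e^{-tY}$, bound $\Ex{e^{-tY}} \leq e^{\mu(e^{-t} - 1)}$, and optimize to get $e^{-t} = 1 - \delta$, which produces the claimed form.

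For parts (III), (IV), and (V), the plan is to reduce to three scalar calculus inequalities on the exponents obtained in (I) and (II). Specifically, it suffices to verify that (a) $(1+\delta)\log(1+\delta) - \delta \geq \delta^2/3$ for $\delta \in (0,1]$, (b) $(1-\delta)\log(1-\delta) + \delta \geq \delta^2/2$ for $\delta \in (0,1]$, and (c) $(1+\delta)\log(1+\delta) - \delta \geq \delta(\log\delta)/2$ for $\delta > e^2$. In each case I set $f$ to be the left-hand side minus the right-hand side, check the boundary value ($f(0) = 0$ for (a) and (b); $f(e^2) > 0$ by direct numerical evaluation for (c)), and verify $f'(\delta) \geq 0$ on the stated interval. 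For (b), $f'(\delta) = -\log(1 - \delta) - \delta$ is non-negative by the Taylor series $-\log(1-\delta) = \sum_{k \geq 1} \delta^k/k$. For (c), $f'(\delta) = \log\bigl((1+\delta)/\sqrt{\delta}\bigr) - 1/2$ is easily checked to be positive at $\delta = e^2$ and increasing for $\delta \geq 1$, since $(1+\delta)/\sqrt{\delta} = \sqrt{\delta} + 1/\sqrt{\delta}$ is increasing.

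The main obstacle, modest as it is, is the calculus inequality (a): the function $f(\delta) = (1+\delta)\log(1+\delta) - \delta - \delta^2/3$ satisfies $f(0) = f'(0) = 0$ but is \emph{not} globally convex on $[0,1]$, since $f''(\delta) = 1/(1+\delta) - 2/3$ changes sign at $\delta = 1/2$. Thus $f'$ is increasing on $[0, 1/2]$ and decreasing on $[1/2, 1]$, so I would conclude $f' \geq 0$ on $[0,1]$ by checking both endpoints of the decreasing piece, namely that $f'(1/2) > 0$ (immediate from monotonicity) and $f'(1) = \log 2 - 2/3 > 0$. All the remaining steps are standard MGF manipulations and routine monotonicity arguments.
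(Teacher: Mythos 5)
Your proposal is correct, but there is nothing in the paper to compare it against: the paper does not prove this theorem at all, it simply quotes it as a standard tool and cites the Dubhashi--Panconesi text \cite{dp-cmara-09}. What you have written is the textbook exponential-moment argument that underlies that reference: Markov applied to $e^{tY}$ (resp.\ $e^{-tY}$), the bound $1-p_i+p_ie^t \leq e^{p_i(e^t-1)}$ to pass to $e^{\mu(e^t-1)}$, and the optimizing choices $t=\log(1+\delta)$ and $e^{-t}=1-\delta$, followed by the three scalar comparisons of exponents. Those comparisons are the only places where care is needed, and you handle them correctly: for $(1+\delta)\log(1+\delta)-\delta \geq \delta^2/3$ you rightly note the loss of convexity at $\delta=1/2$ and close the argument by checking $f'(1)=\log 2 - 2/3>0$ together with the unimodality of $f'$; for the lower tail the Taylor expansion of $-\log(1-\delta)$ suffices; and for part (V) the function $\log\bigl((1+\delta)/\sqrt{\delta}\bigr)-1/2$ is indeed positive and increasing past $\delta=e^2$, which even gives the strict inequality claimed (modulo the degenerate case $\mu=0$, which one tacitly excludes). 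The only cosmetic point worth mentioning is the boundary value $\delta=1$ in parts (II) and (IV), where $(1-\delta)^{1-\delta}$ should be read as $1$ by continuity; your argument covers it once that convention is stated. In short: a complete, self-contained proof of a result the paper treats as a black box.
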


The following is known as Le Cam's theorem, see
\cite{c-atpbd-60,d-h-ptcm-12}.
\begin{theorem}[Le Cam's theorem]
    \thmlab{l:e:cam}%
    Let $X_1, \ldots, X_n$ be independent Bernoulli random variables,
    with $p_i=\Prob{X_i=1}$, for $i \in \IRX{n}$. Let $S = \sum_i X_i$
    and $\lambda = \sum_i p_i$.  Then $S$ has a Poisson binomial
    distribution with expectation ${\lambda}$. Furthermore, let
    \begin{math}
        Y \sim \Pois \lambda.
    \end{math}
    Then we have
    \begin{equation*}
        \sum_{i=0}^n \cardin{\Prob{S =i} - \Prob{Y=i}}
        =%
        \sum_{i=0}^n \cardin{\Prob{S =i} - e^{-\lambda}
           \frac{\lambda^i}{i!} }
        \leq
        2 \sum_{i=1}^n p_i^2.
    \end{equation*}
\end{theorem}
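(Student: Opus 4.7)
The plan is to prove the inequality via the classical coupling argument for Poisson approximation. I would first note that the left-hand side is exactly $2\, d_{\mathrm{TV}}(S, Y)$, twice the total variation distance between the law of $S$ and the law of $Y$. For any coupling $(\hat S, \hat Y)$ of these two distributions, we have $d_{\mathrm{TV}}(S, Y) \leq \Prob{\hat S \neq \hat Y}$, so it suffices to produce a coupling for which the disagreement probability is at most $\sum_i p_i^2$.

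The coupling is built coordinatewise. For each $i \in \IRX{n}$, independently construct a pair $(X_i, \tilde Y_i)$ with marginals $\mathrm{Bernoulli}(p_i)$ and $\Pois(p_i)$, using the maximal coupling that maximizes $\Prob{X_i = \tilde Y_i}$. Since $e^{-p_i} \geq 1 - p_i$ for $p_i \in [0,1]$, one sets $\Prob{X_i = 0,\, \tilde Y_i = 0} = 1 - p_i$ and $\Prob{X_i = 1,\, \tilde Y_i = 1} = p_i e^{-p_i}$, and then allocates the remaining Poisson mass (at values $\tilde Y_i \in \{0\} \cup \{2, 3, \dots\}$) entirely on the slice $\{X_i = 1\}$ so as to preserve both marginals. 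The key per-coordinate estimate is then
\begin{equation*}
    \Prob{X_i \neq \tilde Y_i}
    =
    1 - (1 - p_i) - p_i e^{-p_i}
    =
    p_i\prn{1 - e^{-p_i}}
    \leq
    p_i^2,
\end{equation*}
where the last step uses the elementary inequality $1 - e^{-x} \leq x$ for $x \geq 0$.

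Next, I would invoke infinite divisibility of the Poisson family: since the $\tilde Y_i$ are independent with $\tilde Y_i \sim \Pois(p_i)$, their sum $\tilde Y := \sum_{i=1}^n \tilde Y_i$ is distributed as $\Pois(\lambda)$, hence equal in law to $Y$. Setting $\hat S = \sum_i X_i$ gives a valid coupling of $S$ and $Y$, and by a union bound over the independent pairs,
\begin{equation*}
    \Prob{\hat S \neq \tilde Y}
    \leq
    \sum_{i=1}^n \Prob{X_i \neq \tilde Y_i}
    \leq
    \sum_{i=1}^n p_i^2.
\end{equation*}
Combining this with the identity $\sum_k \cardin{\Prob{S = k} - \Prob{Y = k}} = 2\,d_{\mathrm{TV}}(S, Y) \leq 2\,\Prob{\hat S \neq \tilde Y}$ yields the claimed bound.

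The main technical step to verify carefully is the explicit maximal coupling at each coordinate: one must check that the joint distribution I describe really does have Bernoulli$(p_i)$ and $\Pois(p_i)$ as its marginals, which requires placing the residual Poisson mass $e^{-p_i} - (1 - p_i)$ at $\tilde Y_i = 0$ and the remaining mass at $\tilde Y_i \geq 2$ all on the slice $\{X_i = 1\}$, and confirming the column sums match $\Prob{\tilde Y_i = k}$. Once this bookkeeping is done, the rest is the standard coupling inequality, the union bound, and the inequality $1 - e^{-x} \leq x$, so the only real obstacle is setting up the coupling cleanly.
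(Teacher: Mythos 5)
The paper does not prove this statement at all: Le Cam's theorem is imported as a background tool with citations to the literature, so there is no internal proof to compare your argument against. Your proposal is the standard coupling proof and it is correct. The per-coordinate maximal coupling has the marginals you claim --- the mass placed on the slice $\{X_i = 1\}$ is
\begin{equation*}
    \prn{e^{-p_i} - (1 - p_i)} + p_i e^{-p_i} + \sum_{k \geq 2} e^{-p_i} \frac{p_i^k}{k!} \;=\; p_i ,
\end{equation*}
so both the Bernoulli$(p_i)$ and $\Pois(p_i)$ marginals check out --- the disagreement probability is $p_i\prn{1 - e^{-p_i}} \leq p_i^2$, additivity of independent Poisson variables gives $\tilde Y \sim \Pois(\lambda)$, and the coupling inequality plus the union bound finish the job. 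One small imprecision: the displayed sum in the statement runs only over $i = 0, \dots, n$, whereas $Y$ places mass on all nonnegative integers, so that truncated sum is at most, not exactly, $2\, d_{\mathrm{TV}}(S, Y)$. Since you are proving an upper bound, the inequality goes the right way and nothing breaks, but the claimed equality with $2\, d_{\mathrm{TV}}$ should be stated as an inequality (or the sum extended over all $k \geq 0$, which is how the bound is usually quoted).
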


\section{Discussion on variants of the oracle and adversary choice. }
\apndlab{app:final:discussion}

In this discussion, we will weaken the
adversary from an almighty adversary to an offline adversary. This can
only increase the upper bound. We will also strengthen the oracle to
answer $\geq$ instead of $>$ oracles. Note that the sharding analysis
still holds, and the lower bound of $1-e^{-\xi_m}$ still holds. We
show that
\begin{enumerate}
    \item Amongst single-threshold algorithms, our algorithm is still optimal. That is, no single threshold algorithm can get a competitive ratio $\geq 1-e^{-\xi_m}$.
    \item However, if we allow the algorithm to use thresholds that depend on $i$ in the sequence $X_1, \dots, X_n$, then our single-threshold ceases to be optimal. However, we show that the discrepancy from an optimal algorithm in this scenario, to an optimal single-threshold algorithm is very small\footnote{We observed that the maximum difference between the single-threshold
optimal competitive ratio and the optimal multiple-threshold
competitive ratio is at most $\leq 0.0018$ for all $m$. In fact for $m=1$, the discrepancy is $\leq 0.000000973$! See
\figref{discrepancy}. This very small discrepancy is separately
interesting to address, especially that the discrepancy between
optimal multiple-threshold algorithms and single-threshold algorithms
for other prophet inequalities is usually quite large. For example,
the \textsc{Top-$1$-of-$2$} model has a discrepancy of almost $0.1$ between single threshold and multiple-threshold algorithms.}
\end{enumerate}

\paragraph{Input instance.}
The input is a sequence made out of $n+2$ random variables, for $n$
sufficiently large. Each of these random variables can have only two
values -- either zero or some positive value. Specifically, for
$\eps > 0$ sufficiently small (e.g., $\eps \ll 1/n^{4}$), let
\begin{equation*}
    X_1 = 1,
    \quad%
    \Prob{X_i = 1 + \eps  (i-1)} = \frac{\xi_m}{n},
    \quad\text{for}\quad
    i \in \IRY{2}{n+1},
    \quad
    \text{and}%
    \quad
    \Prob{X_{n+2} = \tfrac{1}{\eps}\Psi } = \eps,
\end{equation*}
where
\begin{math}
    \displaystyle%
    \Psi ={m!}/{\xi_m^m}.
\end{math}
By \lemref{xi:m}, $\xi_m \approx m/e$, as such, the expected number of
non-zero entries in this sequence is (roughly) $m/e + 1$.

\begin{observation}
    We have that $\Psi ={m!}/{\xi_m^m} < 1$ by
    \lemref{upper-bound-seq}.
\end{observation}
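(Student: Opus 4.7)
The plan is very short because this observation is essentially a direct restatement of one side of \lemref{upper-bound-seq}. Recall that \lemref{upper-bound-seq} asserts
\begin{equation*}
    \bigl(m!\bigr)^{1/m} < \xi_m < \bigl((m+1)!\bigr)^{1/(m+1)}
\end{equation*}
for every $m \geq 1$. I would simply invoke the lower bound $\bigl(m!\bigr)^{1/m} < \xi_m$ and raise both sides to the $m$-th power (both are positive since $\xi_m > 0$ by \defref{reg-gamma} and the lower bound is trivially positive), which yields $m! < \xi_m^m$. Dividing by $\xi_m^m > 0$ gives $\Psi = m!/\xi_m^m < 1$, as claimed.

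There is no obstacle here: the only content is monotonicity of $x \mapsto x^m$ on the positive reals, and the nontrivial work of bounding $\xi_m$ from below has already been carried out in the proof of \lemref{upper-bound-seq} (via the AM--GM argument showing $h(\gamma) > 0$ for $\gamma = \sqrt[m]{m!}$). I would present the observation in one sentence, citing \lemref{upper-bound-seq}, and note that this is the precise inequality needed later in the adversary's construction to ensure that the planted ``large'' value $(1/\eps)\Psi$ does not exceed $1/\eps$, so that the analysis of \thmref{tightness-noniid} can be adapted to the stronger $\geq$-oracle against an offline adversary without changing the normalization of $\Ex{\ZZ}$.
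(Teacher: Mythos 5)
Your proof is correct and matches the paper's approach exactly: the paper gives no separate argument and simply cites \lemref{upper-bound-seq}, since $\xi_m > (m!)^{1/m}$ implies $\xi_m^m > m!$ and hence $\Psi = m!/\xi_m^m < 1$. (Your parenthetical about why $\Psi < 1$ matters is slightly off---the appendix does change the normalization to $\Ex{\ZZ} = 1 + \Psi$, and the inequality is actually used to rule out the threshold $\tau = \Psi/\eps$ as suboptimal---but that commentary is outside the content of the observation itself.)
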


\begin{lemma}
    For $\ZZ = \max_i X_i$, we have $\Ex{\ZZ} =1 + \Psi$, where
    $\Psi ={m!}/{\xi_m^m}$.
\end{lemma}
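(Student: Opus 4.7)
The plan is to split the expectation based on the value of $X_{n+2}$, which is the only random variable whose positive realization is large, and which is independent of $X_1, \ldots, X_{n+1}$. Specifically, observe that the only two possible outcomes for $X_{n+2}$ are $\Psi/\eps$ (with probability $\eps$) and $0$ (with probability $1-\eps$). Choosing $\eps \ll 1/n^4$, the positive value $\Psi/\eps$ strictly dominates $\max_{i \in \IRY{1}{n+1}} X_i$, since every positive realization of $X_1, \ldots, X_{n+1}$ is of the form $1 + \eps(i-1)$ with $i \leq n+1$, hence at most $1 + \eps n$.

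Condition on the event $\{X_{n+2} = \Psi/\eps\}$, which occurs with probability $\eps$. In this case $\ZZ = \Psi/\eps$, contributing $\eps \cdot (\Psi/\eps) = \Psi$ to the expectation. Conditioned on the complementary event $\{X_{n+2} = 0\}$, which occurs with probability $1-\eps$, we have $\ZZ = \max_{i \in \IRY{1}{n+1}} X_i$. Because $X_1 = 1$ deterministically, this maximum is at least $1$, and it is at most $1 + \eps n = 1 + o(1)$ since $\eps \ll 1/n^4$. Thus
\[
   \Ex{\ZZ}
   = \Psi + (1-\eps) \cdot \ExCond{\max\nolimits_{i \in \IRY{1}{n+1}} X_i}{X_{n+2} = 0}
   = \Psi + (1-\eps)\bigl(1 + O(\eps n)\bigr)
   \xrightarrow[\eps \to 0]{} 1 + \Psi.
\]

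There is no real obstacle here; the only mild point is verifying that $\Psi/\eps > 1 + \eps n$ for our choice $\eps \ll 1/n^4$, which is immediate since $\Psi$ is a constant depending only on $m$ (and is positive by $\xi_m > 0$), while the right-hand side tends to $1$. The result then reads off as a limiting statement identical in spirit to the analogous lemma for the earlier instance in \thmref{tightness-noniid}, with $\Psi$ replacing the value $1$ in the role of the contribution from $X_{n+2}$.
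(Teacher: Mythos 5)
Your proof is correct and follows essentially the same route as the paper's: condition on whether $X_{n+2}$ realizes its positive value, note $1 \leq \max_{i \in \IRX{n+1}} X_i \leq 1 + \eps n$, and take the limit $\eps \to 0$ to get $\Psi + 1$. The extra check that $\Psi/\eps$ dominates $1+\eps n$ is a harmless addition the paper leaves implicit.
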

\begin{proof}
    Let $\ZZ' = \max_{i\in \IRX{n+1} } X_i$.  Observe that
    $1 = X_1 \leq \ZZ'\leq 1+\eps n$.  Thus,
    $\lim_{\eps \rightarrow 0} \ZZ' = 1$.  As such, for
    $\ZZ = \max(\ZZ', X_{n+2})$, we have
    \begin{math}
        \Ex{\ZZ} =%
        \Ex{\max\nolimits_i X_i}%
        =%
        (\Psi/\eps) \eps + (1 - \eps) \Ex{\ZZ'}
        \xrightarrow[\eps\rightarrow 0]{} \Psi + 1 .
    \end{math}
\end{proof}

\begin{observation}
    Let $\IX_i$ be an indicator variable for the event that
    $X_i \geq 1$. For sufficiently large $n$,
    $\nabla = \sum_{i=2}^{n+1} \IX_i$ has a binomial distribution that
    can be well approximated by a Poisson distribution (see
    \thmref{l:e:cam}) with rate $\xi_m$.  That is,
    \begin{math}
        \displaystyle %
        \lim_{n\rightarrow \infty} \Prob{\bigl.\nabla = k} =%
        e^{-\xi_m} \frac{\prn{\xi_m}^k}{k!}.
    \end{math}
\end{observation}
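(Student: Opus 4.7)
The plan is a direct application of Le Cam's theorem (\thmref{l:e:cam}) after identifying $\nabla$ as a sum of independent Bernoulli variables with a common small success probability. First, I would observe that by the construction of the input instance in the appendix, for each $i \in \IRY{2}{n+1}$ the variable $X_i$ takes the positive value $1 + \eps(i-1) \geq 1$ with probability $\xi_m/n$, and equals $0$ otherwise. Hence $\IX_i$ is a Bernoulli random variable with parameter $p_i = \xi_m / n$, and the $\IX_i$'s are mutually independent (since the $X_i$'s are). Consequently $\nabla = \sum_{i=2}^{n+1} \IX_i$ is a Binomial$(n, \xi_m/n)$ random variable, with mean $\lambda = n \cdot \xi_m/n = \xi_m$.

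Next, I would apply Le Cam's theorem with these parameters. Setting $Y \sim \Pois(\xi_m)$, the theorem gives
\begin{equation*}
    \sum_{k=0}^{n} \Bigl| \Prob{\nabla = k} - e^{-\xi_m} \frac{\xi_m^k}{k!} \Bigr|
    \;\leq\;
    2 \sum_{i=2}^{n+1} p_i^2
    \;=\; 2 n \cdot \pth{\frac{\xi_m}{n}}^2
    \;=\; \frac{2 \xi_m^2}{n}.
\end{equation*}
Since $\xi_m$ is a fixed constant depending only on $m$, the right-hand side tends to $0$ as $n \to \infty$. In particular, for each fixed $k \geq 0$, the individual term $\bigl| \Prob{\nabla = k} - e^{-\xi_m} \xi_m^k / k! \bigr|$ is dominated by the total variation bound and thus also tends to $0$. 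This yields the claimed limit
\begin{equation*}
    \lim_{n \to \infty} \Prob{\nabla = k} \;=\; e^{-\xi_m} \frac{\xi_m^k}{k!}.
\end{equation*}

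There is no real obstacle here: the statement is essentially the classical Poisson approximation to the binomial, and the only bookkeeping needed is verifying that $p_i = \xi_m/n$ from the construction and that $\sum_i p_i^2 \to 0$. This is the same argument used earlier in the paper's Poissonization analysis (\secref{sharding}), applied in its simplest form with identical success probabilities, so I do not expect any technical difficulty beyond stating the identification and invoking \thmref{l:e:cam}.
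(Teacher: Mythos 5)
Your proposal is correct and is exactly the argument the paper intends: the observation is stated without an explicit proof beyond the parenthetical pointer to Le Cam's theorem (\thmref{l:e:cam}), and your identification of $\nabla$ as Binomial$(n,\xi_m/n)$ with total-variation error $2\xi_m^2/n \to 0$ is the standard instantiation of that theorem. Nothing further is needed.
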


Observe that
$\lim_{n\rightarrow \infty} \Prob{\nabla \leq k } = \sum_{i=0}^k
e^{-\xi_m} \frac{\prn{\xi_m}^i}{i!} = \QY{k+1}{\xi_m}$.  For
simplicity of exposition, we are going to pretend that
$\Prob{\nabla \leq k } = \QY{k+1}{\xi_m}$.

\begin{theorem}
    \thmlab{tightness-noniid:apdx}
    Consider any choice of $m \geq 1$, and $\delta > 0$, and the above
    input instance $\I$ formed by a sequence of \noniid random
    variables. Then, for all \textbf{single-threshold algorithm}
    $\cA \in \cM(\oracle_m)$ for $\I$, we have
    $\roe(\cA) \leq 1 - e^{-\xi_m} + \delta$ and
    $\pbm(\cA) \leq 1 - e^{-\xi_m} + \delta$.
\end{theorem}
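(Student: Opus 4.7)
The proof plan follows the structure of \thmref{tightness-noniid} but adapted to the three changes made in the appendix: distinct (rather than tied) realizations on $X_2, \dots, X_{n+1}$, the strengthened $\geq$-oracle, and the weakened offline adversary; additionally, we restrict to single-threshold algorithms. I would carry out a case analysis on the threshold $\tau$ of the single-threshold algorithm $\cA$, with the offline adversary committed to the identity permutation $\sigma(i) = i$. Because the non-zero values $1+(i-1)\eps$ for $i \in \IRY{2}{n+1}$ are mutually distinct, and differ from the distinguished values $X_1 = 1$ and $X_{n+2} = \Psi/\eps$, the $\geq$-oracle and $>$-oracle agree on every query we need to analyze. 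Let $\nabla$ denote the number of non-zero realizations in $\IRY{2}{n+1}$; by Le Cam's theorem (\thmref{l:e:cam}), $\nabla$ is within total variation $o_n(1)$ of $\Pois(\xi_m)$, so in the $n \to \infty$ limit we may take $\Prob{\nabla \leq k} = \QY{k+1}{\xi_m}$ and absorb the slack into $\delta$.

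First, the thresholds partition into three regimes. Regime (c), $\tau > 1+n\eps$: $\cA$ only ever considers $X_{n+2}$, giving $\roe(\cA) \leq \Psi/(1+\Psi)$; since $\Psi = m!/\xi_m^m = \BigO(\sqrt{m})$ by Stirling's formula combined with \lemref{upper-bound-seq}, this is strictly below $1-e^{-\xi_m}$ for all $m$. Regime (b), $\tau \in [1, 1+n\eps]$: $\cA$ skips $X_1$ but queries every non-zero of $\IRY{2}{n+1}$ exceeding $\tau$; since $\eps \ll 1/n^4$, this behaves as $\tau = 1$ in the $\eps \to 0$ limit. Regime (a), $\tau < 1$: $\cA$ queries $X_1$ first.

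For regimes (a) and (b), I would decompose the sample space according to whether $\nabla = 0$ or $\nabla \geq 1$, and whether $X_{n+2} = 0$ or $X_{n+2} \neq 0$. The key structural observation under the identity permutation is that the non-zero values of $X_2, \dots, X_{n+1}$ arrive in strictly increasing order of value, so a query on any such realization returns \YES iff it is the last non-zero in $\IRY{2}{n+1}$ and $X_{n+2} = 0$. With this, one tracks in each of the four events how many of $\cA$'s $m$ queries are consumed and which value $\cA$ ends up selecting. For example, in case (a) with $\nabla \leq m-1$ and $X_{n+2} \neq 0$, $\cA$ burns its first query on $X_1$ and the next $\nabla$ on the non-zeros (all \NO), then reaches and accepts $X_{n+2} = \Psi/\eps$; with $\nabla \geq m$, $\cA$ exhausts its queries before $X_{n+2}$ and settles for the $m$-th non-zero, of value $\approx 1$. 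Summing rewards and taking $\eps \to 0$, a short algebraic simplification using $\QY{m}{\xi_m} = \QY{m+1}{\xi_m} - e^{-\xi_m}/\Psi = (1-e^{-\xi_m}) - e^{-\xi_m}/\Psi$ collapses the case (a) expected reward to $(1-e^{-\xi_m})(1+\Psi)$, and an analogous bookkeeping gives the same for case (b); dividing by $\Ex{\ZZ} = 1+\Psi$ yields $\roe(\cA) \leq 1 - e^{-\xi_m} + \delta$. For $\pbm$, isolating only the events in which $\cA$ lands exactly on the realized maximum (rather than some other value of comparable magnitude) gives probability $\QY{m}{\xi_m}$ in case (a) and $1-2e^{-\xi_m}$ in case (b), both below $1-e^{-\xi_m}$.

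The principal obstacle is verifying that the identity permutation is in fact the offline adversary's optimal choice against every single-threshold $\cA$: other orderings (notably reverse order) actively help $\cA$ by letting a single oracle call pinpoint the maximum of the non-zeros in $\IRY{2}{n+1}$. I plan to handle this with a short dominance argument: any permutation that does not present the non-zeros of $\IRY{2}{n+1}$ in value-increasing order only produces earlier \YES answers and fewer wasted queries, so it weakly increases $\cA$'s reward, leaving the identity permutation as the adversary's extremal choice.
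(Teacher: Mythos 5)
Your overall plan---enumerate thresholds, pass to the Poisson/Le Cam limit as $n\to\infty$ and $\eps\to 0$, and do the algebra with $\QY{\cdot}{\xi_m}$---matches the paper's proof of \thmref{tightness-noniid:apdx}. In fact you go slightly further than the paper in one respect: you explicitly include regime (a), $\tau < 1$, in which $\cA$ expends a query on $X_1$ before the stream of near-$1$ values; the paper's enumeration ($\tau=1$, $\tau=1+(i-1)\eps$, $\tau=\Psi/\eps$) silently omits this possibility. Happily the computation there gives the same $(1-e^{-\xi_m})(1+\Psi)$, so the paper's omission is harmless, but your version is tidier. Two points where your sketch is off or over-complicated:

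First, your collapse of regime (b), $\tau\in[1,1+n\eps]$, to ``effectively $\tau=1$'' is not justified by ``$\eps\ll 1/n^4$.'' Taking $\tau = 1+(i-1)\eps$ for a fixed $i>2$ discards the first $i-1$ near-$1$ realizations, so the effective Poisson rate becomes $\beta = (n-i+2)\xi_m/n < \xi_m$, and this does not change as $\eps\to 0$: for fixed $i$ the set of realizations exceeding $\tau$ is the same at every $\eps$. The paper treats all these thresholds by observing (via \lemref{silly:d}) that $f(\beta) = \QY{m+1}{\beta}\Psi - e^{-\beta}$ is increasing on $[0,\xi_m]$, so the worst case for the adversary is $\beta=\xi_m$, i.e.\ $\tau\to 1^{+}$. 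You would need this (or an equivalent) monotonicity step; it is not a limit-of-$\eps$ phenomenon.

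Second, your ``principal obstacle'' is not an obstacle. To upper-bound $\roe(\cA)$ you only need to exhibit one arrival order under which every single-threshold $\cA$ performs poorly; you do not need to prove the identity permutation is the adversary's optimum. The dominance argument you propose is therefore unnecessary, and the paper indeed never discusses adversary optimality in this proof.

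Finally, a remark in your favor: the paper's proof carries out only the $\roe$ calculation and never argues the claimed $\pbm$ bound, so your attempt to isolate the exact-maximum events is a useful supplement, even though your stated constants ($\QY{m}{\xi_m}$ for regime (a) and $1-2e^{-\xi_m}$ for regime (b)) slightly undercount the true limiting success probabilities ($\QY{m+1}{\xi_m}=1-e^{-\xi_m}$ and $\QY{m+2}{\xi_m}-e^{-\xi_m}$, respectively); the conclusion that both stay below $1-e^{-\xi_m}+\delta$ is still correct, the latter via $\xi_m^{m+1}<(m+1)!$ from \lemref{upper-bound-seq}.
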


\begin{proof}
    Consider all the distinct values that might appear in $\I$ --
    there are $n+2$ such values. Thus, there are only $n+2$
    single-threshold algorithms we need to consider (corresponding to
    each of these values). If the threshold is $\tau = \Psi/\eps$,
    then the algorithm gets on expectation $\Psi < 1$, which is
    clearly suboptimal compared to accepting $X_1$ immediately.

    Next, consider the threshold $1$. If there are at most $m-1$
    non-zero values in $X_2, \ldots, X_{n+1}$, then the algorithm
    continues to $X_{n+2}$ and the algorithm gets at most
    $\max(X_{n+2}, 1+n\eps)$. If there are at least $m$ non-zero
    values in $X_2, \ldots, X_{n+1}$, then the algorithm gets at most
    $1+n\epsilon$. Hence, for
    \begin{equation}
        \rho%
        =%
        \Psi \frac{1}{\eps} \cdot \eps +
        (1-\eps)(1+n\eps)
        \xrightarrow{\eps \to 0}
        1 + \Psi,
        \eqlab{rho}
    \end{equation}
    we have
    \begin{align*}
      \Ex{\cA}
      &\leq%
        \Prob{\nabla  \leq  m-1 }
        \pth{
        \Psi \frac{1}{\eps} \cdot \eps +
        (1-\eps)(1+n\eps)
        }
        +
        \Prob{\nabla  \geq  m }
        \pth{1+n\eps}
        \smash{\underbrace{\leq}_{\eps \rightarrow 0 }}
        \QY{m}{\xi_m}
        \Psi  + 1 .
    \end{align*}
    Since $\QY{m+1}{\xi_m}=1-e^{-\xi_m}$, we have
    \begin{equation*}
        \QY{m}{\xi_m} \Psi
        =%
        \QY{m}{\xi_m} \frac{m!}{\xi_m^m}
        =
        \Bigl( \QY{m+1}{\xi_m}-e^{-\xi_m}\frac{\xi^m}{m!} \Bigr)
        \cdot
        \frac{m!}{\xi_m^m}
        =%
        (1-e^{-\xi_m})\Psi - e^{-\xi_m}.
    \end{equation*}

    Thus, we have (as $\eps \to 0$) that
    \begin{equation*}
        \Ex{\cA }
        \leq%
        {1-e^{-\xi_m} +  (1-e^{-\xi_m})\Psi}
        =%
        (1-e^{-\xi_m})\pth{1+\Psi}.
    \end{equation*}

    Next, consider a threshold of $1+ (i-1)\eps$. Here, the algorithm
    is ``activated'' at $X_i$, for $ i\in \IRY{2}{n+1}$. Let
    \begin{equation*}
        \beta = (n-i+2)\frac{\xi_m}{n}
    \end{equation*}
    denote the Poisson rate for $X_i, \ldots, X_{n+1}$, and let
    $\nabla_i = \sum_{j=i}^{n+1} X_j$. Clearly $\beta\leq \xi_m$. If
    $X_i, \ldots, X_{n+1}$ are all zeros, then the algorithm gets the
    expectation of $X_{n+2}$. If there are at least one, and at most
    $m$ non-zero values in $X_i, \ldots, X_{n+1}$, then the algorithm
    continues to $X_{n+2}$ and the algorithm gets at most
    $\Ex{\max(X_{n+2}, 1+n\eps)} \leq \rho = 1+\Psi$, see \Eqref{rho}.
    If there are at least $m+1$ non-zero values in
    $X_i, \ldots, X_{n+1}$, then the algorithm gets at most
    $1+n\epsilon$. Hence, on expectation, the algorithm gets
    \begin{align*}
      \Ex{\cA}
      &\leq%
        \Prob{\nabla_i = 0} \Ex{X_{n+2}}
        +
        \Prob{\bigl.\nabla_i \in \IRY{1}{m}}\rho
        +
        \Prob{\bigl.\nabla_i > m} (1+n \eps)
      \\&%
      \leq
      \QY{1}{\beta} \Psi +
      \pth{\QY{m+1}{\beta} - \QY{1}{\beta}} ( 1+\Psi)
      +
      (1-\QY{m+1}{\beta}) \cdot \pth{1+n\eps}
      \\
      &\xrightarrow[\eps\rightarrow 0]{}
        \QY{1}{\beta} \Psi +
        \QY{m+1}{\beta} +
        \QY{m+1}{\beta} \Psi
        - \QY{1}{\beta}   - \QY{1}{\beta}\Psi
        +
        1-\QY{m+1}{\beta}
      \\&%
      =%
      1 + \QY{m+1}{\beta} \Psi - \QY{1}{\beta}
      =%
      1 + \QY{m+1}{\beta} \Psi - e^{-\beta}.
    \end{align*}
    Let
    \begin{math}
        f(x) = \QY{m+1}{x} \Psi - e^{-x}.
    \end{math}
    By \lemref{silly:d}, we have
    \begin{math}
        f'(x) = -e^{-x} \frac{x^m}{m!} \Psi + e^{-x} =%
        -e^{-x} \frac{x^m}{\xi_m^m} + e^{-x}.
    \end{math}
    As such, $f'(x) >0$ for $x \in [0,\xi_m)$. Namely, $f$ is
    increasing in this range, and this range contains the value
    $\beta$.  Since $\QY{m+1}{\xi_m}=1-e^{-\xi_m}$, we have
    \begin{equation*}
        \Ex{\cA}
        \leq
        1 + f(\beta)
        \leq
        1 + f(\xi_m)
        \leq
        1 + (1-e^{-\xi_m}) \Psi - e^{-\xi_m}
        =
        (1- e^{-\xi_m})(1 + \Psi).
       \end{equation*}

    This implies, that for all cases, the competitive ratio is
    \begin{equation*}
        \frac{ \Ex{\cA} }{\Ex{Z}}
        \leq%
        \frac{1-e^{-\xi_m} +  (1-e^{-\xi_m})\Psi}{1+\Psi}
        =%
        \frac{\pth{1-e^{-\xi_m}}\pth{1+\Psi}}{1+\Psi}
        =%
        1-e^{-\xi_m}.
    \end{equation*}
\end{proof}

\subsection{On the optimal multiple-threshold algorithms}

Intuitively, a better strategy than single-threshold, is using
different thresholds (potentially based on the values seen so far).

\paragraph{New Input Instance}
The input is a sequence made out of $n+2$ random variables, for $n$
sufficiently large. Each of these random variables can have only two
values -- either zero or some positive value. Specifically, for
$\eps > 0$ sufficiently small (e.g., $\eps \ll 1/n^{4}$), let
\begin{equation*}
    X_1 = 1,
    \quad%
    \Prob{X_i = 1 + \eps  (i-1)} = \frac{c_{1,m}}{n},
    \quad\text{for}\quad
    i \in \IRY{2}{n+1},
    \quad
    \text{and}%
    \quad
    \Prob{X_{n+2} = \tfrac{c_{2,m}}{\eps} } = \eps,
\end{equation*}
\paragraph{Optimal algorithm.}

For a fixed instance of the prophet inequality problem, one can
usually argue about the optimal algorithm for the instance using
reverse dynamic programming. The argument is standard, but we include
it here for the sake of completeness. Let
$b_i=1+\eps(i-1)$, for $i\in \IRY{1}{n+1}$, and
$b_{n+2}=c_{2,m}/\eps$.  Note that $b_1\leq \cdots \leq b_{n+2}$.
Similarly, let $p_1=1$, $p_i = c_{1,m}/n$ for $i\in \IRY{2}{n+1}$, and
finally $p_{n+2} = \eps$. Now, the input is the sequence of random
variables $X_1,\ldots, X_{n+2}$, with
\begin{equation*}
    \Prob{X_i = b_i } = p_i,
    \quad\text{for}\quad
    i=1,\ldots, n+2.
\end{equation*}
Let $Z_i = \max(X_i, \ldots., X_n)$. Let $E_t(k)$ be the expected
value of an \textit{optimal algorithm} running on $X_k,\ldots,X_{n}$
having access to $t$ oracle calls. Let $E_t^{\uparrow}(k)$ be the
expected value of an \textit{optimal algorithm} running on
$X_k,\ldots, X_{n}$ with $t$ oracle calls, given that $Z_k>0$. We have
the (mutual) recurrence
\begin{align*}
  E_t(k)
  =\,\,
  & \Pr[X_k=0]E_t(k+1) \\&
  +\,%
  \Pr[X_k>0]
  \begin{cases}
    \max
    \begin{cases}
      E_t(k+1)\\
      \Pr[Z_{k+1}=0]b_k + \Pr[Z_{k+1}>0]E_{t-1}^{\uparrow}(k+1)
    \end{cases}
    & t>0 \\
    \max \left( E_t(k+1), b_k \right)
    & t=0.
  \end{cases}
\end{align*}
Let
\begin{equation*}
    \alpha_k
    =%
    \ProbCond{X_k>0}{Z_k>0}
    =%
    \frac{\Prob{(X_k>0) \cap (Z_k >0) }}
    {\Prob{Z_k > 0 }}
    =%
    \frac{\Prob{X_k>0  }}
    {\Prob{Z_k > 0 }}.
\end{equation*}
Consider
\begin{equation*}
    \beta_{k+1}
    =
    \ProbCond{Z_{k+1}>0}{(X_k > 0)  \cap (Z_k > 0 )}
    =%
    \ProbCond{Z_{k+1}>0}{X_k > 0}
    =%
    \Prob{Z_{k+1}>0}.
\end{equation*}
We now have
\[
    E_t^{\uparrow}(k)
    =%
    (1-\alpha_k) E_t^\uparrow(k+1)
    +
    \begin{cases}
      \alpha_k \max
      \begin{cases}
        E_t(k+1),\\
        (1-\beta_{k+1})b_k
        +
        \beta_{k+1}E_{t-1}^{\uparrow}(k+1)
      \end{cases}
      & t>0
      \\
      \alpha_k \max \left( E_t(k+1), b_k\right)
      &
        t=0
    \end{cases}
\]
It is straightforward to argue by reverse induction on $k$ that
$E_t(k)$ is the best any algorithm can get on expectation from
$X_k, \ldots, X_{n+2}$ using $t$ oracle queries, as the recurrence
includes all possible outcomes of these queries. Also note that the
recurrence can easily be evaluated in $O(nm)$ time.

\paragraph{How far off is the optimal single-threshold algorithm} It
is natural to ask how far our optimal single-threshold algorithm is
from the optimal multiple-threshold algorithm above. We set $n=100000$
and $\eps=10^{-18}$, and performed experiments on $m=1, \ldots, 11$.
We observed that the maximum difference between the single-threshold
optimal competitive ratio and the optimal multiple-threshold
competitive ratio is at most $\leq 0.0018$. See
\figref{discrepancy}. This very small discrepancy is separately
interesting to address, especially that the discrepancy between
optimal multiple-threshold algorithms and single-threshold algorithms
for other prophet inequalities is usually quite large. For example,
the \textsc{Top-$1$-of-$2$} model has a discrepancy of almost $0.1$ between single threshold and multiple-threshold algorithms.

\begin{table}
    \centering
    \begin{tabular}{|c|c|c|c|c|c|}
    \hline
    \hline
      $m$ & $c_{1,m}$ & $c_{2,m}$  & OPT Competitive Ratio
      & $1-e^{-\xi_m}$ & \textbf{Difference}\\
        \hline
        \hline
1  &  1.146  &  0.872  &  0.682  &  0.682  &  0.000000973 \\
\hline
2  &  1.685  &  0.779  &  0.792  &  0.792  &  0.000689 \\
\hline
3  &  2.054  &  0.808  &  0.863  &  0.861  &  0.00178 \\
\hline
4  &  3.250  &  0.682 &  0.909  &  0.907  &  0.00170 \\
\hline
5  &  3.696  &  0.651  &  0.939  &  0.937  &  0.00170 \\
\hline
6  &  3.826  &  0.628  &  0.959  &  0.958  &  0.00154 \\
\hline
7  &  4.330  &  0.612  &  0.973  &  0.971  &  0.00131 \\
\hline
8  &  4.195  &  0.682  &  0.982  &  0.980  &  0.00113 \\
\hline
9  &  5.234  &  0.580  &  0.988  &  0.987  &  0.000846 \\
\hline
10  &  5.854  &  0.571  &  0.992  &  0.991  &  0.000656 \\
\hline
11  &  6.131  &  0.563  &  0.994  &  0.994  &  0.000500 \\
\hline
    \end{tabular}
    \caption{Maximum discrepancy between single-threshold algorithm and multiple-threshold algorithm for $m=1$ to $m=11$. Rounded to $3$ significant digits. }
    \figlab{discrepancy}
\end{table}

\InsertAppendixOfProofs

\end{document}